\newcommand*{\citenst}[2][]{%
 \begingroup
 \let\NAT@mbox=\mbox
 \let\@cite\NAT@citenum
 \let\NAT@space\NAT@spacechar
 \let\NAT@super@kern\relax
 \renewcommand\NAT@open{[}%
 \renewcommand\NAT@close{]}%
 \citet[#1]{#2}%
 \endgroup
}
\definecolor{darkred}{RGB}{100,0,0}
\definecolor{darkgreen}{RGB}{0,100,0}
\definecolor{darkblue}{RGB}{0,0,150}
\newtheorem{thm}{Theorem}
\newtheorem{prp}{Proposition}
\newtheorem{lem}{Lemma}
\newtheorem{ass}{Condition}
\theoremstyle{remark}
\newtheorem{rem}{Remark}
\def\beq{\begin{equation}} 
\def\eeq{\end{equation}}
\def\beqn{\begin{eqnarray*}}
\def\eeqn{\end{eqnarray*}}
\def\Bitem{\begin{itemize}\setlength{\itemsep}{.2in}}
\def\bitem{\begin{itemize}\setlength{\itemsep}{.05in}}
\def\eitem{\end{itemize}}
\def\Benum{\begin{enumerate}\setlength{\itemsep}{.2in}}
\def\benum{\begin{enumerate}\setlength{\itemsep}{.05in}}
\def\eenum{\end{enumerate}}
\def\bmult{\begin{multline*}}
\def\emult{\end{multline*}}
\def\bcenter{\begin{center}}
\def\ecenter{\end{center}}
\def\bframe{\begin{frame}}
\def\eframe{\end{frame}}
\def\cF{\mathcal{F}}
\def\cK{\mathcal{K}}
\def\cP{\mathcal{P}}
\def\be{\mathbf{e}}
\def\bo{\mathbf{o}}
\newcommand{\bbeta}{{\boldsymbol\beta}}
\def\bbA{\mathbb{A}}
\def\bbR{\mathbb{R}}
\newcommand{\E}{\operatorname{\mathbb{E}}}
\renewcommand{\P}{\operatorname{\mathbb{P}}}
\newcommand{\Var}{\operatorname{Var}}
\def\1{\mathbbm{1}}
\def\xo{X_o}
\def\xi{X_I}
\def\xe{X_e}
\def\ao{\alpha_o}
\def\ai{\alpha_I}
\def\bo{\beta_o}
\def\be{\beta_e}
\def\xu{{X_u}}
\def\xoi{X_{oi}}
\def\xii{X_{Ii}}
\def\xei{X_{ei}}
\def\bbeta{\beta}
\def\szero{s^{(0)}}
\def\sone{s^{(1)}}
\def\Nit{N_i(t)}
\def\Yit{Y_i(t)}
\def\Mit{M_i(t)}
\def\rit{r_i(t)}
\def\Gt{G(t)}
\def\hGt{\hat{G}(t)}
\def\Gtit{G(T^*_i \wedge t)}
\def\hGtit{\hat{G}(T^*_i \wedge t)}
\def\hwit{\hat{w}_i(t)}
\def\twit{\tilde{w}_i(t)}
\def\Mi1t{M_i^1(t)}
\def\sMi1t{M_i^{1*}(t)}
\def\tX{\tilde{X}}
\definecolor{purple}{rgb}{0.4,.1,.9}
\newcommand{\eqn}{\begin{eqnarray}}
\newcommand{\ee}{\end{eqnarray}}
\newcommand{\eqnn}{\begin{eqnarray*}}
\newcommand{\een}{\end{eqnarray*}}
\title{\textbf{Two-Stage Residual Inclusion for Survival Data and Competing Risks - An Instrumental Variable Approach with Application to SEER-Medicare Linked Data}}
\author[1]{Andrew Ying}
\author[1,2]{Ronghui Xu}
\author[3]{James Murphy}
\affil[1]{Department of Mathematics}
\affil[2]{Department of Family Medicine and Public Health}
\affil[3]{Department of Radiation Medicine and Applied Sciences\\ University of California, San Diego, CA 92093, U.S.A}
\providecommand{\keywords}[1]{\textbf{\textit{Keywords:}} #1}
\date{}
\begin{document}
\maketitle

\begin{abstract}

Instrumental variable is an essential tool for addressing unmeasured confounding in observational studies. Two stage predictor substitution (2SPS) estimator and two stage residual inclusion(2SRI) are two commonly used approaches in applying instrumental variables. Recently 2SPS was studied under the additive hazards model in the presence of competing risks of time-to-events data, where linearity was assumed for the relationship between the treatment and the instrument variable. This assumption may not be the most appropriate when we have binary treatments. In this paper, we consider the 2SRI estimator under the additive hazards model for general survival data and in the presence of competing risks, which allows generalized linear models for the relation between the treatment and the instrumental variable. We derive the asymptotic properties including a closed-form asymptotic variance estimate for the 2SRI estimator. We carry out numerical studies in finite samples, and apply our methodology to  the linked Surveillance, Epidemiology and End Results (SEER) - Medicare database comparing radical prostatectomy versus conservative treatment in early-stage prostate cancer patients.

\end{abstract}

\keywords{Additive hazards model; Asymptotic variance; Causal inference; Endogeneity; Unobserved confounding.} 

\section{Introduction}

Interpreting the causal meaning of a treatment or exposure is straightforward under the randomized trials, because the randomization guarantees that there are no confounders for the exposure or treatment of interest. 
However, randomized experiment is not always feasible in practice. 
In observational studies issues such as endogeneity or potential confounding will arise. 
 Instrumental variable (IV) is a useful method in some of these settings \citep{angrist1996identification}, including when we may have unmeasured confounders. 
 It finds increasing application in research on health care practices (see for example
 \citet{hogan:lanc, stukel2007analysis}) including comparative effectiveness studies, and also in genetic studies where certain genes are used as IV for Mendelian randomization
\cite{glymour2012credible, lawlor2008mendelian, zhang2015assessing}. 
 
Figure \ref{fig:causal} illustrates a typical setting where IV methods can be applied. 
It is called a causal directed acyclic graph, where the nodes represent random variables, arrows represent direct causal effect, in such a way so that the common cause of any two nodes is included in the graph. In Figure \ref{fig:causal} $\xe$ is the exposure we are interested in testing the causal effect on the survival time outcome $T$, and we use a dashed line to represent the uncertainty of causation. In addition, $\xo$ is the observed confounder, $\xu$ the unobserved confounder. $\xi$ is the instrument variable which has to satisfy the following three conditions:
1) $\xi$ is associated with $\xe$,
2) $\xi$ doesn't affect $T$ except through its potential effect on $\xe$,
and 3) $\xi$ and $\xu$ do not share causes.

There are two commonly used IV approaches: two-stage predictor substitution (2SPS), and two-stage residual inclusion (2SRI). 
For survival outcomes \citet{tchetgen2015instrumental} 
considered these two methods under the additive hazards model, and gave conditions under which the causal parameters of interest can be correctly estimated. 
We note that while the Cox proportional hazards model 
has been more widely used in practice for survival data, an important appeal of additive hazards models is that unlike proportional hazards, a hazards difference is a collapsible effect measure.\cite{tchetgen2015instrumental, mart:vans}
This means that when there is an unobserved (exogenous) covariate for the survival outcome, integrating out this covariate under the additive hazards model still gives a model satisfying the additive hazards assumption. But this has long been known not to be the case for the proportional hazards model.\cite{lanc:nick, gail:etal, stru:kalb, bret:hub, ande:flem, ford:etal, xu:oq}
In 2SPS under the additive hazards model the exposure is substituted by its prediction given the IV, and included in the model as a covariate, together with possibly other observed confounders. \citet{li2015instrumental} investigated its large sample properties under the additive hazards model. \citet{zheng2017instrumental} considered 2SPS under competing risks. 

The 2SRI is different from 2SPS in that instead of the predicted exposure, the residual from regressing the exposure on the IV(i.e.~first stage) is included as an extra term in the additive hazards model, in addition to the exposure and possibly other observed confounders. This was considered more suitable for binary or discrete exposures \cite{terza2008two, tchetgen2015instrumental}, when generalized linear models (GLM) for example are used in the first stage. Note that linear regression is used in the first stage of 2SPS, and therefore it is more suitable for continuous exposures. 
Very recently \citet{jiang:etal:2018} studied  the 2SRI estimator with linear regression in the first stage for survival data. 
Our goal in this paper is to study the 2SRI estimator with GLM in the first stage, and also to develop the methodology for competing risks data. 

This work was motivated by the desire to conduct comparative effectiveness research in large observational databases. In the field of oncology we lack gold standard randomized clinical trials in many clinical scenarios to optimally inform clinical decision making. With the lack of randomized trials investigators turn to comparative effectiveness with large observational data sets such as the linked (Surveillance, Epidemiology, and End Results) SEER-Medicare database. These databases include information on the specifics of cancer, staging, treatment, patient comorbidity, as well as information on long-term outcomes including toxicity and survival. Despite this wealth of information, these databases do not contain information on unmeasured confounding factors such as patient weight, smoking status, diet, exercise, patient compliance with treatment, and patient performance status. These unmeasured confounders can substantially influence outcomes (in particular survival), adding bias to comparative effectiveness research using observational data. 
Recently \citet{hadley2010comparative} compared aggressive (radical prostatectomy) versus conservative treatments of prostate cancer with SEER-Medicare linked data. The question of radical prostatectomy versus conservative treatment has been addressed in randomized clinical trials which demonstrate no clear survival advantage for either treatment approach. \cite{bill2005radical, hamdy201610}Hadley found IV to be a useful technique as compared to for example propensity scores in adjustment for confounding in such data. Proportional hazards model was used in their analysis which, as pointed out by \citet{li2015instrumental} as well as explained above, due to the noncollapsibility is not suitable for the two-stage approaches. 
 We would like to instead consider the 
additive hazards model for the reasons given earlier. Since the treatment choices are binary, we would like to use the 2SRI estimator, both for overall survival, and for cancer specific mortality. For these purposes we need to develop the inference procedure under the model for both general survival data and under competing risks. 

The rest of the paper is organized as follows. In section \ref{sec:survival} we describe the assumptions needed for the 2SRI approach under the additive hazards model for general survival data with right censoring, and we study the asymptotic behavior including consistency and asymptotic normality of the 2SRI estimator. Following that, we extend the results to competing risks data in Section \ref{sec:competing} under subdistribution hazard modeling. For both settings we provide a closed-form variance estimate of the 2SRI estimator. Section \ref{sec:simulation} contains finite sample simulation results, and Section \ref{sec:realdata} the analysis of the SEER-Medicare data. Section \ref{sec:discussion} contains some further discussion. All technical details are provided in the Appendix.

\section{Additive hazards model for survival data}\label{sec:survival}



In the presence of possible right censoring, let $T$ and $C$ be the failure time and the censoring time random variables, respectively. We can only observe $T^* = \min(T, C)$ and $\delta = \mathbbm{1}_{\{T \le C\}}$. Similar to the setting in Figure \ref{fig:causal}, denote $\xe$ as the exposure variable, whose causal effect is of primary interest, $\xi$ as the IV, and $\xo$ as the (vector of) observed confounders. 
Our observed data for each individual is $\{T^*_i, \delta_i, \xei, \xoi, \xii\}$ ($i=1, ..., n$), which we assume are independent and identically distributed. In this section, we will assume that $T$ and $C$ are independent conditional on $\xe$, $\xi$ and $\xo$. 
Under the additive hazards model \cite[]{aalen1980model, aalen1989linear, lin1994semiparametric}, the hazard function of $T$ 
given $\xe, \xi, \xo$ and the unobserved confounders 
 is assumed to be in the form
\eqn \label{eq:survivalhazards}
\lambda(t|\xe,\xi,\xo,\xu) = \lambda_0(t) + \be\xe + \bo^\top \xo + \xu,
\ee
where $\xu$ is a function of the unobserved confounders. We assume that $\xu$ is independent of $\xo$ and $\xi$. 
Denote
\eqn \label{eq:first}
\Delta = \xe - \E(\xe|\xi, \xo).
\ee
Following \citet{tchetgen2015instrumental} we put a key assumption on $\xu$:
\eqn\label{eq:errormodel}
\xu = \rho_0 \Delta + \epsilon.
\ee
where 
$\epsilon$ is an error term independent of $\xe$, $\xi$ and $\xo$. 
Proposition \ref{prp:survivalcausal} in the Appendix shows that integrating out $\xu$ we have
\eqn\label{eq:truesurvivalmodel}
\lambda(t| \xe,\xi,\xo) = \bar{\lambda}_0(t) + {\beta}_e\xe + \boldsymbol{{\beta}_o}^\top \xo +\rho_0 \Delta.
\ee
Note that the same coefficient $\beta_e$ (and $\boldsymbol{{\beta}_o} $) from \eqref{eq:survivalhazards} is remained in \eqref{eq:truesurvivalmodel}. 

The error term $\Delta$ in \eqref{eq:truesurvivalmodel} is not readily available from the data. Nonetheless we can `estimate' $\Delta$ and use this estimate as a substitute. For this we need to impose an assumption on the form of $\E(\xe|\xi, \xo)$, for example, 
\eqn\label{eq:firststep}
g(\E(\xe|\xi, \xo)) = \alpha_c + \ai\xi + \ao^\top \xo,
\ee
where $g(\cdot)$ is a link function.

The two stage residual inclusion (2SRI) estimator is then defined as follows: in the first stage, we fit model \eqref{eq:firststep}
and obtain
\eqn \label{eq:5}
\hat{\Delta} = \xe - \hat{X}_e=\xe - \hat{\E}(\xe|\xi, \xo).
\ee
Then in the second stage, we fit \eqref{eq:truesurvivalmodel} with $\Delta$ replaced by $\hat{\Delta}$.

Denote $Z_i = [\xei, \boldsymbol{X}^\top_{oi}, \hat{\Delta}_i]^\top$ the regressors in \eqref{eq:truesurvivalmodel} with $\Delta$ replaced by $\hat{\Delta}$, $\tX_i = [1, \xii, \xoi^\top]^\top$ the regressors in \eqref{eq:firststep}.
Let $\Nit = \mathbbm{1}_{\{T^*_i \le t, \delta_i = 1\}}$ be the counting process, and $\Yit = \mathbbm{1}_{\{T^*_i \ge t\}}$ the at-risk process. Define the filtration $\cF_t = \sigma\{N_i(u), Y_i(u), \xii, \xoi, \xei, u \leq t, i=1, .., n\}$. By the usual counting process theory, $M_i(t) = N_i(t) - \int_0^t Y_i(u) \lambda_i(u) du$ is a mean zero martingale with respect to the filtration $\cF_t$. Under the additive hazards model the estimating equation for $\bbeta = ({\beta}_e, \boldsymbol{{\beta}^\top_o}, \rho_0)^\top$
in \eqref{eq:truesurvivalmodel} is
\eqn
U(\bbeta) = \frac 1n \sum_{i = 1}^n \int_0^1 (Z_i - \bar{Z}(t))( d\Nit - \Yit \bbeta^\top Z_i dt),
\ee
where $\displaystyle{\bar{Z}(t) = {\sum_{l = 1}^n Z_l Y_l(t)} / {\sum_{l = 1}^n Y_l(t)}}$. 
This gives our estimator 
\eqn
\hat{\bbeta} =\Big\{\sum_{i = 1}^n \int_0^1 Y_i(t)(Z_i - \bar{Z}(t))^{\otimes 2}dt\Big\}^{-1}\Big\{\sum_{i = 1}^n \int_0^1 (Z_i - \bar{Z}(t))dN_i(t)\Big\}.
\ee
In the following we show that $\hat\bbeta$ is consistent for the true $\bbeta$ and therefore $\hat\be$ is consistent for the causal parameter $\be$. The estimator is also asymptotically normal and we provide a closed form expression for its asymptotic variance. 

In addition to $\bbeta$, the cumulative baseline hazard function $\Lambda_0(t) = \int_0^t \bar \lambda_0(s)ds$ can be estimated by 
\eqn\label{regbaseline}
\hat{\Lambda}_0(t) = \sum_{i = 1}^n\int_0^t \frac{1}{\sum_{j = 1}^n Y_j(u)}dN_i(u) - \hat{\bbeta}^\top\int_0^t \bar{Z}(u)du.
\ee
Using this we can also estimate the conditional survival function given the observed variables $\boldsymbol{x}= (x_e, \boldsymbol{x}^\top_o, x_I)^\top$, the value of the variables of a future patient whose survival we are interested in predicting:
\eqn\label{regsurvival}
\hat{S}(t|{\boldsymbol{x}}) = \exp(-\hat{\Lambda}_0(t) - \hat{\bbeta}^\top {\boldsymbol{z}}t),
\ee
where $\boldsymbol{z} = (x_e, \boldsymbol{x}^\top_{o}, x_e - \hat{\E}(\xe|x_I, \boldsymbol{x}_{o}))^\top$. 
Note that under the additive hazards model $\hat{\Lambda}_{0}(t)$ can be negative, 
or the estimated survival function $\hat{S}(t|\boldsymbol{x})$ not decreasing. 
Therefore we follow the approach of \citet{lin1994semiparametric} and use a modified $\hat{\Lambda}^*_{0}(t) = \max_{0 \le s \le t}\hat{\Lambda}_{0}(s)$, and $\hat{S}^*(t|\boldsymbol{x}) = \min_{0 \le s \le t}\hat{S}(s|\boldsymbol{x})$. Under regularity condition, the modified version is
asymptotically equivalent to the original version. Now we state our main results below.



\begin{thm}\label{thm:regconsisor}
Under \eqref{eq:survivalhazards}, \eqref{eq:errormodel}, \eqref{eq:firststep} and regularity conditions \ref{ass:bound}, \ref{ass:asyregsurv} and \ref{ass:postivesurv} given in the Appendix, the two stage residual inclusion estimator $\hat{\bbeta}$ is consistent for the true value of $\bbeta$ in \eqref{eq:truesurvivalmodel}, denoted by $\bbeta_{T} $, i.e.~
$\hat{\bbeta} \rightarrow \bbeta_{T}$ in probability as $n\rightarrow \infty$. 
\end{thm}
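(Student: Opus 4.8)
The plan is to treat $\hat{\bbeta}$ as a standard Lin--Ying additive-hazards estimator whose covariate vector $Z_i=[\xei,\xoi^\top,\hat{\Delta}_i]^\top$ contains a \emph{generated regressor} $\hat{\Delta}_i$, and to show that replacing the unobservable true residual $\Delta_i=\xei-g^{-1}(\tX_i^\top\boldsymbol\alpha_0)$ by its first-stage estimate $\hat{\Delta}_i=\xei-g^{-1}(\tX_i^\top\hat{\boldsymbol\alpha})$ does not alter the probability limit. First I would establish consistency of the first-stage GLM fit, $\hat{\boldsymbol\alpha}\to\boldsymbol\alpha_0$ in probability, which follows from standard M-estimation theory under the correctly specified model \eqref{eq:firststep}. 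A mean-value expansion of $g^{-1}$ together with the boundedness in Assumption~\ref{ass:bound} then yields the uniform control $\sup_{1\le i\le n}|\eta_i|=o_p(1)$, where $\eta_i:=\hat{\Delta}_i-\Delta_i$; equivalently $Z_i=Z_i^{0}+e_\Delta\,\eta_i$ with $Z_i^{0}=[\xei,\xoi^\top,\Delta_i]^\top$ the oracle regressor and $e_\Delta$ the unit vector in the $\Delta$-coordinate.

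Next I would write $\hat{\bbeta}=\hat A^{-1}\hat b$ with $\hat A=\tfrac1n\sum_i\int_0^1 Y_i(t)(Z_i-\bar Z(t))^{\otimes2}dt$ and $\hat b=\tfrac1n\sum_i\int_0^1(Z_i-\bar Z(t))\,dN_i(t)$, and analyze each factor. Using $Z_i=Z_i^{0}+e_\Delta\eta_i$ and $\sup_i|\eta_i|=o_p(1)$, a uniform law of large numbers over $t\in[0,1]$ with bounded integrands (justified by the moment and regularity conditions in Assumption~\ref{ass:asyregsurv}) shows $\hat A\to A$ in probability, where $A$ is the deterministic limit of the oracle information matrix and is positive definite by Assumption~\ref{ass:postivesurv}. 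For the numerator I would substitute the decomposition $dN_i(t)=Y_i(t)\{\bar\lambda_0(t)+\bbeta_T^\top Z_i^{0}\}\,dt+dM_i(t)$, where $M_i$ is the $\cF_t$-martingale of the excerpt, producing three pieces. The baseline piece $\tfrac1n\int_0^1\bar\lambda_0(t)\sum_i(Z_i-\bar Z(t))Y_i(t)\,dt$ vanishes identically because $\sum_i(Z_i-\bar Z(t))Y_i(t)=0$ by the definition of $\bar Z(t)$; the regression piece equals $\big[\tfrac1n\sum_i\int_0^1(Z_i-\bar Z)Y_i(Z_i^{0}-\bar Z^{0})^\top dt\big]\bbeta_T\to A\,\bbeta_T$ by the same convergence used for $\hat A$, centering again removing $\bar Z^{0}$.

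The step I expect to be the main obstacle is the martingale piece $\tfrac1n\sum_i\int_0^1(Z_i-\bar Z(t))\,dM_i(t)$, because $\hat{\Delta}_i$ is \emph{not} $\cF_t$-predictable: it depends on the full sample through $\hat{\boldsymbol\alpha}$, so the integrand is not of the classical predictable form. I would circumvent this by splitting it as $\tfrac1n\sum_i\int_0^1(Z_i^{0}-\bar Z^{0}(t))\,dM_i(t)+\tfrac1n\sum_i\int_0^1 e_\Delta(\eta_i-\bar\eta(t))\,dM_i(t)$, where $\bar\eta(t)=\sum_l\eta_lY_l(t)/\sum_lY_l(t)$. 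The first term has the predictable integrand $Z_i^{0}-\bar Z^{0}(t)$, since the true residual depends only on the baseline covariates in $\cF_0$, so it is a mean-zero martingale integral whose variance is $O(1/n)$ and hence $o_p(1)$. The second term is bounded in norm by $\big(2\sup_i|\eta_i|\big)\cdot\tfrac1n\sum_i\int_0^1|dM_i(t)|$, and since $\tfrac1n\sum_i\int_0^1|dM_i(t)|\le\tfrac1n\sum_i\{N_i(1)+\int_0^1 Y_i(t)\lambda_i(t)\,dt\}=O_p(1)$ while $\sup_i|\eta_i|=o_p(1)$, the product is $o_p(1)$. Combining, $\hat b\to A\,\bbeta_T$, and by the continuous mapping theorem and Slutsky, $\hat{\bbeta}=\hat A^{-1}\hat b\to A^{-1}A\,\bbeta_T=\bbeta_T$ in probability, which is the claim.
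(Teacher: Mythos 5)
Your proof is correct and follows essentially the same route as the paper: both decompose the estimating function at $\bbeta_T$ into an oracle martingale integral plus a generated-regressor perturbation controlled by the consistency of the first-stage fit, and both invert the same limiting matrix $\Omega$ from Condition \ref{ass:postivesurv}. The only substantive difference is technical: where you bound the non-predictable correction term by $2\sup_i|\hat{\Delta}_i-\Delta_i|\cdot n^{-1}\sum_i\int_0^1|dM_i(t)|$, the paper linearizes $\hat{\Delta}_i-\Delta_i$ through $(g^{-1})'$ and $\sqrt{n}(\hat{\alpha}-\alpha_T)$ and controls the resulting remainders via Lenglart's inequality (Lemmas \ref{lem:martint} and \ref{lem:regular}); your total-variation bound is cruder but entirely sufficient for consistency, and needs only $\hat{\alpha}\to\alpha_T$ rather than its $\sqrt{n}$-linearity.
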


\begin{thm}\label{thm:regnormal}
Under \eqref{eq:survivalhazards}, \eqref{eq:errormodel}, \eqref{eq:firststep} and Conditions \ref{ass:bound},\ref{ass:asyregsurv},\ref{ass:postivesurv}, $\sqrt{n}(\hat{\bbeta} - \bbeta_{T})$ is asymptotically normally distributed with asymptotic covariance matrix that can be consistently estimated by $\hat{\Omega}^{-1}(\hat{\Sigma}_1 + \hat{\Sigma}_2)\hat{\Omega}^{-1}$, where
\eqn
\hat{\Omega} &=& \frac{1}{n}\sum_{i = 1}^n \int_0^1 Y_i(t)(Z_i - \bar{Z}(t))^{\otimes 2}dt,\\
\hat{\Sigma}_1 &=& \frac{1}{n}\sum_{i = 1}^n \int_0^1 (Z_i - \bar{Z}(t))^{\otimes 2} dN_i(t),\\
\hat{\Sigma}_2 &=& \hat{\Psi} \hat{\Theta} \hat{\Psi}^\top,\label{survsigma2hat}\\
\hat{\Psi} &=& \frac{\hat{\rho}_0}{n}\sum_{i = 1}^n \Big\{\int_0^1 \Yit(Z_i - \bar{Z}(t))dt\Big\}\tX_i^\top(g^{-1})'(\tX_i^\top\hat{\alpha}),
\ee
$\alpha = (\alpha_c, \ai, \ao^\top)^\top$, 
$\hat{\Theta}$ is the estimated covariance matrix of $\sqrt{n}(\hat{\alpha} - \alpha_T)$ from the first stage, and $(g^{-1})'$ is the derivative of the inverse function of $g$. 
\end{thm}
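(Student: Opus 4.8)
The plan is to treat $\hat{\bbeta}$ as the solution of the second-stage estimating equation $U(\bbeta;\hat{\alpha})=0$, making explicit that each regressor $Z_i=Z_i(\hat{\alpha})$, and hence $\bar{Z}(t)$, depends on the first-stage estimate $\hat{\alpha}$ only through the last coordinate $\hat{\Delta}_i=\xei-g^{-1}(\tX_i^\top\hat{\alpha})$. Writing $\alpha_T$ for the probability limit of $\hat{\alpha}$ and $\bbeta_T$ for the true value from Theorem~\ref{thm:regconsisor}, I would first show that at the truth the estimating function collapses to a martingale integral, $U(\bbeta_T;\alpha_T)=\frac1n\sum_i\int_0^1(Z_i(\alpha_T)-\bar{Z}(t;\alpha_T))\,dM_i(t)$, using the compensator identity $dN_i-Y_i\bbeta_T^\top Z_i\,dt=dM_i$. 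Since $Z_i(\alpha_T)-\bar{Z}(t;\alpha_T)$ is bounded and predictable, the martingale central limit theorem (Lin and Ying) then gives that $\sqrt{n}\,U(\bbeta_T;\alpha_T)$ is asymptotically normal with covariance consistently estimated by $\hat{\Sigma}_1$.

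The core step is a joint first-order expansion of $0=U(\hat{\bbeta};\hat{\alpha})$ about $(\bbeta_T,\alpha_T)$:
$$0=U(\bbeta_T;\alpha_T)+\frac{\partial U}{\partial\bbeta}(\hat{\bbeta}-\bbeta_T)+\frac{\partial U}{\partial\alpha}(\hat{\alpha}-\alpha_T)+o_p(n^{-1/2}).$$
I expect $\partial U/\partial\bbeta\to-\Omega$, the limit of $\hat{\Omega}$, where the usual centering identity makes the $\bar{Z}$ contribution drop. The delicate computation is $\partial U/\partial\alpha$. Only the last coordinate of $Z_i$ depends on $\alpha$, with $\partial\hat{\Delta}_i/\partial\alpha=-(g^{-1})'(\tX_i^\top\alpha)\tX_i^\top$, so differentiating $U$ produces two kinds of terms: one from differentiating the leading factor $(Z_i-\bar{Z})$ against the increment $dN_i-Y_i\bbeta_T^\top Z_i\,dt$, and one from differentiating $\bbeta_T^\top Z_i$ inside the compensator. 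Evaluated at the truth the first kind multiplies the mean-zero $dM_i$ and is $o_p(1)$, while the second, carrying the coefficient $\rho_0$ on $\Delta$, converges to $\Psi=\lim\hat{\Psi}$. Rearranging yields the asymptotically linear representation
$$\sqrt{n}(\hat{\bbeta}-\bbeta_T)=\Omega^{-1}\Big[\sqrt{n}\,U(\bbeta_T;\alpha_T)+\Psi\,\sqrt{n}(\hat{\alpha}-\alpha_T)\Big]+o_p(1).$$

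For the second bracketed term I would invoke standard M-estimation theory for the first-stage GLM to obtain the influence-function expansion $\sqrt{n}(\hat{\alpha}-\alpha_T)=n^{-1/2}\sum_i\phi_i+o_p(1)$ with $\E\phi_i=0$ and $\operatorname{Var}(\phi_i)$ consistently estimated by $\hat{\Theta}$, so that $\Psi\sqrt{n}(\hat{\alpha}-\alpha_T)$ is asymptotically $N(0,\Psi\Theta\Psi^\top)$, matching $\Sigma_2$. The main obstacle, and the step deserving the most care, is showing the two sources of randomness are asymptotically uncorrelated, so that the variances simply add. Here I would condition on the $\sigma$-field $\mathcal{G}$ generated by all the covariates $\{\xei,\xii,\xoi\}$: the influence functions $\phi_i$ depend only on the first-stage data and are $\mathcal{G}$-measurable, whereas $\sqrt{n}\,U(\bbeta_T;\alpha_T)$ is a sum of martingale integrals with conditional mean zero given $\mathcal{G}$, which forces the cross-covariance to vanish in the limit. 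Establishing the joint asymptotic normality of these two orthogonal pieces through a combined martingale/i.i.d.\ Lindeberg argument, and then applying Slutsky's theorem together with the consistency of $\hat{\Omega},\hat{\Sigma}_1,\hat{\Psi},\hat{\Theta}$, gives the stated covariance $\Omega^{-1}(\Sigma_1+\Sigma_2)\Omega^{-1}$.
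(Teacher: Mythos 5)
Your proposal is correct and follows essentially the same route as the paper's proof: both reduce $\sqrt{n}U(\bbeta_T)$ to a sum of a martingale integral (giving $\Sigma_1$ via the martingale CLT) plus $\Psi\sqrt{n}(\hat{\alpha}-\alpha_T)$ (giving $\Sigma_2$ via the first-stage influence-function representation, the paper's Condition \ref{ass:firststep}), and both kill the cross term by noting that the martingale integral has conditional mean zero given the covariates. The only difference is organizational — you obtain $\Psi$ by differentiating $U$ in $\alpha$, while the paper obtains the identical term by the exact identity $dN_i - Y_i\bbeta_T^\top Z_i\,dt = dM_i + \rho_0 Y_i(\Delta_i-\hat{\Delta}_i)\,dt$ followed by linearizing $\Delta_i-\hat{\Delta}_i$ in $\hat{\alpha}-\alpha_T$, relegating the remainder control to its Lemma \ref{lem:regular}.
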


\begin{rem}
Note that $\hat{\Theta}$ can typically be obtained when using software for fitting linear or generalized linear regression models in the first stage. 
For 
linear regression of $\xe$ on $\xi$ and $\xo$ in \eqref{eq:firststep}, $g(y) = y$, so $(g^{-1})' \equiv 1$.
For logistic regression 
$(g^{-1})'(y) = {\exp(y)} / \{1 + \exp(y)\}^2$.
\end{rem}

\begin{thm}\label{thm:regsurvival}
For a new observation $\boldsymbol{x}$, 
Under \eqref{eq:survivalhazards}, \eqref{eq:errormodel}, \eqref{eq:firststep} and Conditions \ref{ass:bound},\ref{ass:asyregsurv},\ref{ass:postivesurv}, the estimated survival function in \eqref{regsurvival} converges to $S(t|x)$ uniformly and the process $\sqrt{n}\{\hat{S}(\cdot|\boldsymbol{x}) - S(\cdot |\boldsymbol{x})\}$ converges weakly to a zero-mean Gaussian process whose covariance function at $(t, s)$, where $0 \le s \le t$, can be consistently estimated by
\eqn
&&\hat{S}(t|\boldsymbol{x})\hat{S}(s|\boldsymbol{x})\Big\{n\sum_{i = 1}^n\int_0^s \frac{1}{(\sum_{j = 1}^n Y_j(u))^2}dN_i(u) + \hat{G}^\top(t) \hat{\Omega}^{-1}(\hat{\Sigma}_1 + \hat{\Sigma}_2)\hat{\Omega}^{-1} \hat{G}(s) \nonumber\\
&&+ \hat{E}^\top(t)\hat{\Theta}\hat{E}(s) + \hat{G}^\top(t) \hat{\Omega}^{-1}\hat{D}(s) + \hat{G}^\top(s) \hat{\Omega}^{-1}\hat{D}(t)\Big\},
\ee
where
\eqn
\hat{D}(t) &=& \sum_{i = 1}^n\int_0^t \frac{ Z_i - \bar{Z}(u)}{\sum_{j = 1}^n Y_j(u)}dN_i(u),\\
\hat{E}(t) &=& \hat{\rho}_0\sum_{i = 1}^n\tX_i(g^{-1})'(\tX_i^\top\hat{\alpha})\int_0^t\frac{Y_i(u) }{\sum_{j = 1}^n Y_j(u)}du,\\
\hat{G}(t) &=& \int_0^t(\boldsymbol{z} - \bar{Z}(u))du.
\ee
\end{thm}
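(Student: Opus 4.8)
The plan is to reduce the claim to the joint large-sample behaviour of $\hat{\bbeta}$, the baseline estimator $\hat{\Lambda}_0$, and the first-stage estimator $\hat{\alpha}$, and then transport everything through the exponential map by the functional delta method. Writing $\tilde{\Lambda}(t) = \sum_{i}\int_0^t dN_i(u)/\sum_j Y_j(u)$ for the first-stage-free Breslow--Nelson--Aalen piece, the estimator factorizes as $\log\hat{S}(t|\boldsymbol{x}) = -\tilde{\Lambda}(t) - \hat{\bbeta}^\top\hat{G}(t)$, since $\hat{\bbeta}^\top\boldsymbol{z}\,t = \hat{\bbeta}^\top\int_0^t\boldsymbol{z}\,du$ combines with the $\int_0^t\bar{Z}(u)du$ term of $\hat{\Lambda}_0$ to give exactly $\hat{G}(t)=\int_0^t(\boldsymbol{z}-\bar{Z}(u))du$. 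Uniform consistency of $\hat{S}(\cdot|\boldsymbol{x})$ then follows from the uniform consistency of $\hat{\Lambda}_0$, the consistency of $\hat{\bbeta}$ (Theorem~\ref{thm:regconsisor}), and continuity of $\exp$, so the substantive work lies in the weak-convergence and variance statement.

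First I would produce an asymptotically linear representation of $\sqrt{n}\{\log\hat{S}(t|\boldsymbol{x})-\log S(t|\boldsymbol{x})\}$. Under the model $dN_i=Y_i\lambda_i\,du+dM_i$, so $\tilde{\Lambda}(t)=\Lambda_0(t)+\bbeta_T^\top\int_0^t\bar{Z}(u)du+\sum_i\int_0^t dM_i(u)/\sum_j Y_j(u)$; subtracting $\log S(t|\boldsymbol{x})=-\Lambda_0(t)-\bbeta_T^\top\boldsymbol{z}_0 t$ ($\boldsymbol{z}_0$ the population counterpart of $\boldsymbol{z}$ evaluated at $\alpha_T$) and linearizing exposes three channels. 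The leading martingale channel is the baseline term $-\sqrt{n}\sum_i\int_0^t dM_i(u)/\sum_j Y_j(u)$; the $\hat{\bbeta}$ channel is $-\sqrt{n}(\hat{\bbeta}-\bbeta_T)^\top G(t)$ (with $G$ the in-probability limit of $\hat{G}$), into which I substitute the Theorem~\ref{thm:regnormal} representation $\sqrt{n}(\hat{\bbeta}-\bbeta_T)=\Omega^{-1}\{n^{-1/2}\sum_i\int_0^1(Z_i-\bar{Z})dM_i+\Psi\sqrt{n}(\hat{\alpha}-\alpha_T)\}+o_p(1)$; and the direct first-stage channel comes from Taylor-expanding the $g^{-1}(\cdot^\top\hat{\alpha})$ dependence that sits in both $\boldsymbol{z}$ and in $\bar{Z}(u)$, whose $\bar{Z}$ part assembles with the $(g^{-1})'$ Jacobians into $-E(t)^\top\sqrt{n}(\hat{\alpha}-\alpha_T)$ with $E$ the limit of $\hat{E}$.

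Next I would establish weak convergence on $[0,1]$. Finite-dimensional convergence follows by combining the martingale central limit theorem for the two integral terms with the joint asymptotic normality of $(\hat{\bbeta},\hat{\alpha})$ already available from the first stage and Theorem~\ref{thm:regnormal}; tightness of the baseline integral is obtained exactly as in \citet{lin1994semiparametric} from the boundedness in Condition~\ref{ass:bound} and bounds on the predictable variation, while the remaining two channels are deterministic functions of $t$ multiplying the tight vector $\sqrt{n}(\hat{\alpha}-\alpha_T)$ and a fixed-index martingale, hence trivially tight. The structural fact I would exploit is that $\hat{\alpha}$ is a function of the baseline data $(\xei,\xii,\xoi)$ alone, so $\sqrt{n}(\hat{\alpha}-\alpha_T)$ is $\cF_0$-measurable and therefore asymptotically uncorrelated with every failure-time martingale integral given the covariates; this is what eliminates the cross-covariances between the pure first-stage piece and the martingale pieces. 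Applying the delta method to $\eta\mapsto\exp(\eta)$ then yields $\sqrt{n}\{\hat{S}(\cdot|\boldsymbol{x})-S(\cdot|\boldsymbol{x})\}\Rightarrow -S(\cdot|\boldsymbol{x})\,W(\cdot)$ for a zero-mean Gaussian process $W$.

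Finally I would read off the covariance of $W$ at $(t,s)$ with $s\le t$. The optional-variation calculation for the baseline martingale gives the $n\sum_i\int_0^s dN_i/(\sum_j Y_j)^2$ term; the covariation of the baseline martingale with the $\hat{\bbeta}$-martingale part gives the cross terms $\hat{G}^\top\hat{\Omega}^{-1}\hat{D}$; the $\hat{\bbeta}$-martingale part itself gives $\hat{G}^\top\hat{\Omega}^{-1}\hat{\Sigma}_1\hat{\Omega}^{-1}\hat{G}$; and the first-stage contributions carried both by $\Psi$ inside $\hat{\bbeta}$ and by $E$ supply $\hat{G}^\top\hat{\Omega}^{-1}\hat{\Sigma}_2\hat{\Omega}^{-1}\hat{G}+\hat{E}^\top\hat{\Theta}\hat{E}$. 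Multiplying by the delta-method factor $S(t|\boldsymbol{x})S(s|\boldsymbol{x})$ and replacing all population limits by their sample plug-ins, whose consistency follows from the uniform law of large numbers under Conditions~\ref{ass:asyregsurv}--\ref{ass:postivesurv}, gives the stated estimator. The main obstacle is precisely the second-step bookkeeping: $\hat{\alpha}$ enters through three distinct channels---the new subject's residual in $\boldsymbol{z}$, the average $\bar{Z}(u)$ inside $\hat{\Lambda}_0$, and $Z_i$ inside $\hat{\bbeta}$---and one must verify that after collecting the $(g^{-1})'$ terms these collapse exactly into the $\hat{\Sigma}_2$ and $\hat{E}$ blocks of the displayed variance, the martingale/first-stage orthogonality removing any cross-covariance between the first-stage and martingale channels so that only the $\hat{G}^\top\hat{\Omega}^{-1}\hat{D}$ terms survive as cross terms.
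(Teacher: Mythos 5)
Your proposal is correct and follows essentially the same route as the paper, whose own proof is only a sketch: it combines the asymptotic representation of the cumulative baseline hazard estimator (Proposition \ref{prp:regbaseline}) with the asymptotic normality of $\hat{\bbeta}$ from Theorem \ref{thm:regnormal}, then invokes the delta method through $\exp(-\cdot)$ and a Donsker/martingale CLT argument, with the cross terms between first-stage and martingale channels vanishing exactly as you describe. Your rewriting of $-\log\hat{S}$ as $\tilde{\Lambda}(t)+\hat{\bbeta}^\top\hat{G}(t)$ is just a cleaner packaging of the same decomposition and correctly explains why $\hat{G}$ replaces $\hat{C}$ and why the signs of the $\hat{\Omega}^{-1}\hat{D}$ cross terms flip relative to the baseline proposition.
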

\begin{rem}
In forming the confidence interval (CI) for ${S}(t|\boldsymbol{x}) $, we can take the log-log transformation of $\hat{S}(t|\boldsymbol{x})$ and use Delta method to obtain the confidence interval of $\log \Lambda(t|\boldsymbol{x})$.
 This way the transformed-back confidence interval of $\hat{S}(t|\boldsymbol{x})$ is guaranteed to be within the range of $[0, 1]$. 
\end{rem}
The asymptotic results for the cumulative baseline hazard estimator is in the appendix.

\section{Competing risks}\label{sec:competing}

We now consider competing risks data. As before let $T$ and $C$ be the failure time and the censoring time, respectively. In addition let $J \in \{1,...,K\}$ to be the indicator for cause of failure, and $J = 1$ will be our cause of interest. 
Denote $\boldsymbol{X} = (\xe,\xi,\xo,\xu)$, $F_1(t|\boldsymbol{X}) = P(T\leq t, J = 1 |\boldsymbol{X} )$, and let
$ 
\lambda_1(t|\boldsymbol{X}) = -d\log\{1 - F_1(t|\boldsymbol{X})\}/dt
$ 
be the subdistribution hazard. 
In principle we may assume that $C$ and $T$ are independent conditional on all the observed covariates, but for the estimation approach below we will make use of the marginal Kaplan-Meier estimate of the distribution of $C$. Therefore we will make the stronger assumption that $C$ and $T$ are independent; we will discuss the relaxation of this assumption later. 

Similar to Section \ref{sec:survival} we assume that 
\eqn \label{eq:comphazards}
\lambda_1(t|\xe,\xi,\xo,\xu) = \lambda_{10}(t) + \be\xe + \bo\xo + \xu.
\ee
This is the additive subdistribution hazards model. 
Keeping the same notation as in \eqref{eq:first} and assumption \eqref{eq:errormodel}, we have according to Proposition \ref{prp:compcausal}, 
\eqn\label{eq:truecompmodel}
\lambda_1(t|\xe,\xi,\xo) = \bar{\lambda}_{10}(t) + {\beta}_e\xe + \boldsymbol{\bar{\beta}_o}^\top \xo +\rho_0\Delta.
\ee
Note that although the derivation of Proposition \ref{prp:compcausal} is similar to that of Proposition \ref{prp:survivalcausal}, this is a new result to our best knowledge and the 2SRI approach has not been previously considered under competing risks in the literature. 

In the 2SRI approach for competing risks data here, the first stage is the same as that in Section \ref{sec:survival}, and we replace $\Delta$ by $\hat\Delta$ to fit \eqref{eq:truecompmodel}.
Our observed data are $\{T^*_i, \delta_i, \delta_iJ_i, \xei, \xoi, \xii\}_{1 \le i \le n}$. 
The following are common quantities used in the regression modeling and inference of the subdistribution hazard function.
With a slight abuse of notation in this section, define the event time process as
$\Nit = \mathbbm{1}_{\{T_i \leq t, J_i = 1\}}$, and the at-risk process as
$\Yit = 1 - N_i(t-)$ (note that these have different meanings from Section \ref{sec:survival}). 
Let $\rit = \mathbbm{1}_{\{C_i \geq T_i \wedge t\}}$ denote an individual not censored or yet, so that both $\rit \Nit$ and $\rit \Yit$ are computable from the observed data at any time $t$. Define $w_i(t) = \rit\Gt/\Gtit$, 
 and $\hwit = \rit\hGt/\hGtit$, where $\Gt = \P(C \geq t)$ and $\hGt$ is the Kaplan-Meier estimate for $\Gt$ using $\{T_i^*, 1 - \delta_i\}_{1 \le i \le n}$. 
 
 We will use $\Mi1t$ to denote the martingale for the $i$-th object with respect to the complete-data filtration, that is, $\cF^1(t) = \sigma\{N_i(u), Y_i(u),\xei, \xii, \xoi, u \le t, \forall ~ 1 \le i \le n\}$. We will also use $M_i^c(t)$ to denote the martingale for the censoring related process of the $i$-th subject, $M_i^c(t) = N_i^c(t) - \int_0^t \mathbbm{1}_{\{T_i^* \ge u\}}d\Lambda^c(u)$, where $N_i^c(t) = \mathbbm{1}_{\{T_i^* \leq t, \delta_i = 0\}}$ is the censoring counting process, $\Lambda^c(t)$ is the cumulative hazard function of the censoring distribution. The censoring filtration is $\cF^c(t) = \sigma\{\mathbbm{1}_{\{T_i^* \ge u\}}, \mathbbm{1}_{\{T_i^* \leq u, \delta_i = 0\}}, \xei, \xii, \xoi, u \le t, \forall ~ 1 \le i \le n\}$.

The estimating function for $ \bbeta = (\beta_e, \beta^\top_o, \rho_0)^\top$ can be written as \cite{zheng2017instrumental, li2017additive}
\eqn
U(\bbeta) = \frac{1}{n} \sum_{i = 1}^n \int_0^1 (Z_i - \bar{Z}(t))\hwit(d\Nit - \Yit \bbeta^\top Z_i dt),
\ee
where $Z_i$ is the same as defined in Section \ref{sec:survival}, $\displaystyle{\bar{Z}(t) = \frac{\sum_{l = 1}^n Z_l \hat{w}_l(t) Y_l(t)}{\sum_{l = 1}^n \hat{w}_l(t) Y_l(t)}}$.
Therefore,
\eqn
\hat{\bbeta} = \Big\{\sum_{i = 1}^n \int_0^1 \hwit Y_i(t)(Z_i - \bar{Z}(t))^{\otimes 2}dt\Big\}^{-1}\Big\{\sum_{i = 1}^n \int_0^1 (Z_i - \bar{Z}(t))\hwit dN_i(t)\Big\}.
\ee
 The baseline cumulative hazard function $ {\Lambda}_{10} = \int_0^\cdot {\lambda}_{10}$ is then estimated by
\eqn \label{compbaseline}
\hat{\Lambda}_{10}(t) = \sum_{i = 1}^n\int_0^t \frac{\hat{w}_i(u) }{\sum_{j = 1}^n \hat{w}_j(u) Y_j(u)}dN_i(u) - \hat{\bbeta}^\top\int_0^t \bar{Z}(u)du.
\ee
Therefore the estimated cumulative incidence function (CIF) is 
\eqn \label{compsurvival}
\hat{F}_1(t|\boldsymbol{x}) = 1 - \hat{S}_1(t|\boldsymbol{x}) = 1 - \exp(-\hat{\Lambda}_{10}(t) - \hat{\bbeta}^\top\boldsymbol{z}t), 
\ee
where $\boldsymbol{z} = (x_e, \boldsymbol{x}^\top_{o}, x_e - \hat{\E}(x_e|x_I, \boldsymbol{x}_{o}))^\top$, and $(x_e, \boldsymbol{x}_o, x_I)$ is the value of the variables of a future patient whose CIF we are interested in predicting. We use the same modified version $\hat{\Lambda}^*_{10}(t) = \max_{0 \le s \le t}\hat{\Lambda}_{10}(s)$ and $\hat{F}^*_1(t|\boldsymbol{x}) = \max_{0 \le s \le t}\hat{F}_1(s|\boldsymbol{x})$ as in Section \ref{sec:survival} to ensure that the estimated hazard is non-negative. Now we state our main results below.


\begin{thm}\label{thm:compcons}
Under \eqref{eq:comphazards}, \eqref{eq:errormodel}, \eqref{eq:firststep} and Conditions \ref{ass:bound},\ref{ass:asyregcomp},\ref{ass:postivecomp}, the two stage residual inclusion estimator $\hat{\bbeta}$ is consistent for the true value of $\bbeta$ in \eqref{eq:truesurvivalmodel}, denoted by $\bbeta_{T} $, i.e.~
$\hat{\bbeta} \rightarrow \bbeta_{T}$ in probability as $n\rightarrow \infty$. 
\end{thm}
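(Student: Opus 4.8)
The plan is to exploit the linearity of the estimating function $U(\bbeta)$ in $\bbeta$. Collecting its two pieces, set
\[
\hat{A}_n = \frac{1}{n}\sum_{i=1}^n\int_0^1 \hwit\Yit\,(Z_i - \bar{Z}(t))^{\otimes 2}\,dt,
\qquad
\hat{b}_n = \frac{1}{n}\sum_{i=1}^n\int_0^1 (Z_i - \bar{Z}(t))\,\hwit\,dN_i(t).
\]
Because $\sum_{i}\hwit\Yit(Z_i-\bar{Z}(t))=0$ by the definition of $\bar{Z}(t)$, one checks that $U(\bbeta) = \hat{b}_n - \hat{A}_n\bbeta$ \emph{exactly}, so the defining equation $U(\hat{\bbeta})=0$ yields the closed form $\hat{\bbeta} = \hat{A}_n^{-1}\hat{b}_n$ and hence the exact identity $\hat{\bbeta}-\bbeta_{T} = \hat{A}_n^{-1}U(\bbeta_{T})$. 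Consistency then reduces to two claims: (a) $\hat{A}_n \to A$ in probability for a deterministic positive-definite $A$, and (b) $U(\bbeta_{T}) \to 0$ in probability.

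For (a) I would first pass from the plug-in quantities to their population counterparts. The first-stage GLM estimator is consistent, $\hat{\alpha}\to\alpha_T$, so $\hat{\Delta}_i = \xei - g^{-1}(\tX_i^\top\hat{\alpha}) \to \Delta_i$ and hence $Z_i \to Z_i^0 := (\xei, \xoi^\top, \Delta_i)^\top$; the Kaplan--Meier estimate of the censoring survival function is uniformly consistent, $\sup_{t}|\hGt - \Gt| \to 0$, whence $\hwit \to w_i(t)$ uniformly in $t$. Under the boundedness Condition \ref{ass:bound} the integrands are dominated and these convergences are uniform, so a uniform law of large numbers gives $\hat{A}_n \to A$, the expectation of the same integral formed with $w_i(t)$ and $Z_i^0$; Condition \ref{ass:postivecomp} makes $A$ positive definite and hence invertible.

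The core of the argument is (b). I would decompose $U(\bbeta_{T})$ as an oracle estimating function $U^0(\bbeta_{T})$ built from the true weights $w_i(t)$, the true residuals $\Delta_i$ (hence $Z_i^0$), and the corresponding oracle centering $\bar{Z}^{0}(t)$, plus remainders recording the substitutions $\hwit \mapsto w_i(t)$ and $\hat{\Delta}_i \mapsto \Delta_i$. These remainders are $o_p(1)$ by the same uniform-consistency and boundedness facts used in (a). It then suffices to show $U^0(\bbeta_{T}) \to 0$, where the additive-subdistribution structure does the work: since $\bar{Z}^{0}(t)$ is a weighted average, the baseline $\bar{\lambda}_{10}(t)$ integrates out of $U^0$, and under \eqref{eq:comphazards}, \eqref{eq:errormodel} and the integrated model \eqref{eq:truecompmodel} the inverse-probability-of-censoring weighting makes $w_i(t)\{dN_i(t) - \Yit\bbeta_{T}^\top Z_i^0\,dt\}$, centered at $\bar{Z}^{0}(t)$, an average of asymptotically mean-zero terms reconstructing the complete-data subdistribution martingale $\Mi1t$ --- the Fine--Gray-type identity underlying \citet{zheng2017instrumental, li2017additive}. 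The law of large numbers then gives $U^0(\bbeta_{T}) \to 0$.

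The main obstacle is part (b), and specifically the simultaneous control of two plug-in errors: the censoring weights enter through $\hGt$, whose error is itself an average over all subjects (expandable via the censoring martingale $M_i^c(t)$), while the residuals enter through the first-stage fit $\hat{\alpha}$. For consistency it is enough that both errors are $o_p(1)$ and that $U$ is Lipschitz in $(\hGt, \hat{\alpha})$ uniformly over the bounded supports guaranteed by Condition \ref{ass:bound}, so that the two error sources do not interact at leading order and simply vanish; their delicate joint first-order expansion, which yields the first-stage and censoring-weight variance corrections, is required only for the asymptotic-normality result and is deferred there.
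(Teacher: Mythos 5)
Your proposal is correct and follows essentially the same route as the paper: linearize $U(\bbeta)$ to get $\hat{\bbeta}-\bbeta_T=\hat{A}_n^{-1}U(\bbeta_T)$, invoke Condition \ref{ass:postivecomp} for invertibility of the limit, and show $U(\bbeta_T)=o_p(1)$ by splitting off an oracle IPCW-weighted term plus plug-in remainders from $\hat{\alpha}$ and the Kaplan--Meier weights. The one step you cite rather than prove --- that the $w_i(t)$-weighted increments, though not adapted to the complete-data filtration, still have mean zero --- is exactly what the paper establishes via its Lemma \ref{lem:notmart} and the conditional-expectation computation $\E\{w_1(t)\,dM_1^1(t)\mid Z_1\}=\E\{\mathbbm{1}_{\{C_1\ge t\}}\,dM_1^1(t)\mid Z_{01}\}$, so your sketch matches the paper's argument in substance.
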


\begin{thm}\label{thm:compnormal}
Under \eqref{eq:comphazards}, \eqref{eq:errormodel}, \eqref{eq:firststep} and Conditions \ref{ass:bound},\ref{ass:asyregcomp},\ref{ass:postivecomp}, $\sqrt{n}(\hat{\bbeta} - \bbeta_{T})$ is asymptotically normally distributed with asymptotic covariance matrix that can be consistently estimated by $\hat{\Omega}^{-1}(\hat{\Sigma}_1 + \hat{\Sigma}_2 + \hat{\Sigma}_3)\hat{\Omega}^{-1}$, where
\eqn
\hat{\Omega} &=& \frac{1}{n}\sum_{i = 1}^n \int_0^1 \hwit Y_i(t)(Z_i - \bar{Z}(t))^{\otimes 2}dt,\\
\hat{\Sigma}_1 &=& \frac{1}{n}\sum_{i = 1}^n \int_0^1 (Z_i - \bar{Z}(t))^{\otimes 2}\hwit dN_i(t),\\
\hat{\Sigma}_2 &=& \hat{\Psi} \hat{\Theta} \hat{\Psi}^\top,\\
\hat{\Sigma}_3 &=& \frac{1}{n}\sum_{i = 1}^n \int_0^1 \Big(\frac{\hat{q}(t)}{\hat{\pi}(t)}\Big)^{\otimes 2} dN_i^c(t),\\
\hat{\Psi} &=& \frac{\hat{\rho}_0}{n}\sum_{i = 1}^n \Big\{\int_0^1 \hwit\Yit(Z_i - \bar{Z}(t))dt \Big\}\tX_i^\top (g^{-1})'(\tX_i^\top \hat{\alpha}),\\
\hat{q}(t) &=& -\frac{1}{n}\sum_{i = 1}^n\int_0^1 \mathbbm{1}_{\{ T^*_i < t \le u\}}\hat{w}_i(u)\big(Z_i - \bar{Z}(u)\big) d\hat{M}_i(u),\\
\hat{\pi}(t) &=& \frac{1}{n}\sum_{i = 1}^n \mathbbm{1}_{\{T^*_i \geq t\}},\\
\hat{M}_i(t) &=& N_i(t) - \int_0^t Y_i(u)(d\hat{\Lambda}_{10}(u) + \hat{\bbeta}^\top Z_i du),
\ee
and $\hat{\Theta}$ is the estimated variance-covariance matrix of $\sqrt{n}(\hat{\alpha} - \alpha_T)$ from the first stage. 
\end{thm}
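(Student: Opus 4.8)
The plan is to mirror the proof of Theorem \ref{thm:regnormal} for the general survival case and to isolate the one genuinely new ingredient, the correction term $\hat{\Sigma}_3$ arising from estimating the censoring distribution by the Kaplan--Meier estimate $\hGt$ inside the weights $\hwit$. Since $U(\hat{\bbeta})=0$, I would first expand the estimating function in $\bbeta$ about the truth $\bbeta_{T}$; the negative derivative $-\partial U/\partial\bbeta$, evaluated along the segment joining $\hat{\bbeta}$ to $\bbeta_{T}$, converges in probability to $\Omega=\lim\hat{\Omega}$ (using the identity $\sum_i\int\hwit Y_i(Z_i-\bar Z)Z_i^\top dt=\sum_i\int\hwit Y_i(Z_i-\bar Z)^{\otimes2}dt$, with consistency of $\hat{\Omega}$ following from Theorem \ref{thm:compcons} and uniform laws of large numbers for $\bar Z(t)$ and $\hwit\to w_i(t)$ under Conditions \ref{ass:asyregcomp} and \ref{ass:postivecomp}). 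This reduces the problem to a central limit theorem for $\sqrt n\,U(\bbeta_{T})$, in which $\bbeta$ is fixed at $\bbeta_{T}$ but the nuisances $\hat{\alpha}$ and $\hGt$ remain the estimated ones, and to reading off its limiting variance.

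Next I would decompose $\sqrt n\,U(\bbeta_{T})$ into three asymptotically linear pieces matching the three sources of variability. (i) Replacing $\hat{\alpha},\hGt$ by $\alpha_T,\Gt$ gives the oracle estimating function; at $\bbeta_{T}$ it is asymptotically a martingale integral with respect to the complete-data subdistribution martingale $M_i^1(t)$ (the Fine--Gray identity absorbing the true weights), so the martingale central limit theorem gives asymptotic normality with variance consistently estimated by the optional-variation form $\hat{\Sigma}_1$. (ii) The effect of $\hat{\alpha}-\alpha_T$ enters only through $Z_i$ via $\hat{\Delta}_i=\xei-g^{-1}(\tX_i^\top\hat{\alpha})$; a chain-rule linearization with $\partial\hat{\Delta}_i/\partial\alpha=-(g^{-1})'(\tX_i^\top\alpha)\tX_i$ produces the factor $\hat{\Psi}$, which, combined with the first-stage expansion $\sqrt n(\hat{\alpha}-\alpha_T)\to N(0,\Theta)$, yields $\hat{\Sigma}_2=\hat{\Psi}\hat{\Theta}\hat{\Psi}^\top$. (iii) The effect of $\hGt-\Gt$ requires the standard asymptotically linear representation $\hGt-\Gt=-\Gt\,n^{-1}\sum_j\int_0^t\pi(u)^{-1}dM_j^c(u)+o_p(n^{-1/2})$ with $\pi(u)=\P(T^*\ge u)$; substituting this into $\hwit$, linearizing, and applying a Fubini-type interchange to recast the nested time integral as a single stochastic integral against $dM_j^c$ identifies the influence function $\hat{q}(t)/\hat{\pi}(t)$, whose variance is estimated by $\hat{\Sigma}_3=n^{-1}\sum_i\int_0^1(\hat{q}(t)/\hat{\pi}(t))^{\otimes2}dN_i^c(t)$ through the optional variation of $M_i^c$.

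It then remains to justify the additive form $\hat{\Sigma}_1+\hat{\Sigma}_2+\hat{\Sigma}_3$, i.e.~that the three contributions are asymptotically uncorrelated. The first-stage piece is orthogonal to the other two because its score is a function of the baseline variables $(\xe,\xi,\xo)$ alone, while $dM_i^1$ and $dM_i^c$ are conditionally mean zero given those variables, so the cross expectations vanish. The oracle and censoring pieces are orthogonal because the stronger assumption $T\perp C$ forces $\langle M_i^1,M_i^c\rangle\equiv0$ (the event and censoring counting processes do not jump simultaneously), so $\Sigma_1$ and $\Sigma_3$ do not interact. A multivariate Lindeberg/martingale central limit theorem applied to the sum of the three mean-zero influence terms, together with Slutsky's theorem and $\hat{\Omega}\to\Omega$, then delivers the stated asymptotic normality; consistency of each of $\hat{\Omega},\hat{\Sigma}_1,\hat{\Sigma}_2,\hat{\Sigma}_3$ follows by replacing population quantities with their empirical counterparts and invoking the uniform consistency of $\hat{\bbeta}$, $\hat{\alpha}$ and $\hGt$.

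The main obstacle is step (iii), the censoring-weight correction, which has no counterpart in Theorem \ref{thm:regnormal}. Controlling it needs (a) a uniform-in-$t$ asymptotically linear expansion of $\hGt$, which in turn requires $\inf_{t\in[0,1]}\pi(t)$ bounded away from zero under Condition \ref{ass:postivecomp}; (b) the careful change in the order of integration that converts the nested event-time-within-censoring-time integral into the compact form $\hat{q}/\hat{\pi}$, where the indicator $\mathbbm{1}_{\{T_i^*<t\le u\}}$ records exactly the reweighting of still-at-risk subjects relative to the residual martingale $\hat{M}_i$; and (c) verifying that $\int(\cdot)^{\otimes2}dN_i^c$ consistently estimates the variance of $\int(\cdot)\,dM_i^c$. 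Once these are in place, the remaining arguments are routine adaptations of the survival-data proof.
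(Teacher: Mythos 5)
Your proposal is correct and follows essentially the same route as the paper's proof: the exact linearity of $U$ in $\bbeta$ reduces everything to $\sqrt{n}U(\bbeta_T)$, which the paper decomposes into the same three asymptotically linear pieces (the event-martingale term with true weights yielding $\hat{\Sigma}_1$, the first-stage score term yielding $\hat{\Psi}\hat{\Theta}\hat{\Psi}^\top$, and the Kaplan--Meier/censoring-martingale term yielding $\hat{\Sigma}_3$ via the Fine--Gray linearization and the same Fubini interchange producing $\hat{q}/\hat{\pi}$), with the cross terms eliminated exactly as you describe (first-stage orthogonality by iterated expectations, event-versus-censoring orthogonality from $T\perp C$). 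The one small caveat is that the true weight $w_i(t)$ is not adapted to $\cF^1(t)$, so for the first piece the paper relies on its Lemma \ref{lem:notmart} to establish mean zero and then applies the ordinary i.i.d.\ multivariate CLT rather than the martingale CLT; this does not alter the structure of your argument.
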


\begin{thm}\label{thm:compsurvival}
For a new observation $\boldsymbol{x}$, under \eqref{eq:comphazards}, \eqref{eq:errormodel}, \eqref{eq:firststep} and Conditions \ref{ass:bound},\ref{ass:asyregcomp},\ref{ass:postivecomp}, the estimated CIF in \eqref{compsurvival} converges to $F_1(t|x)$ uniformly and the process $\sqrt{n}\{ \hat{F}_1(\cdot|\boldsymbol{x}) - F_1(\cdot|\boldsymbol{x})\}$ converges weakly to a zero-mean Gaussian process whose covariance function at $(t, s)$, where $0 \le s \le t$, can be consistently estimated by 
\eqn
&&(1 - \hat{F}_1(t|\boldsymbol{x}))(1 - \hat{F}_1(s|\boldsymbol{x}))\Big\{ \int_0^s \frac{n\sum_{i = 1}^n \hat{w}^2_i(u) dN_i(u)}{(\sum_{j = 1}^n \hat{w}_j(u) Y_j(u))^2} + \hat{G}^\top(t) \hat{\Omega}^{-1}(\hat{\Sigma}_1 + \hat{\Sigma}_2 + \hat{\Sigma}_3)\hat{\Omega}^{-1} \hat{G}(s) \nonumber\\
&&+ n\sum_{i = 1}^n\int_0^1 \frac{\hat{q}_t(u)\hat{q}_s(u)}{\hat{\pi}^2(u)}dN_i^c(u) + \hat{E}^\top(t)\hat{\Theta}\hat{E}(s) + \hat{G}^\top(t) \hat{\Omega}^{-1}\hat{D}(s) + \hat{G}^\top(s) \hat{\Omega}^{-1}\hat{D}(t)\Big\},
\ee
where
\eqn
\hat{D}(t) &=& \sum_{i = 1}^n\int_0^t \frac{(Z_i - \bar{Z}(u))\hat{w}_i(u)}{\sum_{j = 1}^n \hat{w}_j(u) Y_j(u)}dN_i(u),\\
\hat{E}(t) &=& \hat{\rho}_0\sum_{i = 1}^n\tX_i(g^{-1})'(\tX_i^\top \hat{\alpha})\int_0^t\frac{\hat{w}_i(u)Y_i(u) }{\sum_{j = 1}^n \hat{w}_j(u)Y_j(u)}du,\\
\hat{G}(t) &=& \int_0^t(\boldsymbol{z} - \bar{Z}(u))du,\\
\hat{q}_t(u) &=& \frac{1}{n}\sum_{i = 1}^n\int_0^t \frac{\mathbbm{1}_{\{T_i^* < u \le v\}}\hat{w}_i(v)}{\sum_{j = 1}^n \hat{w}_j(v)Y_j(v)}d\hat{M}_i(v).
\ee
\end{thm}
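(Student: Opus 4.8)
The plan is to reduce the statement about the cumulative incidence function to one about the estimated conditional cumulative subdistribution hazard, and then to derive an asymptotic i.i.d.\ linearization of the latter. Write $\hat{\Lambda}_1(t|\boldsymbol{x}) = \hat{\Lambda}_{10}(t) + \hat{\bbeta}^\top \boldsymbol{z} t$, so that $\hat{F}_1(t|\boldsymbol{x}) = 1 - \exp(-\hat{\Lambda}_1(t|\boldsymbol{x}))$. Uniform consistency of $\hat{F}_1$ follows from the consistency of $\hat{\bbeta}$ (Theorem \ref{thm:compcons}) together with the uniform consistency of $\hat{\Lambda}_{10}$ (established in the Appendix) and the continuous mapping theorem. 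For the distributional limit, I would apply the functional delta method to the map $\Lambda \mapsto 1 - e^{-\Lambda}$, whose derivative at $\Lambda_1(t|\boldsymbol{x})$ is $\exp(-\Lambda_1(t|\boldsymbol{x})) = 1 - F_1(t|\boldsymbol{x})$. This yields
\begin{equation*}
\sqrt{n}\{\hat{F}_1(t|\boldsymbol{x}) - F_1(t|\boldsymbol{x})\} = (1 - F_1(t|\boldsymbol{x}))\,\sqrt{n}\{\hat{\Lambda}_1(t|\boldsymbol{x}) - \Lambda_1(t|\boldsymbol{x})\} + o_p(1),
\end{equation*}
uniformly in $t$, which accounts for the leading factor $(1-\hat{F}_1(t|\boldsymbol{x}))(1-\hat{F}_1(s|\boldsymbol{x}))$ in the stated covariance estimate. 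It therefore suffices to obtain the weak limit of $\sqrt{n}\{\hat{\Lambda}_1(\cdot|\boldsymbol{x}) - \Lambda_1(\cdot|\boldsymbol{x})\} = \sqrt{n}\{\hat{\Lambda}_{10}(\cdot) - \Lambda_{10}(\cdot)\} + \sqrt{n}(\hat{\bbeta} - \bbeta_T)^\top \boldsymbol{z}\,(\cdot)$.

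The core of the argument is an asymptotic linear representation of $\sqrt{n}\{\hat{\Lambda}_{10}(t) - \Lambda_{10}(t)\}$ as a sum of i.i.d.\ mean-zero terms. Starting from the definition of $\hat{\Lambda}_{10}$, I would substitute the martingale $\hat{M}_i$ for $dN_i$ and then expand the three sources of estimation error in turn. First, replacing $\hat{\bbeta}$ by $\bbeta_T$ and plugging in the expansion of $\sqrt{n}(\hat{\bbeta}-\bbeta_T)$ from Theorem \ref{thm:compnormal} produces the $\hat{G}^\top \hat{\Omega}^{-1}(\hat{\Sigma}_1 + \hat{\Sigma}_2 + \hat{\Sigma}_3)\hat{\Omega}^{-1}\hat{G}$ block and, through the covariance between the baseline martingale integral and the score of $\hat{\bbeta}$, the cross terms $\hat{G}^\top\hat{\Omega}^{-1}\hat{D}$. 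Second, the first-stage estimation of $\alpha$ enters through the $\hat{\Delta}$ component of $Z_i$; linearizing $\hat{\Delta}$ in $\hat{\alpha} - \alpha_T$ via $(g^{-1})'$ and substituting the first-stage expansion yields the $\hat{E}^\top \hat{\Theta}\hat{E}$ contribution. Third, and most delicately, the inverse-probability-of-censoring weights $\hat{w}_i$ must be expanded around $w_i$ by way of the i.i.d.\ (martingale) representation of the Kaplan--Meier estimator $\hGt - \Gt$ in terms of the censoring martingales $M_i^c$; this is exactly the term generating $\hat{q}_t(u)$, $\hat{\pi}(u)$ and, after collecting, the additional variance piece $n\sum_i \int_0^1 \hat{q}_t(u)\hat{q}_s(u)/\hat{\pi}^2(u)\,dN_i^c(u)$. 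The pure within-sample martingale integral contributes the remaining $\int_0^s n\sum_i \hat{w}_i^2(u)\,dN_i(u)/(\sum_j \hat{w}_j(u)Y_j(u))^2$ term.

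Having assembled $\sqrt{n}\{\hat{\Lambda}_1(\cdot|\boldsymbol{x}) - \Lambda_1(\cdot|\boldsymbol{x})\}$ as a sum of i.i.d.\ processes indexed by $t$, I would establish weak convergence to a zero-mean Gaussian process by (i) verifying finite-dimensional convergence through the multivariate martingale/Lindeberg central limit theorem, and (ii) establishing tightness. Tightness and the Donsker property follow from Condition \ref{ass:bound} (bounded covariates), the monotonicity in $t$ of the processes $N_i$, $Y_i$ and $N_i^c$, and the boundedness of the weights $w_i$ on $[0,1]$ guaranteed by the positivity Condition \ref{ass:postivecomp}, so that each summand lies in a Donsker class. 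Consistency of the covariance estimator then follows by replacing each population functional by its empirical counterpart: uniform consistency of $\hat{\bbeta}$, $\hat{\alpha}$, $\hGt$ and $\hat{\Lambda}_{10}$, combined with the uniform law of large numbers, gives uniform consistency of $\hat{D}$, $\hat{E}$, $\hat{G}(\cdot)$, $\hat{q}_t$ and $\hat{\pi}$ to their limits, and hence of the plug-in variance.

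The step I expect to be the main obstacle is the third expansion above: propagating the martingale representation of $\hGt - \Gt$ through the doubly weighted baseline estimator, where the estimated censoring distribution appears nonlinearly in both the numerator weight $\hat{w}_i(u)$ and the denominator $\sum_j \hat{w}_j(u) Y_j(u)$. One must show that this expansion contributes exactly the $\hat{q}$-type term and interacts correctly with the $\hat{\bbeta}$ and first-stage expansions (as a genuinely separate, asymptotically identifiable component rather than an additional cross term); a careful interchange of summation, integration and limits, together with predictability arguments to decouple the censoring filtration $\cF^c$ from the complete-data filtration $\cF^1$, is the delicate part that has no counterpart in the ordinary survival case of Theorem \ref{thm:regsurvival}.
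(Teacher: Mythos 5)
Your proposal is correct and follows essentially the same route as the paper: the paper's own proof of Theorem \ref{thm:compsurvival} is a two-line appeal to the delta method and the Donsker theorem, deferring all the substance to Proposition \ref{prp:compbaseline}, whose proof carries out exactly the linearization of $\sqrt{n}\{\hat{\Lambda}_{10}(\cdot)-\Lambda_{10T}(\cdot)\}$ (martingale term, $\hat{\bbeta}$ expansion, first-stage $\hat{\alpha}$ expansion, and the censoring-martingale expansion of $\hat{w}_i$ producing the $\hat{q}$-terms) that you outline. The only difference is one of presentation: you re-derive in place what the paper isolates as a separate appendix proposition.
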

\begin{rem}
As in Section 2 we can take the log-log transformation of $1 - \hat{F}_1(t|\boldsymbol{x})$, and use Delta method to obtain the confidence interval of $\log \Lambda_1(t|x)$. This way the transformed-back confidence interval of $1 - \hat{F}_1(t|\boldsymbol{x})$ is guaranteed to be within the range of $[0, 1]$.
\end{rem}
The asymptotic results for the cumulative baseline hazard estimator is in the appendix.

\section{Simulation}\label{sec:simulation}

We are in the process of completing an R package for our estimators. The following numerical results were obtained using the program which is the core of the package. 

To study the performance of our estimators under both survival and competing risks settings, we carried out simulation studies with  sample size $100, 200, 400, 800, 1200$, and repeated 1000 times for each sample size. We provided in the tables the bias of the estimator, the empirical variance of the estimator from the 1000 repeats, the mean of the variance estimate, and the coverage rate of the nominal 95\% confidence intervals.

\subsection{Regular survival model}

For this part without competing risks, we considered the following three scenarios.

Scenario I: We sampled $\xi, X_o$ from independent standard normal distributions, set $\alpha = [1, 1, 0.5]^\top$, and generated $\xe = [1, \xi, X_o]\alpha^\top + \Delta$, where $\Delta \sim N(0, 0.2)$ was independent of $\xi, X_o$. We simulated $X_u$ according to \eqref{eq:errormodel}
with independent $\epsilon_i \sim N(0, 0.1)$, and $\rho_0 = 1$. We set $\beta = [1, 0.5, 1.5]^\top$ and the baseline hazard $\lambda_0 (t) \equiv 10.5$ in \eqref{eq:survivalhazards} to generate the survival time $T$. There was no censoring in this case.

Scenario II: Similar to I above, but with $\alpha = [0.25, 0.3, 0.2]^\top$, $\beta = [0.5, 0.2, 0.3]^\top$, and $\lambda_0(t) \equiv 5t + 5$. 
The censoring distribution followed exponential with rate of $2$ so that the censoring rate is around $40\%$.

Scenario III: We generated binary exposure as follows. 
First we sampled $\xi$ from Bernoulli distribution with $P(\xi=1) = 0.5 $, and independent $X_o$ from standard normal distribution. We generated $X_e$ from 
$$
X_{ei}|X_{Ii}, X_{oi} \sim \text{Bern}\Big(\frac{1}{1 + \exp(-(\alpha_0 + \alpha_i X_{Ii} + \alpha_o X_{oi}))}\Big).
$$
with $\alpha = [1, 0.5, 1]^\top$. We then simulated 
$$
X_{ui} = \rho_0\{ X_{ei} - E(X_{ei}|X_{Ii}, X_{oi}) \} + \epsilon_i,
$$
with $\rho_0 = 1$ and independent $\epsilon_i \sim N(0, 0.1)$. Finally we generate the survival time by setting $\beta = [1, 0.5, 1.5]^\top$ and the baseline hazard $\lambda_0 (t) \equiv 10.5$. The censoring distribution was exponential with rate of $5$ so that the censoring rate is around $30\%$.

\subsection{Competing risks model}
For the competing risks model, let 
\eqn
\P(J = 1|\boldsymbol{X}) = 1 - \exp(-\int_0^{t_0}\lambda_{10}(u)du - (\beta_{e1}\xe + \beta_{o1}X_o + \beta_{u1}X_u)t_0),
\ee
and $\P(J = 2|\boldsymbol{X}) = 1- \P(J = 1|\boldsymbol{X}) $. Here $t_0$ is a maximum follow-up time. 
We then simulated the event time data from the following subdistributions:
\eqn
F(t|J = 1, \boldsymbol{X}) = \frac{1 - \exp(-\int_0^{\min(t, t_0)}\lambda_{10}(u)du- (\beta_{e1}\xe + \beta_{o1}X_o + \beta_{u1}X_u)\min(t, t_0))}{1 - \exp(-\int_0^{t_0}\lambda_{10}(u)du - (\beta_{e1}\xe + \beta_{o1}X_o + \beta_{u1}X_u)t_0)},
\ee
\eqn
F(t|J = 2, \boldsymbol{X}) = \frac{1 - \exp(-(\lambda_{20} + \beta_{e2}\xe + \beta_{o2}X_o + \beta_{u2}X_u)\min(t, t_0))}{1 - \exp(-(\lambda_{20} + \beta_{e2}\xe + \beta_{o2}X_o + \beta_{u2}X_u)t_0)}.
\ee
Note that
\eqn
F(t, J = 1|\boldsymbol{X}) = 1 - \exp(-\int_0^{\min(t, t_0)}\lambda_{10}(u)du- (\beta_{e1}\xe + \beta_{o1}X_o + \beta_{u1}X_u)\min(t, t_0)).
\ee
Note that the maximum follow-up time $t_0$ is necessary because under the additive hazards model 
$
\Lambda_1(t|Z) = \Lambda_{10}(t) + \bbeta^\top Z t
$ 
 goes to infinity as $t\rightarrow \infty$, implying that $\lim_{t\rightarrow \infty} F_1(t|Z)=1$, which is no longer a subdistribution function.


Scenario I: Similar to Scenario I under the regular survival model but with $\alpha = [1.5, 1, 0.7]^\top$, $\beta_1 = [1, 0.5, 0.75]^\top$ and $\lambda_{10}(t) \equiv 11$ for cause 1, and $\beta_2 = [1.2, 1, 1.3]^\top$ and $\lambda_{20} = 15$ for cause 2. 
We set $t_0 = 0.095$. There was no censoring, and the cause 1 event rate was around $60\%$.

Scenario II: similar to I above, but with $\alpha = [1, 1, 0.5]^\top$, $\beta_1 = [1, 0.5, 0.75]^\top$ and $\lambda_{10}(t) = 5t + 10$ for cause 1, and 
the same as in Scenario I for cause 2. 
We set $t_0 = 0.06$. The censoring distribution was exponential with rate of $1$. The cause 1 event rate was around $28\%$ and censoring rate around $38\%$.

Scenario III: Similar to Scenario III under the regular survival model, but with $\alpha = [-1, 2, 1]^\top$, $\beta_1 = [1, 0.5, 0.75]^\top$ and $\lambda_{10}(t) \equiv 10$ for cause 1, and the same as in Scenario I, II for cause $2$.
 We set $t_0 = 0.06$. The censoring distribution was exponential with rate of $25$. The cause 1 event rate was around $26\%$ and censoring rate aroung $44\%$.

\subsection{Simulation results}

The results are summarized in Table \ref{table:surv} and \ref{table:comp}, respectively. From the tables we can see that as the sample size increases the bias goes down, and the variance estimate also becomes closer to the empirical variance. The coverage rate is quite close to nominal $95\%$ level in all cases. The variance of the estimator becomes larger in Scenario III both with or without competing risks, probably because we have a binary treatment, which leads to a error term with large variance, compared to the error terms in other scenarios. Nonetheless the variance estimate still works well. In general our estimator behaves well under all the finite-sample settings considered.

\section{SEER-Medicare data analysis}\label{sec:realdata}

For this analysis we consider prostate cancer patients with localized non-metastatic disease identified from the linked SEER-Medicare database diagnosed between 2000-2011 and followed up through 12/31/2013. 
The variables included were age, race/ethnicity, marital status, tumor stage, tumor grade, Prior Charlson comorbidity score measured during the year prior to diagnosis, year of diagnosis, and hospital referral regions. The hospital referral region was an important variable for us to construct the instrumental variable. Hospital referral regions represents a set of contiguous zip codes around a major hospital. 
Following \citet{hadley2010comparative} we restricted the analysis to early stage (T1 and T2) patients, aged 66 to 74 years, as well as eliminated patients in geographic areas with fewer than 50 patients over the entire observation period. This led to an overall sample size of $n=29806$. Among them $493$ ($1.65\%$) patients died due to cancer, 
$2066$ ($6.93\%$) died due to other causes, and the remaining $27247$ ($91.4\%$) were alive at the end of the follow-up.
There were four types of treatments: surgery, radiation, chemotherapy and hormonal therapy; $10977$ people received surgery, $21357$ radiation, $9577$ chemotherapy, and $9527$ hormonal therapy. Note that some patients received more than one treatment. Following \citet{hadley2010comparative} we will label patients who received surgery as ``radical prostatectomy" and the remaining ``conservative management". We will then compare the effects of these two treatments on the time to death for all causes and due to cancer, respectively.
 A summary of the patient characteristics is presented in Table \ref{table:analysis}. It can be seen that patients who received surgery tended to be younger, married, non-black, have T2 stage, well differentiated tumor grade, comorbidity score 0, and diagnosed no later than 2005. 
 \citet{hadley2010comparative} showed that the treatment pattern varied by hospital referral regions, beyond what was captured by the patient characteristics in Table \ref{table:analysis}.

For comparison purposes we first fitted the additive hazards model including the variables in Table \ref{table:analysis} and all their pairwise interactions, but without using any IV. It turned out that the treatment had a significant effect with $p$-value of $0.001$ for all causes of death. On the other hand the treatment effect was not significantly different from zero for our data with $p$-value of $0.13$ for cancer specific survival. We note that in \citet{hadley2010comparative} the treatment had a significant effect on both overall survival and cancer specific survival; our data was a later export than those used by \citet{hadley2010comparative} (diagnosed between 1995 and 2003) from the linked database. 
 In addition, \citet{hadley2010comparative} used the Cox multiplicative hazards model as opposed to our additive hazards model. 

We now consider the 2SRI approach. We used the same instrumental variable as in \citet{hadley2010comparative}. Specifically, we constructed the IV as follows. We first applied logistic regression to obtain the predicted probability for conservative management given covariates including age, race/ethnicity, marital status, tumor stage, tumor grade description, Prior Charlson comorbidity score, year of diagnosis and all the two-way interactions. Then for each hospital referral region and each year, we calculated the difference between the proportion of patients receiving conservative management and the average predicted probability of conservative management. 
Clearly, a larger difference indicated that the corresponding hospital referral region favored the conservative management more than those with a smaller difference. Therefore this difference was likely correlated with the treatment a patient received and, on the other hand, this difference was unlikely to directly influence the survival of an individual patient beyond the treatment assignment.
For use as an IV we lagged this difference for one year for the patients coming from the same hospital referral region. Therefore, the data that we used to analysis survival were patients diagnosed from 2001 to 2011. 

We then performed the first step of the IV analysis, using logistic regression of treatment on the IV obtained above, together with the other observed confounders including age, race or ethnicity, marital status, tumor stage, grade description and Prior Charlson comorbidity score, year of diagnosis and all two-way interactions. We found that the instrument had an extremely significant effect on the treatment with $p$-values less than $2 \times 10^{-16}$, indicating likely a strong IV. 
We then subtracted the predicted probability of treatment from the observed treatment to obtained the residuals. In the second step, we included this residual term together with the treatment and all the confounders and their pairwise interactions to fit the survival models. The results for overall survival are shown in Table \ref{table:ivallcauseresult}, and for cancer specific survival in Table \ref{table:cancercauseresult}.


From the tables we see that the causal effect of treatment remained?? 
significant for overall survival, although the $p$-value increased from 0.001 to 0.042. The $p$-value for the causal effect of treatment on cancer specific survival also increased from 0.13 to 0.41. The differences between the IV analysis results and the initial analysis results earlier indicate that there were likely unobserved confounders for the treatment effect on both overall and cancer specific survival, beyond those captured in Table \ref{table:analysis} (and their interactions).


Finally,  Figure \ref{fig:prediction} illustrates the predicted overall survival as well as cancer specific cumulative incidence function for a patient who received radical prostatectomy, was diagnosed in 2001, aged 71, white, with `other' marital status, tumor stage T2, moderately differentiated tumor, Charlson comorbidity score 2, and instrument value 0.0.3429. In reality the patient survived for 95 months, and died of other causes.

\section{Discussion}\label{sec:discussion}

In this paper we have developed statistical inference procedures for the 2SRI IV estimator under GLM in the first stage and  additive hazards model in the second stage for survival data that was conceptually described in \citet{tchetgen2015instrumental}. We have also extended the approach to competing risks data under the additive subdistribution hazards model. An R package is being completed that computes these estimators and their closed-form estimated asymptotic variances, as well as prediction under these models given the observed covariates. Our simulation results show the satisfactory performance of the procedures, and the SEER-Medicare analysis shows the usefulness of the approaches. 

The causal effect of interest $\beta_e$ that we have considered in this work is conditional on the unobserved confounder $\xu$, although from \eqref{eq:truesurvivalmodel} and \eqref{eq:truecompmodel} we may also understand it as conditional on the observed variables. This seems reasonable in our comparative effectiveness settings, where a relatively large number of observed confounders are typically considered. 
\citet{cai2011two} considered the setting for compliance in randomized clinical trials with binary outcomes, and pointed out that the 2SPS and 2SRI approaches may not estimate the causal odds ratio among compliers under the principal stratification framework.\citep{angrist1996identification, fran:rubin} Future work may consider similar analysis under the additive hazards model.




Under competing risks and using subdistribution hazards modeling, weights based on the estimated censoring distribution are needed in order to consistently estimate the regression coefficients. The marginal Kaplan-Meier estimate we used requires independent censoring. Alternatively one may estimate the conditional distribution of censoring given covariates using, for example, semiparametric survival models. Recently \citet{nguy:gillen} proposed to estimate this conditional distribution nonparametrically using a survival tree approach. When any of these estimates are used, the assumption on censoring distribution can then be relaxed to be conditionally independent of failure time (and type/cause) given the covariates.


All our technical proofs are compatible with time-dependent covariates. However, the causal inference problem is more complex with time-varying confounders and time-varying treatments, especially if the later confounders are affected by the earlier treatments.\citep{hern:etal}
To our best knowledge time-varying instrument variable method has not been developed in the literature.


\section*{Acknowledgement}

The authors would like to thank Mr.~Vinit Nalawade for help with preparing the SEER-Medicare data set.

\bibliographystyle{myunsrtnat}
\bibliography{ref}
\nocite{*}

\newpage
\section{Appendix}\label{sec:app}
\subsection{Preparations}
The following results are either from  or easy consequences of \citet{fleming2011counting}. 
The predictable variation process $ \langle \, \rangle $ and quadratic variation process $[\, ] $ are also defined in \citet{fleming2011counting}. 
\begin{lem}\label{lem:1}
Assume that for each $t \geq 0$, given $\cF_{t-}$, $\{d N_1(t),...,d N_n(t)\}$ are independent 
 $0, 1$ random variables, set $M_j = N_j - A_j$, where $A_j$ is the compensator for $N_j$. Then for any $i \neq j$ and $t \geq 0$, 
\eqn
\langle M_i, M_j\rangle(t) = 0, \quad \text{a.s.}
\ee
\end{lem}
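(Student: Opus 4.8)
The plan is to recognize $\langle M_i, M_j\rangle$ as the compensator (dual predictable projection) of the quadratic covariation $[M_i, M_j]$, and then to use the assumed conditional independence of the increments, together with the continuity of the compensators, to show that this compensator vanishes. Throughout I would work from the Doob--Meyer characterization: $\langle M_i, M_j\rangle$ is the unique predictable, finite-variation process, null at $0$, for which $M_iM_j - \langle M_i, M_j\rangle$ is a (local) martingale.

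First I would apply the integration-by-parts formula to the two martingales, giving $M_i(t)M_j(t) = \int_0^t M_i(s-)\,dM_j(s) + \int_0^t M_j(s-)\,dM_i(s) + [M_i, M_j](t)$. Because the two stochastic integrals of the left-continuous (hence predictable) integrands against $M_j$ and $M_i$ are local martingales, $M_iM_j - [M_i, M_j]$ is a local martingale, so $\langle M_i, M_j\rangle$ coincides with the compensator of $[M_i, M_j]$. It therefore suffices to show that $[M_i, M_j]$ has zero compensator.

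Next I would make $[M_i, M_j]$ explicit. Since each $N_k$ is a counting process whose compensator is continuous in this setting (indeed absolutely continuous, $A_k(t)=\int_0^t \lambda_k(s)\,ds$, under the additive hazards model), the jumps of $M_k$ are exactly those of $N_k$, and both $M_i, M_j$ are of finite variation, so $[M_i, M_j](t) = \sum_{0 < s \le t} \Delta M_i(s)\,\Delta M_j(s) = \sum_{0 < s \le t} \Delta N_i(s)\,\Delta N_j(s)$ simply counts the instants at which $N_i$ and $N_j$ jump together. The guiding infinitesimal computation is then $d\langle M_i, M_j\rangle(t) = \E[\,\Delta N_i(t)\,\Delta N_j(t)\mid \cF_{t-}] = \E[\,\Delta N_i(t)\mid\cF_{t-}]\,\E[\,\Delta N_j(t)\mid\cF_{t-}]$, where the factorization invokes the hypothesized conditional independence of $dN_i(t)$ and $dN_j(t)$ given $\cF_{t-}$; each factor equals the predictable jump $\Delta A_k(t)=0$ by continuity of $A_k$, so the increment is $0$ and $\langle M_i, M_j\rangle \equiv 0$.

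The step I expect to be the main obstacle is making this last passage fully rigorous, i.e.\ converting the infinitesimal factorization into a genuine statement about the dual predictable projection of the common-jump counter (the delicate point being that the jumps occur at totally inaccessible times, not at fixed or predictable ones). Concretely I would establish $\E\{[M_i, M_j](t)\}=0$ by bounding the expected number of common jumps over a partition: conditional independence makes the joint jump probability on each subinterval the product of two individual jump probabilities, each of order $O(h)$ on an interval of width $h$ by absolute continuity of the $A_k$, so the joint probability is $O(h^2)$ and its sum over the partition vanishes as the mesh $h\to 0$. Since $[M_i, M_j]$ is non-decreasing and its compensator $\langle M_i, M_j\rangle$ is predictable, non-decreasing, and of zero expectation, it must be identically $0$ almost surely. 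In the present survival application this can be shortcut by the observation that continuously distributed event times produce no ties, so that $[M_i, M_j]\equiv 0$ pathwise and the conclusion is immediate.
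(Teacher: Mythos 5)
Your proof is correct, but it is worth noting that the paper does not actually prove this lemma: it is stated in the Appendix under the heading that Lemmas \ref{lem:1}--\ref{lem:3} ``are either from or easy consequences of'' Fleming and Harrington, so there is no in-paper argument to match against. What you have written is essentially the standard textbook derivation that the citation points to: identify $\langle M_i, M_j\rangle$ with the compensator of $[M_i,M_j]$ via integration by parts, observe that continuity of the $A_k$ forces $\Delta M_k=\Delta N_k$ so that $[M_i,M_j]$ counts common jumps, and then kill that counter either by the partition argument (conditional independence makes each joint-jump probability $O(h^2)$, hence the expected number of ties vanishes as the mesh shrinks, and a nonnegative increasing predictable process with zero mean is a.s.\ zero) or by the shortcut that continuously distributed event times admit no ties, so $[M_i,M_j]\equiv 0$ pathwise. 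Both routes are sound; the only delicate point, which you correctly flag, is that the hypothesis ``$dN_1(t),\dots,dN_n(t)$ are conditionally independent given $\cF_{t-}$'' is itself an infinitesimal heuristic, and your discretization is the right way to make it precise (the $O(h)$ bound per subinterval should be read as $\int_{t_{k-1}}^{t_k}\lambda\,ds$, which suffices since $\max_k\int_{t_{k-1}}^{t_k}\lambda\,ds\to 0$ by absolute continuity even if $\lambda$ is unbounded). In short: your argument supplies the proof the paper delegates to the reference, and the pathwise no-ties observation is the most economical version for the survival setting at hand.
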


\begin{lem}\label{lem:2}
Let $M(t)$ be a martingale with respect to the filtration $\cF(t)$, $H(t)$ be a predictable process, then the predictable variation process of the martingale integral $\int_0^t H(s)dM(s)$ is
\eqn
\langle \int_0^t H(s)dM(s)\rangle = \int_0^t H^2(s)d\langle M\rangle(s).
\ee
\end{lem}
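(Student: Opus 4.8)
The plan is to identify $X(t) := \int_0^t H(s)\,dM(s)$ as a square-integrable martingale and then pin down its predictable variation through the uniqueness of the Doob--Meyer decomposition. First I would recall that $\langle X\rangle$ is, by definition, the unique predictable, right-continuous, nondecreasing process with $\langle X\rangle(0)=0$ such that $X^2 - \langle X\rangle$ is a mean-zero martingale. It therefore suffices to exhibit a candidate process with these defining properties. I take $A(t) := \int_0^t H^2(s)\,d\langle M\rangle(s)$. Predictability and monotonicity of $A$ are immediate: $H$ is predictable, $\langle M\rangle$ is a predictable nondecreasing process, and the integral of the nonnegative predictable integrand $H^2$ against $d\langle M\rangle$ is again predictable and nondecreasing with value $0$ at the origin. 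So the whole problem reduces to checking that $X^2 - A$ is a martingale.

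The core step is exactly that verification, which I would carry out via the product (integration-by-parts) rule and the transfer of quadratic variation through the stochastic integral. Applying the product rule to $X^2$ gives $X^2(t) = 2\int_0^t X(s-)\,dX(s) + [X](t)$, where $[X]$ is the optional quadratic variation. The first term is the integral of the predictable process $2X(s-)H(s)$ against the martingale $M$, hence is itself a martingale. For the second term I would use $[X](t) = \int_0^t H^2(s)\,d[M](s)$, which follows by writing $[X]$ as the limit in probability of sums of squared increments of $X$ over refining partitions and recognizing each increment of $X$ as $H$ times the corresponding increment of $M$. Since $[M]-\langle M\rangle$ is a martingale (in the counting-process case $M=N-A$ this is just $[M]-\langle M\rangle = N - A = M$) and $H^2$ is predictable, the process $[X](t) - A(t) = \int_0^t H^2\,d([M]-\langle M\rangle)$ is a martingale as well. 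Summing, $X^2 - A = 2\int_0^\cdot X(s-)H(s)\,dM(s) + ([X]-A)$ is a sum of two martingales, hence a martingale, and uniqueness of the compensator in the Doob--Meyer decomposition forces $A = \langle X\rangle$, which is the claimed identity.

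The main obstacle I anticipate is making each manipulation rigorous at the level of genuine rather than merely local martingales: the product rule and the quadratic-variation transfer are cleanest for bounded predictable $H$ and locally square-integrable $M$. My plan to handle this is to prove the identity first for simple predictable $H = \sum_k h_k\mathbbm{1}_{(t_k,t_{k+1}]}$ with $h_k$ measurable with respect to $\cF(t_k)$, where $\int H\,dM$ reduces to a finite sum and the predictable variation can be computed termwise using the orthogonality of increments of $M$ over disjoint intervals, and then to extend to general predictable $H$ by an $L^2$ (monotone-class) approximation together with a localization argument. In the counting-process setting relevant to this paper, where $M = N - A$ is locally square-integrable and the integrands that actually arise are bounded, these integrability requirements are met, so the stated isometry holds exactly as in \citet{fleming2011counting}.
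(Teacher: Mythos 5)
Your proof is correct: it is the standard argument via uniqueness of the compensator in the Doob--Meyer decomposition, verified through the product rule $X^2 = 2\int X(s-)\,dX(s) + [X]$ and the identity $[X] = \int H^2\,d[M]$, with the appropriate localization and simple-function approximation caveats acknowledged. The paper itself supplies no proof of this lemma --- it is stated as one of several results that ``are either from or easy consequences of'' Fleming and Harrington (2011) --- and your argument is precisely the proof given in that reference, so there is no substantive difference in approach.
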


\begin{lem}\label{lem:3}
Let $M_1(t)$, $M_2(t)$ be  martingales with respect to the filtration $\cF(t)$, $H_1(t)$, $H_2(t)$ be  predictable processes, then the predictable covariation process of the martingale integral $\int_0^t H_1(s)dM_1(s)$ and $\int_0^t H_2(s)dM_2(s)$ is
\eqn
\langle \int_0^t H_1(s)dM_1(s), \int_0^t H_2(s)dM_2(s)\rangle = \int_0^t H_1(s)H_2(s)d\langle M_1, M_2\rangle(s).
\ee
\end{lem}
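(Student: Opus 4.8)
The plan is to deduce Lemma \ref{lem:3} from the single fundamental property of the stochastic integral against a martingale: for any martingale $L$ one has $\langle \int_0^t H\,dM, L\rangle = \int_0^t H\,d\langle M, L\rangle$ (the Kunita--Watanabe characterization, available from \citet{fleming2011counting}). Note that Lemma \ref{lem:2} is itself the special case $L = \int_0^\cdot H\,dM$ with a single martingale $M$, since then $\langle \int_0^t H\,dM\rangle = \int_0^t H\,d\langle M, \int_0^\cdot H\,dM\rangle = \int_0^t H^2\,d\langle M\rangle$; thus both lemmas flow from the same source, consistent with their being ``easy consequences.''

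First I would apply this characterization to the two integrals with $L = \int_0^\cdot H_2\,dM_2$ held fixed, giving $\langle \int_0^t H_1\,dM_1, \int_0^t H_2\,dM_2\rangle = \int_0^t H_1\,d\langle M_1, \int_0^\cdot H_2\,dM_2\rangle$. Next I would resolve the inner covariation $\langle M_1, \int_0^\cdot H_2\,dM_2\rangle$ by invoking the same characterization a second time, now with the roles $M\mapsto M_2$, $H\mapsto H_2$, $L\mapsto M_1$, together with symmetry of the covariation, to obtain $\langle M_1, \int_0^\cdot H_2\,dM_2\rangle = \int_0^\cdot H_2\,d\langle M_1, M_2\rangle$. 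Substituting this identity of signed measures, $d\langle M_1, \int_0^\cdot H_2\,dM_2\rangle = H_2\,d\langle M_1, M_2\rangle$, into the first display yields $\int_0^t H_1 H_2\,d\langle M_1, M_2\rangle$, which is the claim. Should the characterization not be granted at the required generality, the fallback is the standard two-step argument: establish the identity first for simple predictable integrands $H_j = \sum_k \xi_{jk}\mathbbm{1}_{(t_k, t_{k+1}]}$ with $\xi_{jk}$ measurable with respect to $\cF_{t_k}$, where each integral is a finite martingale sum and the formula follows directly from bilinearity of the covariation and pulling the left-endpoint-measurable coefficients out of the increments of $\langle M_1, M_2\rangle$, and then pass to general predictable $H_1, H_2$ by $L^2$-approximation using the stochastic-integral isometry and the continuity of the covariation operator under such limits.

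I expect the main obstacle to be bookkeeping at the right level of generality rather than any deep idea: verifying that $\int_0^\cdot H_2\,dM_2$ is itself a locally square-integrable martingale to which the characterization applies, that all covariation processes and integrand products are well-defined under the integrability and predictability hypotheses in force, and, in the fallback route, controlling the $L^2$-approximation so that the limits of both sides coincide. It is worth flagging that naive polarization does not suffice here: writing $\langle I_1, I_2\rangle = \tfrac14(\langle I_1+I_2\rangle - \langle I_1-I_2\rangle)$ and appealing to Lemma \ref{lem:2} fails because $I_1 \pm I_2$ is a sum of integrals against two \emph{different} martingales and is not of the single-integral form required by Lemma \ref{lem:2}; this is precisely why the cross-term characterization, and not merely the quadratic case, must be used.
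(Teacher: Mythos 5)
Your argument is correct. The paper itself gives no proof of this lemma: it is listed among the results that are ``either from or easy consequences of'' Fleming and Harrington, so there is nothing to compare against except the standard theory you are reproducing. Your two-fold application of the characterization $\langle \int_0^\cdot H\,dM, L\rangle = \int_0^\cdot H\,d\langle M, L\rangle$ is exactly the textbook route, your fallback via simple predictable integrands and $L^2$-approximation is the standard way that characterization is itself established, and your observation that naive polarization cannot reduce Lemma \ref{lem:3} to Lemma \ref{lem:2} (because $I_1 \pm I_2$ is not a single-integrand integral) is a correct and worthwhile caveat.
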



\begin{lem}\label{lem:variation}
For independent and identically distributed sequences of martingale $\{M_i\}$ and predictable process $H_i(t)$, $1 \le i \le n$, $\Var\Big(\int_0^t H_1(u)dM_1(u)\Big) = \E\Big(\int_0^t H_1(u)dM_1(u)\Big)^2 = \E\Big(\langle\int_0^t H_1(u)dM_1(u)\rangle\Big) = \E\Big(\int_0^t H_1^2(u)d\langle M_1 \rangle(u)\Big)$ can be estimated by
\eqn
\frac{1}{n}\sum_{i = 1}^n \int_0^t H_i^2(u)d[M_i](u).
\ee
\end{lem}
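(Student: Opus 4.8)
The plan is to prove the displayed chain of equalities first and then establish consistency of the empirical estimator, relying on standard counting-process martingale theory together with Lemma \ref{lem:2}. Write $L(t) = \int_0^t H_1(u)\,dM_1(u)$. Under the boundedness in Condition \ref{ass:bound}, $H_1$ is a bounded predictable process and $M_1$ a square-integrable martingale, so $L$ is itself a mean-zero square-integrable martingale; the first equality $\Var(L(t)) = \E\{L(t)^2\}$ is then immediate from $\E\{L(t)\} = L(0) = 0$. For the second equality I would invoke the defining property of the predictable variation process, namely that $L^2 - \langle L\rangle$ is a mean-zero martingale started at $0$, so taking expectations yields $\E\{L(t)^2\} = \E\{\langle L\rangle(t)\}$. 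The third equality is exactly Lemma \ref{lem:2}, which identifies $\langle L\rangle(t) = \int_0^t H_1^2(u)\,d\langle M_1\rangle(u)$. This closes the chain.

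For the estimation claim the key idea is to replace the predictable variation process $\langle M_1\rangle$, which generally depends on unknown quantities, by the optional (quadratic) variation process $[M_1]$, which is computable from the data; for a counting-process martingale $[M_i]$ reduces to the jump count $N_i$, so that $\int_0^t H_i^2\,d[M_i] = \int_0^t H_i^2\,dN_i$. By the i.i.d.\ assumption and the weak law of large numbers, $\frac1n\sum_{i=1}^n \int_0^t H_i^2(u)\,d[M_i](u)$ converges in probability to $\E\{\int_0^t H_1^2(u)\,d[M_1](u)\}$, so it remains to show this limit coincides with $\E\{\int_0^t H_1^2(u)\,d\langle M_1\rangle(u)\}$. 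Since $[M_1] - \langle M_1\rangle$ is a mean-zero martingale and $H_1^2$ is bounded and predictable, the integral $\int_0^t H_1^2(u)\,d\{[M_1](u) - \langle M_1\rangle(u)\}$ is again a mean-zero martingale, whence taking expectations gives the desired identity and therefore consistency.

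The main obstacle is the integrability bookkeeping needed to promote the various stochastic integrals from local martingales to genuine martingales, so that the assertion ``expectation equals zero'' is legitimate at each step, in particular for $L^2 - \langle L\rangle$ and for $\int H_1^2\,d([M_1] - \langle M_1\rangle)$. I expect Condition \ref{ass:bound}, which bounds the covariates and hence $H$, together with the finite time horizon, to supply the required square-integrability, after which an optional-sampling argument closes each step routinely.
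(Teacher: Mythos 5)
Your proposal is correct and follows exactly the standard route: the paper gives no explicit proof of Lemma \ref{lem:variation}, stating it only as an easy consequence of \citet{fleming2011counting}, and your argument (mean-zero square-integrable martingale for the first equality, the martingale property of $L^2 - \langle L\rangle$ for the second, Lemma \ref{lem:2} for the third, and the law of large numbers combined with the fact that $\int H_1^2\,d([M_1]-\langle M_1\rangle)$ is a mean-zero martingale for the consistency of the empirical estimator) is precisely the textbook derivation being invoked. Your closing remark about the integrability bookkeeping is the right caveat, and Condition \ref{ass:bound} together with the finite horizon does supply it, so nothing is missing.
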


\begin{lem}[Lenglart's inequality] \label{lem:lenglart}
Let W be a local square integrable martingale. Then for all $\delta, \eta >0$, 
\eqn
\P(\sup_{t\in [0,1]}|W(t)| > \eta) \leq\frac{\delta}{\eta^2} + \P(\langle W, W\rangle(1) > \delta).
\ee
\end{lem}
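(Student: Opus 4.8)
The plan is to deduce the bound from Doob's $L^2$ maximal inequality applied to a suitably stopped version of $W$, with the predictable variation process $\langle W,W\rangle$ serving as the mechanism that keeps the stopped second moment below $\delta$. First I would reduce to the case where $W$ is a genuine square integrable martingale: choosing a localizing sequence $\tau_k \uparrow \infty$ with each $W^{\tau_k}$ square integrable, I would establish the inequality for $W^{\tau_k}$ and then pass to the limit, using that $\langle W^{\tau_k},W^{\tau_k}\rangle(1)=\langle W,W\rangle(1\wedge\tau_k)$ together with continuity of probability along $\{\tau_k\ge 1\}$. For a square integrable $W$ with $W(0)=0$, the defining property of $\langle W,W\rangle$ is that $W^2-\langle W,W\rangle$ is a mean-zero martingale, so optional stopping yields $\E[W^2(\tau)]=\E[\langle W,W\rangle(\tau)]$ for every bounded stopping time $\tau$; this identity is what I will exploit.

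Next, fix $\delta,\eta>0$ and introduce the stopping time $T=\inf\{t:\langle W,W\rangle(t)>\delta\}$. The stopped process $W^{T}(t)=W(t\wedge T)$ is again a square integrable martingale, with $\langle W^{T},W^{T}\rangle(t)=\langle W,W\rangle(t\wedge T)$. Doob's maximal inequality applied to the submartingale $(W^{T})^2$ gives
\[
\P\Big(\sup_{t\in[0,1]}|W^{T}(t)|>\eta\Big)\le\frac{\E\big[(W^{T}(1))^2\big]}{\eta^2}=\frac{\E\big[\langle W,W\rangle(T\wedge 1)\big]}{\eta^2}.
\]
I would then split the target probability as
\[
\P\Big(\sup_{t\in[0,1]}|W(t)|>\eta\Big)\le\P\Big(\sup_{t\in[0,1]}|W(t)|>\eta,\ \langle W,W\rangle(1)\le\delta\Big)+\P\big(\langle W,W\rangle(1)>\delta\big),
\]
where the second term is already the desired remainder. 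On the event in the first term, monotonicity of $\langle W,W\rangle$ forces $T\ge 1$, hence $t\wedge T=t$ for $t\le 1$ and $W^{T}$ agrees with $W$ on $[0,1]$; therefore that event is contained in $\{\sup_{t\le 1}|W^{T}(t)|>\eta\}$ and is bounded by $\eta^{-2}\,\E[\langle W,W\rangle(T\wedge 1)]$.

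The crux is then to show $\E[\langle W,W\rangle(T\wedge 1)]\le\delta$. On $\{T\ge 1\}$ the integrand equals $\langle W,W\rangle(1)\le\delta$, so the whole difficulty is on $\{T<1\}$, where $\langle W,W\rangle(T)$ can overshoot $\delta$ if the predictable variation process has a jump at $T$; this jump at the stopping time is the main obstacle. In the applications of this paper the relevant counting-process martingales have absolutely continuous compensators, so $\langle W,W\rangle$ is continuous and $\langle W,W\rangle(T)=\langle W,W\rangle(T-)=\delta$ on $\{T<1\}$, giving the bound immediately. For the fully general local square integrable case one instead uses the predictability of $\langle W,W\rangle$: since $\langle W,W\rangle(T-)\le\delta$ and the process is predictable, a truncation/announcing-sequence argument (or direct appeal to the general Lenglart--Rebolledo inequality) controls $\E[\langle W,W\rangle(T\wedge 1)]$ by $\delta$. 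Combining this bound with the two displays above yields $\P(\sup_{t\in[0,1]}|W(t)|>\eta)\le \delta/\eta^2+\P(\langle W,W\rangle(1)>\delta)$, as claimed.
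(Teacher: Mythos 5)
The paper offers no proof of this lemma at all: it is listed among the ``Preparations'' as a result taken from Fleming and Harrington (2011), so there is no in-paper argument to compare against. Your proposal is the standard proof of Lenglart's inequality and is essentially correct: localization to the square integrable case, the stopping time $T=\inf\{t:\langle W,W\rangle(t)>\delta\}$, Doob's $L^2$ maximal inequality applied to $W^{T}$ together with the identity $\E[(W^{T}(1))^2]=\E[\langle W,W\rangle(T\wedge 1)]$, the split over $\{\langle W,W\rangle(1)\le\delta\}$ and its complement, and finally the control of $\E[\langle W,W\rangle(T\wedge 1)]$ by $\delta$. You correctly identify the only genuinely delicate point, namely the possible overshoot of the predictable variation at $T$, and correctly note that it is absent for the absolutely continuous compensators arising in this paper and is handled in general by predictability of $\langle W,W\rangle$ via an announcing sequence $T_k\uparrow T$ with $\langle W,W\rangle(T_k)\le\delta$ (one applies the Doob step at $T_k\wedge 1$ and passes to the limit). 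Two small imprecisions worth tightening: the claim that on $\{T\ge 1\}$ one has $\langle W,W\rangle(1)\le\delta$ fails in the boundary case $T=1$ with a jump of $\langle W,W\rangle$ at time $1$, so that case should be folded into the overshoot argument rather than the trivial one; and the argument implicitly uses $W(0)=0$, which should be stated (it holds for all the counting-process martingales in the paper). Neither affects the validity of the approach.
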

Basically the Lenglart's inequality tells us that the convergence in probability of the supreme of a martingale can be infered from its endpoint.

\begin{lem} \label{lem:martint}
Let $M_i(t)$ be independent and identically distributed martingales with respect to the filtration $\cF(t)$, $i = 1,..., n$, $H_i(t)$ also be i.i.d. predictable processes, then
\eqn
\frac{1}{\sqrt{n}} \sum_{i = 1}^n \int_0^t H_i(s)dM_i(s) \to_p 0 \quad \text{uniformly in } t \in [0,1],
\ee
if each $\sup_{0\leq t\leq 1} |H_i(t)| = o_p(1)$. 
\end{lem}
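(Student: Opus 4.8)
The plan is to recognize $W_n(t) = \tfrac{1}{\sqrt{n}}\sum_{i=1}^n \int_0^t H_i(s)\,dM_i(s)$ as a local square-integrable martingale and to control its uniform size through its predictable variation at the endpoint, via Lenglart's inequality (Lemma \ref{lem:lenglart}). Since each $\int_0^t H_i\,dM_i$ is the integral of a predictable process against a martingale, the scaled sum $W_n$ is itself a martingale with respect to $\cF(t)$, and under the boundedness in Condition \ref{ass:bound} it is square integrable; hence Lenglart applies and for every $\delta,\eta>0$ yields $\P(\sup_{t\in[0,1]}|W_n(t)|>\eta)\le \delta/\eta^2 + \P(\langle W_n, W_n\rangle(1)>\delta)$.

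The second step is to evaluate $\langle W_n, W_n\rangle(1)$ explicitly. Because the $M_i$ are independent across $i$, their pairwise predictable covariations vanish by Lemma \ref{lem:1}, so all cross terms drop out; applying Lemma \ref{lem:2} to each diagonal term gives $\langle W_n, W_n\rangle(1) = \tfrac{1}{n}\sum_{i=1}^n \int_0^1 H_i^2(s)\,d\langle M_i\rangle(s)$. Bounding the integrand by its supremum, $\langle W_n, W_n\rangle(1) \le \tfrac{1}{n}\sum_{i=1}^n \big(\sup_{0\le s\le 1}H_i^2(s)\big)\langle M_i\rangle(1)$.

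The third step is to show this average tends to $0$ in probability. The summands $\xi_i := \big(\sup_{0\le s\le 1} H_i^2(s)\big)\langle M_i\rangle(1)$ are i.i.d., and under Condition \ref{ass:bound} the total variation $\langle M_i\rangle(1)$ is bounded by a fixed constant, so $\E[\xi_i] \le C\,\E[\sup_s H_i^2(s)]$. The hypothesis $\sup_s|H_i(s)|=o_p(1)$, together with the boundedness of $H_i$ from the same condition, gives $\E[\sup_s H_i^2(s)]\to 0$ by dominated convergence, hence $\E[\xi_i]\to 0$; since $\E[\tfrac{1}{n}\sum_i\xi_i]=\E[\xi_1]$, Markov's inequality then forces $\langle W_n, W_n\rangle(1)\to_p 0$. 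Feeding this back into Lenglart, for fixed $\eta>0$ we obtain $\limsup_n \P(\sup_t|W_n(t)|>\eta)\le \delta/\eta^2$ for every $\delta>0$, and letting $\delta\downarrow 0$ forces this limit to be $0$, which is the asserted uniform convergence.

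The main obstacle is the passage $\E[\sup_s H_i^2(s)]\to 0$ in the third step: the hypothesis supplies only convergence in probability of each sup-norm, yet we are summing $n$ terms and dividing by $n$, so we must rule out that the mean stabilizes at a positive constant. This is precisely where the boundedness of $H_i$ and of $\langle M_i\rangle(1)$ in Condition \ref{ass:bound} is essential, furnishing the uniform integrability that upgrades the $o_p(1)$ to convergence of the first moment. I would also note that in the applications of this lemma the $o_p(1)$ factor is shared across all $i$ (it originates from a single estimated quantity such as $\hat\alpha-\alpha$), so it factors out of the sum while the averaged $\langle M_i\rangle(1)$ converges to a constant by the law of large numbers, making the conclusion immediate in the typical use case.
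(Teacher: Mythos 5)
Your proof is correct and follows essentially the same route as the paper: compute the predictable variation of the scaled sum via Lemmas \ref{lem:1}--\ref{lem:3}, bound it by $\frac{1}{n}\sum_{i}\big(\sup_{0\le s\le 1}H_i^2(s)\big)\langle M_i\rangle(1)$, and conclude with Lenglart's inequality (Lemma \ref{lem:lenglart}). The one place you go beyond the paper is in justifying why the averaged bound is $o_p(1)$ when only each individual $\sup_s|H_i(s)|$ is assumed $o_p(1)$ --- via boundedness, dominated convergence of the first moment, and Markov's inequality --- a step the paper asserts without elaboration, so your version is the more careful one.
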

\begin{proof}
By Lemma \ref{lem:1} and \ref{lem:3}, 
\eqnn
\langle \frac{1}{\sqrt{n}} \sum_{i = 1}^n \int_0^t H_i(s)dM_i(s)\rangle &=& \frac{1}{n}\sum_{i = 1}^n\sum_{j = 1}^n\int_0^t H_i(s)H_j(s)d\langle M_i, M_j\rangle(s)\\
&=& \frac{1}{n}\sum_{i = 1}^n\int_0^t H^2_i(s)d\langle M_i\rangle(s).
\een
Therefore,
\eqnn
\P\Big(\sup_{t\in [0,1]}\big|\frac{1}{\sqrt{n}} \sum_{i = 1}^n \int_0^t H_i(s)dM_i(s)\big| > \eta\Big) \leq\frac{\delta}{\eta^2} + \P\Big(\frac{1}{n}\sum_{i = 1}^n\int_0^1 H^2_i(s)d\langle M_i\rangle(s) > \delta \Big).
\een
Since $\langle M_i \rangle$ is increasing, 
\eqn
\frac{1}{n}\sum_{i = 1}^n\int_0^1 H^2_i(s)d\langle M_i\rangle(s) \leq \frac{1}{n}\sum_{i = 1}^n [\sup_{0 \leq s \leq t} H^2_i(s)] \langle M_i\rangle(t),
\ee
if $\sup_{0\leq t\leq 1} |H_i(t)| = o_p(1)$, so is $\displaystyle{\frac{1}{n}\sum_{i = 1}^n [\sup_{0 \leq s \leq t} H^2_i(s)] \langle M_i\rangle(t)}$ and $\displaystyle{\frac{1}{\sqrt{n}} \sum_{i = 1}^n \int_0^t H_i(s)dM_i(s)}$, the latter by Lemma \ref{lem:lenglart}.
\end{proof}

The following proposition can be easily proved by mimicking the proof of Glivenko-Cantelli theorem.
\begin{prp}\label{prp:sup1}
Assume that $(X_1, Y_1),\dots, (X_n, Y_n)$ are independent and identically distributed pairs of random variables with distribution function $F(x, y)$. Also, $X_1$ has finite first moment. Define
\eqn
G_n(t) = \frac{1}{n}\sum_{i = 1}^n X_i \mathbbm{1}_{\{[Y_i, \infty)\}}(t),
\ee
and 
\eqn
G(t) = \E[X_1 \mathbbm{1}_{\{[Y_1, \infty)\}}],
\ee
then we have a similar result to Glivenko-Cantelli theorem, that is,
\eqn
||G_n - G||_\infty = \sup_{t \in \bbR}|G_n(t) - G(t)| \longrightarrow 0.
\ee
\end{prp}

\subsection{Additive hazards model for survival data}
\begin{prp}\label{prp:survivalcausal}
Assuming \eqref{eq:survivalhazards} and \eqref{eq:errormodel}, we have
\eqn
\bar{\lambda}(t|\xe,\xi,\xo) = \bar{\lambda}_0(t) + \be\xe + \bar{\bo}^\top\xo + \rho_0\Delta,
\ee
where $\bar{\lambda}_0(t) = \lambda_0(t) - \frac{\partial}{\partial t}\log\Big[\E\big\{\exp(-\epsilon t)\}	\Big]$.
\end{prp}
\begin{proof}
By the assumption \eqref{eq:survivalhazards}, we have
\eqnn
S(t|\xe, \xi, \xo, \xu) = \exp\Big\{-\int_0^t \big[\lambda_0(s) + \be\xe + \bo^\top\xo + \xu\big]ds\Big\}.
\een
Therefore, integrating out $\xu$ we have,
\eqnn
&&S(t|\xe,\xi,\xo)\\
&=&\E[S(t|\xe, \xi, \xo, \xu)|\xe, \xi, \xo]\\
&=&\E\big[\exp\big\{-\int_0^t (\lambda_0(s) + \be\xe + \bo^\top\xo + \E(\xu|\xe, \xi, \xo)) \big\}ds |\xe, \xi, \xo\big]\\
&=&\E\big[\exp\big\{-\int_0^t (\lambda_0(s) + \be\xe + \bo^\top\xo + \rho_0 \Delta + \epsilon) ds \big\}|\xe, \xi, \xo\big]\\
&=&\exp\big\{-\int_0^t (\lambda_0(s) + \be\xe + \bo^\top\xo + \rho_0 \Delta\big\} \times \E\big[\exp\{ - \epsilon t\}|\xe, \xi, \xo \big]\\
&=&\exp\big\{-\int_0^t (\lambda_0(s) + \be\xe + \bo^\top\xo + \rho_0 \Delta ) ds\big\} \times \E\big[\exp\{ - \epsilon t\} \big].
\een
Now the hazard function becomes,
\eqnn
\lambda(t|\xe,\xi,\xo) &=& -\frac{\partial}{\partial t}\log S(t|\xe,\xi,\xo)\\
&=& \lambda_0(t) - \frac{\partial}{\partial t}\log\Big[\E\big\{\exp(-\epsilon t )\big\}\Big] + \be\xe + \bo^\top\xo + \rho_0\Delta,
\een
where $\bar{\lambda}(t|\xe,\xi,\xo) = \bar{\lambda}_0(t) + \be\xe + \bar{\bo}^\top\xo + \rho_0\Delta$ is our new baseline hazard function.
\end{proof}

In the following we give the regularity conditions and addition notation for the results of the Theorems in section \ref{sec:survival}. 

\begin{ass}\label{ass:bound} 
The covariates $\{\xei, \xii, \xoi\}$ are bounded. 
\end{ass}
Note that
\eqnn
\hat{\Delta} - \Delta &=& -\big\{\hat{\E}(\xe|\xi, \xo) - \E(\xe|\xi, \xo)\big\} \\
&=& -\big\{g^{-1}(\tX^\top \hat{\alpha}) - g^{-1}(\tX^\top \alpha_T)\big\}
\een
\begin{ass}\label{ass:firststep}
We assume that $\hat{\alpha}$ is consistent for $\alpha_T$, which implies that $|\hat{\Delta} - \Delta| \to_p 0$ with Condition \ref{ass:bound}. We also assume that
\eqnn
\sqrt{n}(\hat{\alpha} - \alpha_T) = \mathcal{I}^{-1}\cdot(\frac{1}{\sqrt{n}}\sum_{i = 1}^n T_i) + o_p(1);
\een
in other words,  $\sqrt{n}(\hat{\alpha} - \alpha_T)$ can be written into a sum of i.i.d. terms with finite variance plus one $o_p(1)$ term. 
\end{ass}
\begin{rem}
This Condition is usually fulfilled by the first step estimator, such as the maximum likelihood estimator under the GLM.
\end{rem}

\begin{ass}\label{ass:baselinebound}
\eqn
\int_0^1\bar{\lambda}_0(t)dt < \infty.
\ee
\end{ass}

Define 
$S^{(j)}(t) = \sum_{i = 1}^n Y_i(t)Z_i^{\otimes j} / n$ for $j = 0$ and 1. Then by Gilvenko-Cantelli theorem there exists a scalar and vector function $\szero(t)$ and $\sone(t)$ defined on $[0,1]$ such that 
\eqn
\sup_{t \in [0,1]}||S^{(j)}(t) - s^{(j)}(t)|| \xrightarrow{\cP} 0.
\ee

\begin{ass}[Asymptotic regularity conditions]\label{ass:asyregsurv}
$\szero(t)$ is bounded away from zero.
\end{ass}

\begin{ass}\label{ass:postivesurv}
There exists a positive definite matrix $\Omega$ such that 
\eqn
\frac 1n\sum_{i = 1}^n \int_0^1 \Yit\Big(Z_{0i} - \frac{\sone(t)}{\szero(t)}\Big)^{\otimes 2}dt \xrightarrow{a.s.} \Omega.
\ee
\end{ass}

Let $Z_{0i} = [X_{ei}, \xoi^\top, \Delta_i]$, $\bar{Z}_0(t) = \sum_{l = 1}^n Z_{0l} Y_l(t)/\sum_{l = 1}^n Y_l(t)$, define
\eqn
\bbA_1 &=& \sum_{i = 1}^n \int_0^1 (Z_i - Z_{0i})d\Mit,\\
\bbA_2 &=& \sum_{i = 1}^n \int_0^1 (\bar{Z}(t) -\bar{Z_0}(t))d\Mit,\\
\bbA_3 &=& \rho_0\sum_{i = 1}^n \big\{\int_0^1 (Z_i - Z_{0i})\Yit dt\big\}(\Delta_i - \hat{\Delta}_i),\\
\bbA_4 &=& \rho_0\sum_{i = 1}^n \big\{\int_0^1 (\bar{Z}(t) -\bar{Z_0}(t))\Yit dt\big\}(\Delta_i - \hat{\Delta}_i).
\ee

\begin{lem}\label{lem:regular}
$\bbA_1$, $\bbA_2$, $\bbA_3$, $\bbA_4$ are bounded in probability.
\end{lem}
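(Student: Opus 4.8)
The plan is to reduce all four quantities to expressions in the scalar errors $\hat\Delta_i - \Delta_i$, and then to treat the two genuine martingale integrals ($\bbA_1,\bbA_2$) and the two ordinary sums ($\bbA_3,\bbA_4$) by related but distinct arguments. The starting point is that $Z_i$ and $Z_{0i}$ differ only in their last coordinate, so $Z_i - Z_{0i} = (0,\boldsymbol 0^\top, \hat\Delta_i - \Delta_i)^\top$, and likewise $\bar Z(t) - \bar Z_0(t)$ has all but its last entry equal to zero, the last being the at-risk weighted average $\bar\delta(t) = \sum_l(\hat\Delta_l - \Delta_l)Y_l(t)/\sum_l Y_l(t)$. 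Two facts drive everything. First, by the mean value theorem $\hat\Delta_i - \Delta_i = -(g^{-1})'(\tX_i^\top\tilde\alpha_i)\tX_i^\top(\hat\alpha - \alpha_T)$ for some $\tilde\alpha_i$ between $\hat\alpha$ and $\alpha_T$; with bounded covariates (Condition \ref{ass:bound}), continuity of $(g^{-1})'$, and the $\sqrt n$-consistency in Condition \ref{ass:firststep}, this yields $\max_i|\hat\Delta_i - \Delta_i| = O_p(n^{-1/2})$ and hence the crucial bound $\sum_{i=1}^n(\hat\Delta_i - \Delta_i)^2 \le n\max_i(\hat\Delta_i - \Delta_i)^2 = O_p(1)$. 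Second, since the filtration contains all baseline covariates at time $0$, the first-stage estimate $\hat\alpha$ and therefore each $\hat\Delta_i - \Delta_i$ is $\cF_0$-measurable; combined with the left-continuity of $Y_l(\cdot)$ this makes both $Z_i - Z_{0i}$ and $\bar\delta(t)$ predictable.

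For $\bbA_3$ and $\bbA_4$ I would argue directly, as these are ordinary sums. Since $Z_i - Z_{0i}$ points only in the last coordinate, the last entry of $\bbA_3$ equals $-\rho_0\sum_i(\hat\Delta_i - \Delta_i)^2\int_0^1 Y_i(t)\,dt$, bounded in absolute value by $|\rho_0|\sum_i(\hat\Delta_i - \Delta_i)^2 = O_p(1)$. For $\bbA_4$, using $\sum_i(\hat\Delta_i - \Delta_i)Y_i(t) = \bar\delta(t)\sum_l Y_l(t)$ reduces its last entry to $-\rho_0\int_0^1\bar\delta(t)^2\sum_l Y_l(t)\,dt$; a weighted Cauchy--Schwarz step (using $Y_i \in \{0,1\}$, so $(\sum_i(\hat\Delta_i-\Delta_i)Y_i)^2 \le (\sum_i(\hat\Delta_i-\Delta_i)^2 Y_i)\sum_l Y_l$) bounds this by $|\rho_0|\sum_i(\hat\Delta_i - \Delta_i)^2 = O_p(1)$.

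For $\bbA_1$ and $\bbA_2$ I would exploit the predictability noted above to view them as martingale integrals evaluated at $t=1$ and control them through their predictable variation. Because $\hat\Delta_i - \Delta_i$ is $\cF_0$-measurable and constant in time, $\bbA_1 = \sum_i(\hat\Delta_i-\Delta_i)M_i(\cdot)$ is a martingale whose predictable variation, using the orthogonality $\langle M_i, M_j\rangle = 0$ for $i\neq j$ from Lemma \ref{lem:1}, equals $\sum_i(\hat\Delta_i - \Delta_i)^2\langle M_i\rangle(1)$. As $\langle M_i\rangle(1) = \int_0^1 Y_i(u)\lambda_i(u)\,du$ is uniformly bounded under Conditions \ref{ass:bound} and \ref{ass:baselinebound}, this variation is at most a constant times $\sum_i(\hat\Delta_i - \Delta_i)^2 = O_p(1)$. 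Similarly $\bbA_2 = \int_0^1\bar\delta(t)\,d\bar M(t)$ with $\bar M = \sum_i M_i$, whose predictable variation (Lemma \ref{lem:2}, with cross terms killed by Lemma \ref{lem:1}) is $\int_0^1\bar\delta(t)^2\,d\langle\bar M\rangle(t) \le (\sup_t\bar\delta(t)^2)\sum_i\langle M_i\rangle(1) = O_p(n^{-1})\cdot O_p(n) = O_p(1)$, where $\sup_t|\bar\delta(t)| \le \max_l|\hat\Delta_l - \Delta_l|$ because $\bar\delta(t)$ is a convex combination. In each case Lenglart's inequality (Lemma \ref{lem:lenglart}) transfers the $O_p(1)$ bound on the predictable variation at $t=1$ to an $O_p(1)$ bound on the martingale itself.

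The main obstacle is the estimate $\sum_i(\hat\Delta_i - \Delta_i)^2 = O_p(1)$ together with the realization that it, and not $\sum_i|\hat\Delta_i - \Delta_i|$, is the relevant quantity: the naive bound $\sum_i|\hat\Delta_i - \Delta_i|\,|M_i(1)| \le \max_i|\hat\Delta_i - \Delta_i|\sum_i|M_i(1)| = O_p(n^{-1/2})\cdot O_p(n)$ is only $O_p(n^{1/2})$ and fails. It is precisely the square appearing in the predictable variation, together with the orthogonality of the individual martingales, that keeps $\bbA_1$ and $\bbA_2$ bounded, and the analogous Cauchy--Schwarz reduction does the same for $\bbA_4$.
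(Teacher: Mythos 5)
Your proof is correct, but it takes a genuinely different route from the paper's. The paper linearizes $\hat\Delta_i-\Delta_i$ around $\alpha_T$, pulls the factor $\sqrt{n}(\hat\alpha-\alpha_T)$ (or a quadratic form in it, for $\bbA_3,\bbA_4$) outside each sum, and then shows the remaining coefficient is $O_p(1)$ via the central limit theorem for i.i.d.\ martingale integrals, Lemma \ref{lem:martint}, and the uniform law of large numbers in Proposition \ref{prp:sup1}; this leans on the asymptotic linearity in Condition \ref{ass:firststep} and yields explicit expansions that the paper recycles later (e.g.\ the $\Psi$ term in Theorem \ref{thm:regnormal}). You instead extract only the crude consequence $\|\hat\alpha-\alpha_T\|=O_p(n^{-1/2})$, convert it into the global bounds $\max_i|\hat\Delta_i-\Delta_i|=O_p(n^{-1/2})$ and $\sum_i(\hat\Delta_i-\Delta_i)^2=O_p(1)$, dispatch $\bbA_3$ and $\bbA_4$ by direct and Cauchy--Schwarz estimates, and control $\bbA_1$ and $\bbA_2$ by bounding their predictable variations and invoking Lenglart. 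The structural observation that makes this work --- $\hat\alpha$ is $\cF_0$-measurable because the filtration contains all baseline covariates, so the plug-in integrands are genuinely predictable and $\bbA_1,\bbA_2$ are honest martingales --- is not used by the paper at all; the paper avoids the issue by moving the randomness of $\hat\alpha$ outside the stochastic integral via the linearization. Your argument is more economical and self-contained for the boundedness claim as stated (it needs neither the i.i.d.\ influence-function representation nor the sup-norm limit $\cK(t)$), while the paper's version does more work here precisely because it wants the explicit asymptotic representation for the subsequent normality proof.
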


\begin{proof}
Notice that 
\eqnn
Z_i - Z_{0i} = [0, 0, \hat{\Delta}_i - \Delta_i]^\top,
\een
where $\hat{\Delta}_i - \Delta_i = g^{-1}(\tX_i^\top\alpha_T) - g^{-1}(\tX_i^\top\hat{\alpha}) = (g^{-1})'(\tX_i^\top\alpha_T) + o_p(1)$. Therefore it suffices to check the $\hat{\Delta}_i - \Delta_i$ components. Then the only nonzero entry of $\bbA_1$ is 
\eqnn
\sum_{i = 1}^n \int_0^1 \{\hat{\Delta}_i - \Delta_i\}d\Mit =\Big\{ -\frac{1}{\sqrt{n}} \sum_{i = 1}^n \int_0^1 \tX_i^\top (g^{-1})'(\tX_i^\top \alpha_T) d\Mit \Big\}\sqrt{n}(\hat{\alpha} - \alpha_T) + o_p(1),
\een
which is bounded in probability by central limit theorem. Next, we check the nonzero entry of $\bbA_2$,
\eqnn
&&\sum_{i = 1}^n \int_0^1 \frac{\sum_{l = 1}^n (\hat{\Delta}_l - \Delta_l) Y_l(t)}{\sum_{l = 1}^n Y_l(t)}d\Mit \\
&=&\Big\{-\frac{1}{\sqrt{n}}\sum_{i = 1}^n \int_0^1 \frac{\sum_{l = 1}^n \tX_l^\top (g^{-1})'(\tX_l^\top \alpha_T)Y_l(t)}{\sum_{l = 1}^n Y_l(t)}d\Mit \Big\}\sqrt{n}(\hat{\alpha} - \alpha_T) + o_p(1),
\een
where $\displaystyle{\frac{\sum_{l = 1}^n \tX_l^\top(g^{-1})'(\tX_l^\top \alpha_T) Y_l(t)}{\sum_{l = 1}^n Y_l(t)}}$ will have a sup norm limit by Proposition \ref{prp:sup1}. Name that limit $\cK(t)$, we will get
\eqnn
&&\frac{1}{\sqrt{n}}\sum_{i = 1}^n \int_0^1 \frac{\sum_{l = 1}^n \tX_l^\top (g^{-1})'(\tX_l^\top \alpha_T)Y_l(t)}{\sum_{l = 1}^n Y_l(t)}d\Mit \\
&=& \frac{1}{\sqrt{n}}\sum_{i = 1}^n \int_0^1 \Big\{\frac{\sum_{l = 1}^n \tX_l^\top (g^{-1})'(\tX_l^\top \alpha_T)Y_l(t)}{\sum_{l = 1}^n Y_l(t)} - \cK(t)\Big\}d\Mit + \frac{1}{\sqrt{n}}\sum_{i = 1}^n \int_0^1 \cK(t) d\Mit.
\een
From Lemma \ref{lem:martint} and Proposition \ref{prp:sup1}, the first term will be $o_p(1)$ and second will converge weakly. Finally, rearranging the nonzero term in $\bbA_3$ gives
\eqnn
&&\rho_0\sum_{i = 1}^n (\hat{\Delta}_i - \Delta_i)(\int_0^1 \Yit dt) (\hat{\Delta}_i - \Delta_i)\\
&=& \sqrt{n}(\hat{\alpha} - \alpha_T)^\top \Big\{\frac{\rho_0}{n} \sum_{i = 1}^n (\int_0^1\Yit dt) \tX_i \tX_i^\top(g^{-1})'^2(\tX_i^\top \alpha_T)\big\} \sqrt{n}(\hat{\alpha} - \alpha_T) + o_p(1),
\een
together with the nonzero entry in $\bbA_4$,
\eqnn
\sqrt{n}(\hat{\alpha} - \alpha_T)^\top\Big\{\frac{\rho_0}{n}\sum_{i = 1}^n \int_0^1 \frac{\sum_{l = 1}^n \tX_l Y_l(t)}{\sum_{l = 1}^n Y_l(t)}\Yit \tX_i^\top (g^{-1})'^2(\tX_i^\top \alpha_T)dt\Big\} \sqrt{n}(\hat{\alpha} - \alpha_T) + o(1).
\een
The boundedness of these two terms in probability follows from similar statement.
\end{proof}

\begin{proof}[Proof of Theorem \ref{thm:regconsisor}]
With simple algebra, we have
\eqnn
U(\bbeta) &=& \frac 1n \sum_{i = 1}^n \int_0^1 (Z_i - \bar{Z}(t))(d\Nit - \Yit \bbeta^\top Z_i dt)\\
&=&\frac 1n \sum_{i = 1}^n \int_0^1 (Z_i - \bar{Z}(t))(d\Mit + \Yit d\Lambda_0(t) + \Yit\bbeta_T^\top Z_{0i} dt - \Yit\bbeta^\top Z_i dt)\\
&=&\frac 1n \sum_{i = 1}^n \int_0^1 (Z_i - \bar{Z}(t))(d\Mit + \Yit \bbeta_T^\top Z_{0i}dt - \Yit\bbeta^\top Z_i dt).
\een
Plugging in the true parameter $\bbeta_T$, we can decompose it into
\eqnn
U(\bbeta_T) &=& \frac 1n \sum_{i = 1}^n \int_0^1 (Z_i - \bar{Z}(t))d\Mit + \frac{\rho_0}{n} \sum_{i = 1}^n \big\{\int_0^1 (Z_i - \bar{Z}(t))\Yit dt\big\}(\Delta_i - \hat{\Delta}_i)\\
&=& \frac{1}{n} \sum_{i = 1}^n \int_0^1 \Big(Z_{0i}- \frac{\sone(t)}{\szero(t)}\Big)d\Mit\\
&&+ \frac{\rho_0}{n} \Big[\sum_{i = 1}^n \Big\{\int_0^1 \Big(Z_{0i} - \frac{\sone(t)}{\szero(t)}\Big)\Yit dt\Big\}\tX_i^\top(g^{-1})'(\tX_i^\top \alpha_T)\Big](\hat{\alpha} - \alpha_T)\\
&&+ \frac 1n \bbA_1 +\frac 1n \bbA_2 + \frac 1n \bbA_3 +\frac 1n \bbA_4 + o_p(1).
\een
The first term is a sample mean of $n$ i.i.d. martingale integrals of predictable functions, thus by Law of Large Numbers is $o_p(1)$. Following from Condition \ref{ass:firststep} and Law of Large Numbers again, the second term will tend to zero in probability. The remaining terms are all $o_p(1)$ by Lemma \ref{lem:regular}. Hence $U(\bbeta_T) = o_p(1)$.
%

Meanwhile, we can also express 
\eqnn
U(\bbeta_T) = U(\bbeta_T) - U(\hat{\bbeta}) = \big\{\frac{1}{n} \sum_{i = 1}^n \int_0^1 (Z_i - \bar{Z}(t))\Yit Z_i dt\big\}(\hat{\bbeta}- \bbeta_T).
\een
Solving for $\hat{\bbeta} - \bbeta_T$ we get,
\eqnn
\hat{\bbeta} - \bbeta_T &=& \Big\{\frac 1n\sum_{i = 1}^n \int_0^1 \Yit(Z_i - \bar{Z}(t))^{\otimes 2}dt\Big\}^{-1}U(\bbeta_T)\\
&=& \Big\{\frac 1n\sum_{i = 1}^n \int_0^1 \Yit\Big(Z_{0i} - \frac{\sone(t)}{\szero(t)}\Big)^{\otimes 2}dt + o_p(1)\Big\}^{-1}U(\bbeta_T).
\een
Condition \ref{ass:postivesurv} and $U(\bbeta_T) = o_p(1)$ imply the consistency of $\hat{\bbeta}$ from Slutsky's theorem. 
\end{proof}

\begin{proof}[Proof of Theorem \ref{thm:regnormal}]
Observe that
\eqnn
&&\sqrt{n}U(\bbeta_T) \\
&=& \frac{1}{\sqrt{n}} \sum_{i = 1}^n \int_0^1 \Big(Z_{0i} - \frac{\sone(t)}{\szero(t)}\Big)d\Mit\\
&&+ \frac{\rho_0}{n} \Big[\sum_{i = 1}^n \Big\{\int_0^1 \Big(Z_{0i} - \frac{\sone(t)}{\szero(t)}\Big)\Yit dt\Big\}\tX_i^\top(g^{-1})'(\tX_i^\top \alpha_T)\Big]\sqrt{n}(\hat{\alpha} - \alpha_T)\\
&&+ \frac{1}{\sqrt{n}} \bbA_1 + \frac{1}{\sqrt{n}} \bbA_2 + \frac{1}{\sqrt{n}} \bbA_3 + \frac{1}{\sqrt{n}} \bbA_4 + o_p(1)\\
&=& \frac{1}{\sqrt{n}} \sum_{i = 1}^n \int_0^1\Big (Z_{0i} - \frac{\sone(t)}{\szero(t)}\Big) d\Mit + \Psi\sqrt{n}(\hat{\alpha} - \alpha_T) + o_p(1),
\een
where 
\eqn
\Psi = \rho_0 \E \Big\{\int_0^1 \Big(Z_{01} - \frac{\sone(t)}{\szero(t)}\Big)Y_1(t) \tX_1^\top (g^{-1})'(\tX_1^\top \alpha_T)dt\Big\}.
\ee
With Condition \ref{ass:bound}, \ref{ass:firststep} and \ref{ass:baselinebound}, $\sqrt{n}U(\bbeta_T)$ can be written into a sum of i.i.d. random variables with mean zero and finite second moments. Thus the Multivariate Central Limit Theorem together with the Slutsky's Theorem proves that our estimator is asymptotically normally distributed with mean zero.

Lastly, we can compute the covariance matrix of this asymptotic normal distribution. Notice that the asymptotic covariance matrix of $\sqrt{n}U(\bbeta_T)$ is by Condtion \ref{ass:firststep},
\eqn\label{eq:abc}
&&E\Big\{\Big(\int_0^1\Big (Z_{01} - \frac{\sone(t)}{\szero(t)}\Big) dM_1(t) + \Psi \mathcal{I}^{-1}T_1\Big)^{\otimes 2}\Big\} \nonumber \\
&=& \E\Big\{\int_0^1\Big (Z_{01} - \frac{\sone(t)}{\szero(t)}\Big) dM_1(t)\Big\}^{\otimes 2} \nonumber \\\
&&+ \Psi \mathcal{I}^{-1}\E\{T_1^{\otimes 2}\}(\mathcal{I}^{-1})^\top\Psi^\top \nonumber \\\
&&+ 2\E\Big\{\int_0^1\Big (Z_{01} - \frac{\sone(t)}{\szero(t)}\Big) dM_1(t) T_1^\top\Big\}\Big](\mathcal{I}^{-1})^\top\Psi^\top \nonumber \\\
&=&\Sigma_1 + \Sigma_2 + 2\lim_{n \to \infty}\E\Big\{\sum_{i = 1}^n \int_0^1\Big (Z_{0i} - \frac{\sone(t)}{\szero(t)}\Big) d\Mit(\hat{\alpha} - \alpha_T)^\top\Big\} \Psi^\top \nonumber \\\
&=& \E\Big\{\int_0^1\Big (Z_{01} - \frac{\sone(t)}{\szero(t)}\Big)^{\otimes 2} d[M_1(t), M_1(t)]\Big\} \nonumber \\\
&&+ \Psi \mathcal{I}^{-1}\E(T_1^{\otimes 2})(\mathcal{I}^{-1})^\top\Psi^\top \nonumber \\\
&&+ 2\E\Big\{\int_0^1\Big (Z_{01} - \frac{\sone(t)}{\szero(t)}\Big) dM_1(t) T_1^\top\Big\}\Big](\mathcal{I}^{-1})^\top\Psi^\top \nonumber \\\
&=&\Sigma_1 + \Sigma_2 + 2\E\Big\{\int_0^1\Big (Z_{01} - \frac{\sone(t)}{\szero(t)}\Big) dM_1(t) T_1^\top\Big\}\Big](\mathcal{I}^{-1})^\top\Psi^\top,
\ee
where
\eqn
\Sigma_1 &=& \E\Big\{\int_0^1\Big (Z_{01} - \frac{\sone(t)}{\szero(t)}\Big)^2 d\langle M_1, M_1\rangle(t)\Big\},\\
\Sigma_2 &=& \Psi \Theta \Psi^\top,
\ee
and $\Theta$ is the variance-covariance matrix of the first step estimator.
Note that the last term in \eqref{eq:abc} is zero since it is still a martingale integral. Thus the asymptotic variance of our estimator $\sqrt{n}(\hat{\bbeta} - \bbeta) = \Omega^{-1}(\sqrt{n}U(\bbeta_T) + o_p(1))$, where 
 $\Omega$ is given in Condtion \ref{ass:postivesurv}, is clearly $\Omega^{-1}(\Sigma_1 + \Sigma_2)\Omega^{-1}$. To consistently estimate the variance, we just use their corresponding empirical parts. Specifically, $\Sigma_1$ can be instead estimated in the form of \eqref{survsigma2hat} by Lemma \ref{lem:variation}.
\end{proof}


\begin{prp}\label{prp:regbaseline}
Under \eqref{eq:survivalhazards}, \eqref{eq:errormodel}, \eqref{eq:firststep} and Conditions \ref{ass:bound},\ref{ass:asyregsurv},\ref{ass:postivesurv}, the cumulative baseline hazard function estimator defined in \eqref{regbaseline} converges in probability to the true value $\Lambda_{0T} $ of $\Lambda_{0}(\cdot) = \int_0^\cdot \bar{\lambda}_{0}(t) dt$ uniformly in $t \in [0,1]$, where $\bar{\lambda}_{0}(t) $ is the baseline hazard in equation \eqref{eq:truesurvivalmodel}, and the process $\sqrt{n}\{\hat{\Lambda}_0(\cdot) - \Lambda_{0T}(\cdot)\}$ converges weakly to a zero-mean Gaussian process whose covariance function at $(t, s)$, where $0 \le s \le t$, can be consistently estimated by
\eqn
&&n\sum_{i = 1}^n\int_0^s \frac{1}{(\sum_{j = 1}^n Y_j(u))^2}dN_i(u) + \hat{C}^\top(t) \hat{\Omega}^{-1}(\hat{\Sigma}_1 + \hat{\Sigma}_2)\hat{\Omega}^{-1} \hat{C}(s) + \hat{E}^\top(t)\hat{\Theta}\hat{E}(s)\nonumber\\
&&- \hat{C}^\top(t) \hat{\Omega}^{-1}\hat{D}(s) - \hat{C}^\top(s) \hat{\Omega}^{-1}\hat{D}(t),
\ee
where
\eqn
\hat{C}(t) &=& \int_0^t \bar{Z}(u)du,\\
\hat{D}(t) &=& \sum_{i = 1}^n\int_0^t \frac{ Z_i - \bar{Z}(u)}{\sum_{j = 1}^n Y_j(u)}dN_i(u),\\
\hat{E}(t) &=& \hat{\rho}_0\sum_{i = 1}^n\tX_i (g^{-1})'(\tX_i^\top \hat{\alpha})\int_0^t\frac{Y_i(u) }{\sum_{j = 1}^n Y_j(u)}du.
\ee
\end{prp}
\begin{proof}[Proof of Proposition \ref{prp:regbaseline}]
\eqnn
U_1(\Lambda_0(t), \bbeta, t) = \frac{1}{n}\sum_{i = 1}^n M_i(\Lambda_0(t), \bbeta, t) = \frac{1}{n}\sum_{i = 1}^n \int_0^t (dN_i(u) - Y_i(u)d\Lambda_0(u) - Y_i(t)\bbeta^\top Z_i du),
\een
note that $U_1(\hat{\Lambda}_0(t), \hat{\bbeta}, t) \equiv 0$. Thus we have
\eqnn
U_1(\Lambda_{0T}(t), \hat{\bbeta}, t) = U_1(\Lambda_{0T}(t), \hat{\bbeta}, t) - U_1(\hat{\Lambda}_0(t), \hat{\bbeta}, t) 
= \frac{1}{n} \int_0^t \sum_{i = 1}^n Y_i(u)d(\hat{\Lambda}_0(u) - \Lambda_{0T}(u)).
\een
Meanwhile, observe that 
\eqnn
&&U_1(\Lambda_{0T}(t), \hat{\bbeta}, t) \\
&=& \frac{1}{n}\sum_{i = 1}^n \int_0^t (dN_i(u) - Y_i(u)d\Lambda_{0T}(u) - Y_i(t)\hat{\bbeta}^\top Z_i du)\\
&=& \frac{1}{n}\sum_{i = 1}^n \int_0^t \{dM_i(u) - Y_i(u)(\hat{\bbeta}^\top Z_i - \bbeta^\top_T Z_{0i}) du\}\\
&=&\frac{1}{n}\sum_{i = 1}^n \int_0^t \big\{dM_i(u) - Y_i(u)(\hat{\bbeta} - \bbeta_T)^\top Z_{0i} du - Y_i(u)\hat{\rho}_0(\hat{\Delta}_i - \Delta_i)du\big\}\\
&=&\frac{1}{n}\sum_{i = 1}^n \int_0^t \big\{dM_i(u) - Y_i(u)(\hat{\bbeta} - \bbeta_T)^\top Z_{0i} du \\
&&+ Y_i(u)\hat{\rho}_0 \tX_i^\top (g^{-1})'(\tX_i^\top \alpha_T)(\hat{\alpha} - \alpha_T)du\big\} + o_p(1).
\een
where we shall emphasize that the $o_p(1)$ here is uniformly over $t \in [0,1]$, which can be easily proved with the help of Lemma \ref{lem:martint}. These together gives us that
\eqnn
&&\hat{\Lambda}_0(t) - \Lambda_{0T}(t) \\
&=& \sum_{i = 1}^n \int_0^t \frac{1}{\sum_{j = 1}^n Y_j(u)}dM_i(u) - \Big\{\sum_{i = 1}^n \int_0^t \frac{Y_i(u) Z_i^\top}{\sum_{j = 1}^n Y_j(u)} du\Big\} (\hat{\bbeta} - \bbeta_T) \\
&&- \Big\{\hat{\rho}_0\sum_{i = 1}^n \int_0^t \frac{Y_i(u) \tX_i^\top(g^{-1})'(\tX_i^\top \alpha_T) }{\sum_{j = 1}^n Y_j(u)}du\Big\}(\hat{\alpha} - \alpha_T).
\een
It is easy to establish the uniform convergence in time $t$ by checking for each term. For the weak convergence, observe that
\eqnn
&&\sqrt{n}(\hat{\Lambda}_0(t) - \Lambda_{0T}(t)) \\
&=& \sqrt{n}\sum_{i = 1}^n \int_0^t \frac{dM_i(u) - Y_i(u)(\hat{\bbeta} - \bbeta_T)^\top Z_i du- Y_i(u)\hat{\rho}_0 \tX_i^\top (g^{-1})'(\tX_i^\top \alpha_T)(\hat{\alpha} - \alpha_T)du}{\sum_{j = 1}^n Y_j(u)}\\
&=&\frac{1}{\sqrt{n}}\sum_{i = 1}^n \int_0^t \frac{1}{s^{(0)}(u)}dM_i(u) - \Big(\int_0^t \frac{s^{(1)}(u)}{s^{(0)}(u)}du\Big)^\top \sqrt{n}(\hat{\bbeta} - \bbeta_T) \\
&&- \Big(\int_0^t \frac{\gamma(u)}{s^{(0)}(u)}du\Big)^\top \sqrt{n}(\hat{\alpha} - \alpha_T) + o_p(1),
\een
where $\gamma(u) = \lim_{n \to \infty}\frac{\hat{\rho}_0}{n} \sum_{i = 1}^nY_i(u) \tX_i^\top (g^{-1})'(\tX_i^\top \alpha_T)$. With Condition \ref{ass:asyregsurv}, Martingale Central Limit Theorem 5.1.1 in \citet{fleming2011counting} can be employed to show that $\sqrt{n}(\hat{\Lambda}_0(t) - \Lambda_{0T}(t))$ converges a mean zero Gaussian process with respect to the Skorohod topology with covariance function at $0 \le s \le t$ obtained with the help of pointwise Multivariate Central Limit Theorem, 
\eqnn
&&\E\Big[\Big\{\int_0^t \frac{1}{s^{(0)}(u)}dM_1(u) - \Big(\int_0^t \frac{s^{(1)}(u)}{s^{(0)}(u)}du\Big)^\top \Omega^{-1}\int_0^1\Big (Z_{01} - \frac{\sone(u)}{\szero(u)}\Big) dM_1(u) \\
&& - \Big(\int_0^t \frac{\gamma(u)}{s^{(0)}(u)}du\Big)^\top \Psi \mathcal{I}^{-1} T_1\Big\}\Big\{\int_0^s \frac{1}{s^{(0)}(u)}dM_1(u) \\
&&- \Big(\int_0^s \frac{s^{(1)}(u)}{s^{(0)}(u)}du\Big)^\top \Omega^{-1}\int_0^1\Big (Z_{01} - \frac{\sone(t)}{\szero(t)}\Big) dM_1(t) - \Big(\int_0^s \frac{\gamma(u)}{s^{(0)}(u)}du\Big)^\top \Psi \mathcal{I}^{-1} T_1\Big\}\Big]\\
&=&\E\Big\{\int_0^s \frac{1}{(s^{(0)}(u))^2}d\langle M_1, M_1\rangle(u)\Big\} \\
&&+ \Big(\int_0^t \frac{s^{(1)}(u)}{s^{(0)}(u)}du\Big)^\top \Omega^{-1}\E\Big\{\int_0^1\Big (Z_{01} - \frac{\sone(u)}{\szero(u)}\Big)^{\otimes 2} d\langle M_1, M_1\rangle(u)\Big\}\Omega^{-1}\Big(\int_0^s \frac{s^{(1)}(u)}{s^{(0)}(u)}du\Big)\\
&&+ \Big(\int_0^t \frac{\gamma(u)}{s^{(0)}(u)}du\Big)^\top \Psi \E(T_1^{\otimes 2}) \Psi^\top \Big(\int_0^s \frac{\gamma(u)}{s^{(0)}(u)}du\Big)\\
&&- \E\Big\{\Big(\int_0^t \frac{s^{(1)}(u)}{s^{(0)}(u)}du\Big)^\top \Omega^{-1}\int_0^t\frac{Z_{01} - \frac{\sone(u)}{\szero(u)}}{\szero(u)} d\langle M_1, M_1\rangle(u)\Big\} \\
&&- \E\Big\{\Big(\int_0^s \frac{s^{(1)}(u)}{s^{(0)}(u)}du\Big)^\top \Omega^{-1}\int_0^t\frac{Z_{01} - \frac{\sone(u)}{\szero(u)}}{\szero(u)} d\langle M_1, M_1\rangle(u)\Big\} \\
&&- 2\E\Big\{\int_0^t \frac{1}{s^{(0)}(u)}dM_1(u)T_1^\top\Big\} (\mathcal{I}^{-1})^\top \Psi^\top\Big(\int_0^s \frac{\gamma(u)}{s^{(0)}(u)}du\Big) \\
&&+ 2\E\Big\{\Big(\int_0^t \frac{s^{(1)}(u)}{s^{(0)}(u)}du\Big)^\top \Omega^{-1}\int_0^1\Big (Z_{01} - \frac{\sone(u)}{\szero(u)}\Big) dM_1(u) T_1^\top\Big\} (\mathcal{I}^{-1})^\top \Psi^\top\Big(\int_0^s \frac{\gamma(u)}{s^{(0)}(u)}du\Big)
\een
where the last two terms can be proved to be zero by the similar approach as in the proof of Theorem \ref{thm:regnormal}. The remaining terms can be estimated by their empirical parts with the help of Lemma \ref{lem:variation}.
\end{proof}

\begin{proof}[Proof of Theorem \ref{thm:regsurvival}]
The proof is similarly to the proof of Theorem \ref{thm:compsurvival}, so omitted.
\end{proof}

\subsection{Competing risks}
\begin{prp}\label{prp:compcausal}
Assuming \eqref{eq:comphazards} and \eqref{eq:errormodel}, we have
\eqn
\bar{\lambda}_{10}(t|\xe,\xi,\xo) = \bar{\lambda}_{10}(t) + \be\xe + \bar{\bo}^\top\xo + \rho_0\Delta.
\ee
where $\bar{\lambda}_{10}(t) = \lambda_{10}(t) - \frac{\partial}{\partial t}\log\big[\E\big\{\exp\big(-\epsilon t \big)\big\}\big]$.
\end{prp}
\begin{proof}
The proof is completely parallel to that of Proposition \ref{prp:survivalcausal} with $1-F_1$ in place of $S$.
\end{proof}
In the following we give the regularity conditions and additional notation for  the Theorems in Section \ref{sec:competing}. 

Define 
$S^{(j)}(t) = \sum_{i = 1}^n Y_i(t)\hwit Z_i^{\otimes j} / n $ for $j = 0$ and 1. Then there exists a scalar and vector function $\szero(t)$ and $\sone(t)$, respectively, defined on $[0,1]$ such that for $j = 0,1$,
\eqn
\sup_{t \in [0,1]}||S^{(j)}(t) - s^{(j)}(t)|| \xrightarrow{\cP} 0.
\ee
The above is guaranteed by Gilvenko-Cantelli theorem, Condition \ref{ass:firststep} and  consistency of the Kaplan-Meier estimator.
\begin{ass}[Asymptotic regularity conditions]\label{ass:asyregcomp}
$\szero(t)$ is bounded away from zero.
\end{ass}

\begin{ass}\label{ass:postivecomp}
There exists a positive definite matrix $\Omega$ such that 
\eqn
\frac 1n\sum_{i = 1}^n \int_0^1 \twit\Yit\Big(Z_{0i} - \frac{\sone(t)}{\szero(t)}\Big)^{\otimes 2}dt \xrightarrow{a.s.} \boldsymbol{\Omega}.
\ee
\end{ass}
Although $w(t)$ is not adaptable to $\cF^1_{t-}$, we still have the following result:
\begin{lem}\label{lem:notmart}
For any $H(T,t)$  adaptable to $\cF_{t-}$, we have
\eqn
\E \big\{\int_0^1 H(T, t)w(t)dM^1(t)\big\} = 0.
\ee
\end{lem}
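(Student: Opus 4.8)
The plan is to exploit the independence of the censoring time $C$ from the complete data. The obstacle is precisely that the weight $w(t) = r(t)\,G(t)/G(T\wedge t)$, with $r(t) = \mathbbm{1}_{\{C \ge T\wedge t\}}$ and $G(t) = \P(C \ge t)$, is \emph{not} predictable with respect to the complete-data filtration $\cF^1$: it depends on the censoring indicator $r(t)$. Hence the zero-mean property of a martingale integral (Lemma \ref{lem:2}) cannot be applied to $\int_0^1 H(T,t)w(t)\,dM^1(t)$ directly. The key idea is that, after averaging over the censoring mechanism, $w(t)$ collapses to a deterministic factor, which restores predictability and lets the standard martingale argument go through.

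First I would condition on the complete data, i.e.\ on the $\sigma$-field $\cF^1(\infty) = \bigvee_{t} \cF^1(t)$ generated by the failure time $T$, the cause $J$, and the covariates $\boldsymbol X$. Given $\cF^1(\infty)$, the martingale path $M^1(\cdot)$ and the integrand $H(T,\cdot)$ are fixed and the only remaining randomness in $w(t)$ enters through $r(t)$. Using that $C$ is independent of $(T,J,\boldsymbol X)$, and that $T\wedge t$ is $\cF^1(\infty)$-measurable,
\begin{equation*}
\E\big[\,r(t)\,\big|\,\cF^1(\infty)\big] = \P\big(C \ge T\wedge t \,\big|\, T\big) = G(T\wedge t),
\end{equation*}
so that
\begin{equation*}
\E\big[\,w(t)\,\big|\,\cF^1(\infty)\big] = \frac{G(t)}{G(T\wedge t)}\,\E\big[\,r(t)\,\big|\,\cF^1(\infty)\big] = G(t).
\end{equation*}

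Next, by Fubini's theorem I would pass the conditional expectation inside the stochastic integral,
\begin{equation*}
\E\Big\{\int_0^1 H(T,t)\,w(t)\,dM^1(t)\,\Big|\,\cF^1(\infty)\Big\} = \int_0^1 H(T,t)\,G(t)\,dM^1(t),
\end{equation*}
and then take the outer expectation; by the tower property the claim reduces to showing $\E\{\int_0^1 H(T,t)\,G(t)\,dM^1(t)\} = 0$. Since $G(\cdot)$ is a deterministic bounded function and $H(T,\cdot)$ is $\cF^1$-predictable, the product $H(T,t)\,G(t)$ is $\cF^1$-predictable, and because $M^1$ is an $\cF^1$-martingale this is a mean-zero martingale integral, which finishes the proof.

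The step requiring the most care is the Fubini interchange and the accompanying integrability bookkeeping: one must check that $H(T,t)\,w(t)$ is integrable with respect to the product of the censoring law and the total-variation measure of $M^1$. Under Condition \ref{ass:bound} the integrand $H$ is bounded, and provided $G$ is bounded away from zero on the observation interval $[0,1]$ the weights stay bounded, while $dM^1 = dN - Y(t)\lambda_1(t)\,dt$ has integrable total variation under the additive model with an integrable baseline. Once this bound is secured, the probabilistic content — namely $\E[w(t)\mid\cF^1(\infty)] = G(t)$ from independent censoring — is short, and the reduction to a predictable martingale integral is immediate.
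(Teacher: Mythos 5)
Your proof is correct, and its probabilistic core is the same as the paper's: independence of the censoring time from the complete data makes the conditional expectation of the weight equal to the deterministic function $G(t)$, which restores predictability and reduces the claim to the mean-zero martingale integral $\int_0^1 H(T,t)G(t)\,dM^1(t)$. The two arguments differ only in how the non-predictable integral is handled before that reduction. The paper exploits the bounded variation of $M^1$ to define the integral pathwise as a limit of Riemann sums, applies the tower property term by term---conditioning on $\cF^1_{t_{i-1}}$ and then on $\cF^1_{t_i}$ to replace $w(t_i)$ by $G(t_i)$---and passes to the limit by dominated convergence. You instead condition once on the terminal $\sigma$-field $\cF^1(\infty)$ and interchange the conditional expectation with the pathwise integral by Fubini, using $\E[w(t)\mid\cF^1(\infty)]=G(t)$. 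Your route is somewhat cleaner (one conditioning and one interchange rather than a discretization-plus-limit argument), and your integrability bookkeeping is even easier than you suggest: there is no need to assume $G$ bounded away from zero, since $w(t)=r(t)G(t)/G(T\wedge t)\le 1$ automatically because $G$ is nonincreasing and $T\wedge t\le t$, while the total variation of $M^1$ on $[0,1]$ is integrable under the model. The paper's route avoids any appeal to a conditional Fubini theorem at the price of the limiting argument; both require exactly the same marginal independence assumption on $C$.
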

\begin{proof}
The martingale $M^1(t)$ associated with the counting process $N(t)$ has bounded variation. So we can define the integral pathwisely, i.e., 
\eqn
\int_0^1 H(T, t)w(t)dM^1(t) = \lim_{\max\{t_i - t_{i - 1}\} \to 0}\sum_{i = 1}^n H(T, t_i)w(t_i)(M^1(t_i) - M^1(t_{i - 1}))
\ee
where $\{0 = t_0 < t_1 <...< t_{n - 1} < t_n = 1\}$ is a partition on $[0,1]$. Notice that
\eqnn
\E \big\{H(T, t_i)w(t_i)(M^1(t_i) - M^1(t_{i - 1})) \big\} = \E \big\{H(T, t_i)\E\big(w(t_i)(M^1(t_i) - M^1(t_{i - 1}))\big|\cF^1_{t_{i - 1}}\big) \big\},
\een
where
\eqnn
&&\E\big\{w(t_i)(M^1(t_i) - M^1(t_{i - 1}))\big|\cF^1_{t_{i - 1}}\big\}\\
&=&\E\big\{(M^1(t_i) - M^1(t_{i - 1}))\E\big(w(t_i)\big|\cF^1_{t_i}\big)\big|\cF^1_{t_{i - 1}}\big\}\\
&=&\E\big\{(M^1(t_i) - M^1(t_{i - 1}))\E\big(\mathbbm{1}_{\{C \ge T \wedge t_i\}}\frac{G(t_i)}{G(X \wedge t_i)}\big|\cF_{t_i}\big)\big|\cF^1_{t_{i - 1}}\big\}\\
&=&\E\big\{(M^1(t_i) - M^1(t_{i - 1}))G(t_i)\big|\cF^1_{t_{i - 1}}\big\} = 0.
\een
Thus it suffices to prove the result by dominated convergence theorem, 
\eqnn
\E \big\{\int_0^1 H(T, t)w(t)dM^1(t) \big\}
&=& \E \big\{ \lim_{\max\{t_i - t_{i - 1}\} \to 0}\sum_{i = 1}^n H(T, t_i)w(t_i)(M^1(t_i) - M^1(t_{i - 1}))\big\}\\
&=& \lim_{\max\{t_i - t_{i - 1}\} \to 0}\E \big\{\sum_{i = 1}^n H(T, t_i)w(t_i)(M^1(t_i) - M^1(t_{i - 1}))\big\} \\
&=& 0.
\een
\end{proof}

Define
\eqn
\bbA_1 &=& \frac{1}{\sqrt{n}}\sum_{i = 1}^n \int_0^1 (Z_i - Z_{0i})(\hwit - w_i(t)) d\Mi1t),\\
\bbA_2 &=& \frac{1}{\sqrt{n}}\sum_{i = 1}^n \int_0^1 (Z_i - Z_{0i}) w_i(t) d\Mi1t,\\
\bbA_3 &=& \frac{1}{\sqrt{n}}\sum_{i = 1}^n \int_0^1 (\bar{Z}(t) -\bar{Z}_0(t))(\hwit - w_i(t)) d\Mi1t,\\
\bbA_4 &=& \frac{1}{\sqrt{n}}\sum_{i = 1}^n \int_0^1 (\bar{Z}(t) -\bar{Z}_0(t)) w_i(t) d\Mi1t,\\
\bbA_5 &=& \frac{\rho_0}{\sqrt{n}}\sum_{i = 1}^n \int_0^1 \Big\{\Big(Z_i - \bar{Z}(t)\Big) - \Big(Z_{0i} - \frac{\sone(t)}{\szero(t)}\Big) \Big\} \hwit\Yit (\Delta_i - \hat{\Delta}_i)dt,
\ee

\begin{lem}\label{lem:comppre}
$\bbA_1$, $\bbA_2$, $\bbA_3$, $\bbA_4$, $\bbA_5$ all converge to 0 in probability as $n\rightarrow\infty$.
\end{lem}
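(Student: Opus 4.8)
The plan is to follow the template of Lemma~\ref{lem:regular} from the survival case, exploiting three structural facts. First, since the first two coordinates of $Z_i$ and $Z_{0i}$ coincide, $Z_i - Z_{0i} = [0,0,\hat{\Delta}_i - \Delta_i]^\top$, and by the first-order expansion already used in Lemma~\ref{lem:regular},
\begin{equation*}
\hat{\Delta}_i - \Delta_i = -\tX_i^\top (g^{-1})'(\tX_i^\top\alpha_T)(\hat{\alpha}-\alpha_T) + o_p(1),
\end{equation*}
so that under Condition~\ref{ass:bound} and the $\sqrt n$-consistency in Condition~\ref{ass:firststep} we have $\sup_i|\hat{\Delta}_i - \Delta_i| = O_p(n^{-1/2})$; the same uniform rate holds for $\bar{Z}(t)-\bar{Z}_0(t)$ and for the bracket $(Z_i - \bar{Z}(t)) - (Z_{0i} - \sone(t)/\szero(t))$ appearing in $\bbA_5$. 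Second, uniform ($\sqrt n$-)consistency of the Kaplan--Meier estimator gives $\sup_{i,t}|\hat{w}_i(t) - w_i(t)| = o_p(1)$. Third, the counting-process martingale $M_i^1$ has bounded total variation on $[0,1]$. I would split the five terms into the ``estimated-weight'' terms $\bbA_1,\bbA_3,\bbA_5$, handled by crude uniform bounds, and the ``true-weight'' terms $\bbA_2,\bbA_4$, which require genuine martingale cancellation via Lemma~\ref{lem:notmart}.

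For $\bbA_1$ and $\bbA_3$ I would bound the $i$-th integrand in absolute value by the product of the uniform size of $Z_i - Z_{0i}$ (resp.\ $\bar{Z}-\bar{Z}_0$), the quantity $\sup_t|\hat{w}_i(t)-w_i(t)|$, and the total variation of $M_i^1$. Summing and dividing by $\sqrt n$ gives a bound of the form $\sqrt n\cdot O_p(n^{-1/2})\cdot o_p(1)\cdot O_p(1) = o_p(1)$, where the decisive extra smallness is supplied by $\hat w - w = o_p(1)$, which beats the factor $n/\sqrt n=\sqrt n$ coming from the sum; crucially this route sidesteps the non-predictability of $\hat w-w$ because no martingale cancellation is invoked. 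The term $\bbA_5$ is even more favorable: both the bracket and $\Delta_i - \hat{\Delta}_i$ are $O_p(n^{-1/2})$ uniformly and the weight $\hat{w}_i(t)Y_i(t)$ is bounded, so the same bounding yields $\sqrt n\cdot O_p(n^{-1})\cdot O_p(1) = o_p(1)$; alternatively one can factor out $\hat\alpha-\alpha_T$ twice and recognize $\bbA_5$ as $n^{-1/2}$ times a quadratic form in $\sqrt n(\hat\alpha-\alpha_T)=O_p(1)$.

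The main obstacle is $\bbA_2$ and $\bbA_4$, where the weight is the full-size $w_i(t) = O_p(1)$ rather than the small increment $\hat w_i - w_i$, so the crude bound only yields $O_p(1)$ and genuine cancellation is required. The difficulty is that $w_i(t)$ is \emph{not} predictable with respect to the complete-data filtration $\cF^1(t)$, so $\int(\cdot)\,w_i\,dM_i^1$ is not an ordinary martingale integral; this is exactly what Lemma~\ref{lem:notmart} repairs, giving $\E\{\int_0^1 H(T,t)w(t)\,dM^1(t)\} = 0$ for $\cF$-adapted $H$. Substituting the expansion of $\hat\Delta_i - \Delta_i$ into $\bbA_2$ and pulling the common factor out of the sum, I would write
\begin{equation*}
\bbA_2 = -\Big\{\frac{1}{\sqrt n}\sum_{i=1}^n \int_0^1 \tX_i^\top (g^{-1})'(\tX_i^\top\alpha_T)\,w_i(t)\,dM_i^1(t)\Big\}(\hat\alpha - \alpha_T) + o_p(1).
\end{equation*}
By Lemma~\ref{lem:notmart} each summand has mean zero, so the braced quantity is a normalized sum of i.i.d.\ mean-zero terms with finite variance, hence $O_p(1)$ by the central limit theorem; multiplying by $\hat\alpha - \alpha_T = O_p(n^{-1/2})$ gives $\bbA_2 = o_p(1)$. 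For $\bbA_4$ I would pull the $t$-process $\bar{Z}(t)-\bar{Z}_0(t) = O_p(n^{-1/2})$ out of the $i$-sum and integrate it against the tight process $W_n(t) = n^{-1/2}\sum_i\int_0^t w_i\,dM_i^1$, which converges weakly (again using Lemma~\ref{lem:notmart} for the mean-zero property); the resulting integral is $O_p(n^{-1/2})\cdot O_p(1) = o_p(1)$, made rigorous by integration by parts or by the same first-order expansion as for $\bbA_2$. Assembling the five bounds completes the proof.
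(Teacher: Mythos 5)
Your proof is correct and reaches the paper's conclusion, but for $\bbA_1$ and $\bbA_3$ it takes a genuinely more elementary route. The paper substitutes the Fine--Gray martingale representation of the Kaplan--Meier weights, writing $\hat{w}_i(t)-w_i(t)$ as (approximately) $w_i(t)\mathbbm{1}_{\{T_i^*<t\}}\sum_j\int_{T_i^*}^t\{\sum_k\mathbbm{1}_{\{T_k^*>u\}}\}^{-1}dM_j^c(u)$, interchanges the order of integration so that $\bbA_1$ becomes a sum of integrals against the censoring martingales $M_j^c$, and then invokes Lemma \ref{lem:martint}; your crude bound---$\sqrt n$ times the product of $\sup_i|Z_i-Z_{0i}|=O_p(n^{-1/2})$, $\sup_{i,t}|\hat{w}_i(t)-w_i(t)|=o_p(1)$ and the averaged total variation of the $M_i^1$---avoids that machinery entirely because both factors in the integrand are already small, and it is valid since $M_i^1$ has integrable total variation and the denominators $G(T_i^*\wedge t)$ are bounded away from zero. (The martingale representation is genuinely needed only where $\hat{w}_i-w_i$ multiplies a term that is \emph{not} small, namely in the proofs of Theorems \ref{thm:compcons} and \ref{thm:compnormal}, where it produces the $\hat{\Sigma}_3$ contribution; for this lemma your shortcut loses nothing.) For $\bbA_2$ and $\bbA_5$ you do essentially what the paper does: expand $\hat\Delta_i-\Delta_i$, factor out $\hat\alpha-\alpha_T$, and use Lemma \ref{lem:notmart} together with the LLN/CLT. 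Two small caveats. First, your ``alternative'' for $\bbA_5$ (factoring out $\hat\alpha-\alpha_T$ twice) is not quite right: the bracket contains $\bar Z(t)-\sone(t)/\szero(t)$, whose component $\bar Z_0(t)-\sone(t)/\szero(t)$ is a pure sampling fluctuation not proportional to $\hat\alpha-\alpha_T$, and the paper only establishes $o_p(1)$ (not $O_p(n^{-1/2})$) for it via Glivenko--Cantelli; your primary bound $\sqrt n\cdot o_p(1)\cdot O_p(n^{-1/2})$ covers it regardless, matching the paper's argument. Second, for $\bbA_4$ (which the paper omits as ``laborious'') the ``integration by parts against the tight process $W_n$'' idea alone would not suffice, since $W_n$ has total variation of order $\sqrt n$ and the naive product bound only gives $O_p(1)$; your fallback---the same first-order expansion as for $\bbA_2$, replacing $\bar Z(t)-\bar Z_0(t)$ by $-\cK_n(t)(\hat\alpha-\alpha_T)$ with $\cK_n$ converging uniformly and then applying Lemma \ref{lem:notmart} and the LLN to $n^{-1}\sum_i\int_0^1\cK(t)w_i(t)dM_i^1(t)$---is the one that actually closes the argument and is what the paper intends by ``exactly parallel to the first two.''
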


\begin{proof}
For the first term $\bbA_1$, by the same trick as in \citet{fine1999proportional} on the martingale expression of Kaplan-Meier estimator we have,
\eqnn
&& \frac{1}{\sqrt{n}}\sum_{i = 1}^n \int_0^1 (Z_i - Z_{0i}) (\hwit - w_i(t)) d\Mi1t\\
&=&\frac{1}{\sqrt{n}}\sum_{i = 1}^n \int_0^1 (Z_i - Z_{0i}) \rit \frac{\Gt \mathbbm{1}_{\{T^*_i < t\}}}{G(X_i \wedge t)}\Big(\sum_{j = 1}^n\int_{T^*_i}^t\frac{1}{\sum_{k = 1}^n \mathbbm{1}_{\{T^*_k > u\}}}dM_j^c(u)\Big)d\Mi1t + o_p(1)\\
&=&\frac{1}{\sqrt{n}} \sum_{j = 1}^n \int_0^1 \frac{\sum_{i = 1}^n(Z_i - Z_{0i})w_i(t) d\Mi1t \mathbbm{1}_{\{ T^*_i\leq u \leq t\}}}{\sum_{k = 1}^n \mathbbm{1}_{\{T^*_k > u\}}} dM_j^c(u) + o_p(1),
\een
where $M_j^c (u)$ is a martingale with respect to the censoring filtration
\eqn
\cF^c(u) = \{\mathbbm{1}_{\{X_i \ge t\}} , \mathbbm{1}_{\{X_i \le t, \delta_i = 0\}},\xii, \xoi, J_i, \xei, t \le u, i = 1,...,n\}.
\ee
Because the integrands are adaptable to this filtration after a careful analysis, the first term now becomes a sum of martingale integrals, which by Lemma \ref{lem:martint} is $o_p(1)$. 

Next, the second term $\bbA_2$ becomes,
\eqnn
&&\frac{1}{\sqrt{n}}\sum_{i = 1}^n \int_0^1 \{\hat{\Delta}_i - \Delta_i\}\hwit d\Mi1t\\
&=& \Big\{\frac{1}{n} \sum_{i = 1}^n \int_0^1 \tX_i^\top(g^{-1})'(\tX_i^\top \alpha_T) w_i(t) d\Mi1t\Big\} \sqrt{n}(\hat{\alpha} - \alpha_T) + o_p(1),
\een
by Lemma \ref{lem:notmart} and law of large numbers, it is $o_p(1)$.

Proofs for $\bbA_3$ and $\bbA_4$ are exactly parallel to the first two but more laborious, which we omit here.

For $\bbA_5$, we have
\eqnn
&&\frac{\rho_0}{\sqrt{n}}\sum_{i = 1}^n \int_0^1 \Big\{\Big(Z_i - \bar{Z}(t)\Big)\hwit - \Big(Z_{0i} - \frac{\sone(t)}{\szero(t)}\Big)\hwit \Big\} \Yit (\Delta_i - \hat{\Delta}_i)dt \\
&=& \Big[\frac{\rho_0}{n}\sum_{i = 1}^n \int_0^1 \Big\{\Big(Z_i - \bar{Z}(t)\Big)\hwit - \Big(Z_{0i} - \frac{\sone(t)}{\szero(t)}\Big)\hwit \Big\} \\
&&\Yit \tX_i^\top (g^{-1})'(\tX_i^\top \alpha_T)dt \Big]\sqrt{n}(\hat{\alpha} - \alpha_T),
\een
which will tend to zero in probability by law of large numbers and Gilvenko-Cantelli theorem.
\end{proof}

\begin{proof}[Proof of Theorem \ref{thm:compcons}]
By lemma \ref{lem:comppre}, we can rewrite 
$U(\bbeta_T)$ as
\eqnn
&&\frac 1n \sum_{i = 1}^n \int_0^1 \Big(Z_{0i} - \frac{\sone(t)}{\szero(t)}\Big)w_i(t) d\Mi1t + \frac 1n \sum_{i = 1}^n \int_0^1 \Big(Z_{0i} - \frac{\sone(t)}{\szero(t)}\Big)(\hwit - w_i(t)) d\Mi1t \\
&&+ \frac{\rho_0}{n} \sum_{i = 1}^n \int_0^1 \Big(Z_{0i} - \frac{\sone(t)}{\szero(t)}\Big)w_i(t)\Yit (\Delta_i - \hat{\Delta}_i)dt + o_p(1).
\een
Notice that the first term is not a martingale integral since $w_i(t)$ is not adapted to the complete data filtration $\cF^1(t)$. So partition it into
\eqnn
&&\frac 1n \sum_{i = 1}^n \int_0^1 \Big(Z_{0i} - \frac{\sone(t)}{\szero(t)}\Big)w_i(t) d\Mi1t \\
&=& \frac 1n \sum_{i = 1}^n \int_0^1 \Big(Z_{0i} - \frac{\sone(t)}{\szero(t)}\Big)\mathbbm{1}_{\{C_i \geq t\}} d\Mi1t \\
&&+ \frac 1n \sum_{i = 1}^n \int_0^1 \Big(Z_{0i} - \frac{\sone(t)}{\szero(t)}\Big)\big(w_i(t) - \mathbbm{1}_{\{C_i \geq u\}}\big) d\Mi1t.
\een
The first term tends to zero from the conclusion in censoring complete situation. For the second term, by Fubini-Toneli Theorem and the property of conditional expectation, we have
\eqnn
& &\E \Big\{\int_0^1\Big(Z_{0i} - \frac{\sone(t)}{\szero(t)}\Big)\big(w_i(t) - \mathbbm{1}_{\{C_i \geq t\}}\big) d\Mi1t \Big\} \\
&=& \E \Big[\E\Big\{\int_0^1\Big(Z_{0i} - \frac{\sone(t)}{\szero(t)}\Big)\big(w_i(t) - \mathbbm{1}_{\{C_i \geq t\}}\big) d\Mi1t | Z_i\Big\}\Big]\\
&=& \E \Big[\int_0^1\Big(Z_{0i} - \frac{\sone(t)}{\szero(t)}\Big)\E\Big\{\big(w_i(t) - \mathbbm{1}_{\{C_i \geq t\}}\big) d\Mi1t | Z_i\Big\}\Big],
\een
where 
\eqnn
\E\big\{w_1(t) dM_1^1(t)|Z_1\big\} &=& \E\big\{\mathbbm{1}_{\{C_1 \geq T_1 \wedge t\}}\Gt/G(T_1^* \wedge t) dM_1^1(t) | Z_1\big\}\\
&=& \E\big\{G(t)dM_1^1(t)\E\Big[\mathbbm{1}_{\{C_1 \geq T_1 \wedge t\}}/G(T_1^* \wedge t) |Z_{0`}, T_`, J_i\Big]|Z_1\big\}\\
&=& \E\big\{G(t)dM_1^1(t)| Z_1\big\}\\
&=& \E\big\{\E\Big[\mathbbm{1}_{\{C_1 \geq t\}}|Z_{01}, T_1, J_1\Big]dM_1^1(t) | Z_1\big\}\\
&=& \E\big\{\mathbbm{1}_{\{C_1 \geq t\}}dM_1^1(t) | Z_{01}\big\}.
\een
Hence the second term also has zero expectation. This established the consistency of the first term by Strong Law of Large Numbers. The remaining two terms can be handled by a similar way as we do in the Lemma \ref{lem:comppre}.
\end{proof}

\begin{proof}[Proof of Theorem \ref{thm:compnormal}]
Observe that
\eqnn
\sqrt{n}U(\bbeta_T) 
&=& \frac{1}{\sqrt{n}} \sum_{i = 1}^n \int_0^1\Big(Z_{0i} - \frac{\sone(t)}{\szero(t)}\Big)w_i(t) d\Mi1t \\
&&+ \Big\{\frac{\rho_0}{n} \sum_{i = 1}^n \int_0^1 \Big(Z_{0i} - \frac{\sone(t)}{\szero(t)}\Big)w_i(t)\Yit \tX_i^\top (g^{-1})'(\tX_i^\top \alpha_T)dt \Big\}\sqrt{n}(\hat{\alpha} - \alpha_T)\\
&&+ \frac{1}{\sqrt{n}} \sum_{i = 1}^n \int_0^1 \Big(Z_{0i} - \frac{\sone(t)}{\szero(t)}\Big)(\hwit - w_i(t)) d\Mi1t + o_p(1)\\
&=& \frac{1}{\sqrt{n}} \sum_{i = 1}^n \int_0^1\Big(Z_{0i} + \frac{\sone(t)}{\szero(t)}\Big)w_i(t) d\Mi1t - \Psi\sqrt{n}(\hat{\alpha} - \alpha_T)\\
&&+ \frac{1}{\sqrt{n}} \sum_{i = 1}^n \int_0^1 \frac{q(t)}{\pi(t)} dM_i^c(t) + o_p(1),
\een
where
\eqnn
\Psi &=& \rho_0 \E\Big\{\int_0^1 \Big(Z_{01} - \frac{\sone(t)}{\szero(t)}\Big)w_1(t)Y_1(t) \tX_1^\top (g^{-1})'(\tX_1^\top \alpha_T)dt\Big\}, \\
q(t) &=& - \E\Big\{\int_0^1 \mathbbm{1}_{\{T^*_1 < t \le u\}}w_1(u)\Big(Z_{01} - \frac{\sone(u)}{\szero(u)}\Big) dM_1^1(u)\Big\},\\
\pi(t) &=& \P(T^*_1 \ge t).
\een
Therefore now $\sqrt{n}U(\bbeta_T)$ has been written into a sum of i.i.d. random variables with mean zero and finite second moments. By the Multivariate Central Limit Theorem and Slutsky's theorem, our estimator is asymptotically normal with mean zero and covariance matrix $\Omega^{-1}(\Sigma_1 + \Sigma_2 + \Sigma_3)\Omega^{-1}$, 
where the cross terms are zero by similar arguments as in the proof of Theorem \ref{thm:regnormal}; here the cross terms are  between the event martingale and the censoring martingale. Note that since we assume that the event distribution and the censoring distribution are marginally independent, then
\eqnn
&&\E\Big\{\int_0^1\Big (Z_{01} - \frac{\sone(t)}{\szero(t)}\Big)w_1(t) dM_1^1(t) \int_0^1 \frac{q(t)}{\pi(t)} dM_1^c(t)\Big\}\\
&=&\E\Big\{\int_0^1\Big(Z_{01} - \frac{\sone(t)}{\szero(t)}\Big)w_1(t) dM_1^1(t)\Big\} \E\Big\{\int_0^1 \frac{q(t)}{\pi(t)} dM_1^c(t)\Big\}= 0.
\een
\end{proof}

\begin{prp}\label{prp:compbaseline}
Under \eqref{eq:comphazards}, \eqref{eq:errormodel}, \eqref{eq:firststep} on the subdistribution hazard function and assumptions \ref{ass:bound},\ref{ass:asyregcomp},\ref{ass:postivecomp}, the baseline hazard function estimator defined in \eqref{compbaseline} converges to the baseline hazard function uniformly after introducing the residual term and the process $\sqrt{n}\{\hat{\Lambda}_{10}(\cdot) - \Lambda_{10T}(\cdot)\}$ converges weakly to a zero-mean Gaussian process whose covariance function at $(t, s)$, where $0 \le s \le t$, can be consistently estimated by
\eqn
&&\int_0^s \frac{n\sum_{i = 1}^n \hat{w}_i(u) dN_i(u)}{(\sum_{j = 1}^n \hat{w}_j(u) Y_j(u))^2} + \hat{C}^\top(t) \hat{\Omega}^{-1}(\hat{\Sigma}_1 + \hat{\Sigma}_2 + \hat{\Sigma}_3)\hat{\Omega}^{-1} \hat{C}(s) + \hat{E}^\top(t)\hat{\Theta}\hat{E}(s)\nonumber\\
&&+ n\sum_{i = 1}^n\int_0^1 \frac{\hat{q}_t(u)\hat{q}_s(u)}{\hat{\pi}^2(u)}dN_i^c(u) - \hat{C}^\top(t) \hat{\Omega}^{-1}\hat{D}(s) - \hat{C}^\top(s) \hat{\Omega}^{-1}\hat{D}(t),
\ee
where
\eqn
\hat{C}(t) = \int_0^t \bar{Z}(u) du.
\ee
\end{prp}
\begin{proof}[Proof of Proposition \ref{prp:compbaseline}]
Similarly in the proof of Proposition \ref{prp:regbaseline} in the appendix, we will turn to another score function,
\eqnn
U_1(\Lambda_0(t), \bbeta, t) = \frac{1}{n}\sum_{i = 1}^n \int_0^t \hat{w}_i(u) \{dN_i(u) - Y_i(u)d\Lambda_0(u) - Y_i(t)\bbeta^\top Z_i du\},
\een
note that $U_1(\hat{\Lambda}_0(t), \hat{\bbeta}, t) \equiv 0$. Thus we have
\eqn
U_1(\Lambda_{0T}(t), \hat{\bbeta}, t) = U_1(\Lambda_{0T}(t), \hat{\bbeta}, t) - U_1(\hat{\Lambda}_0(t), \hat{\bbeta}, t) 
= \frac{1}{n} \int_0^t \sum_{i = 1}^n \hat{w}_i(u)Y_i(u)d(\hat{\Lambda}_0(u) - \Lambda_{0T}(u)).
\ee
In the meantime, observe that 
\eqnn
&&U_1(\Lambda_{0T}(t), \hat{\bbeta}, t) \\
&=& \frac{1}{n}\sum_{i = 1}^n \int_0^t \hat{w}_i(u)\{dN_i(u) - Y_i(u)d\Lambda_{0T}(u) - Y_i(t)\hat{\bbeta}^\top Z_i du\}\\
&=& \frac{1}{n}\sum_{i = 1}^n \int_0^t \hat{w}_i(u)\{dM_i(u) - Y_i(u)(\hat{\bbeta}^\top Z_i - \bbeta^\top_T Z_{0i}) du\}\\
&=&\frac{1}{n}\sum_{i = 1}^n \int_0^t \hat{w}_i(u)\{dM_i(u) - Y_i(u)(\hat{\bbeta} - \bbeta_T)^\top Z_i du - Y_i(u)\hat{\rho}_0(\hat{\Delta}_i - \Delta_i)du\}\\
&=&\frac{1}{n}\sum_{i = 1}^n \int_0^t \hat{w}_i(u)\{dM_i(u) - Y_i(u)(\hat{\bbeta} - \bbeta_T)^\top Z_{0i} du - Y_i(u)\hat{\rho}_0 \tX_i^\top (g^{-1})'(\tX_i^\top \alpha_T)(\hat{\alpha} - \alpha_T)du\}.
\een
These together gives us that
\eqnn
&&\hat{\Lambda}_0(t) - \Lambda_{0T}(t) \\
&=& \sum_{i = 1}^n \int_0^t \frac{\hat{w}_i(u)}{\sum_{j = 1}^n \hat{w}_j(u)Y_j(u)}dM^1_i(u) - \Big\{\sum_{i = 1}^n \int_0^t \frac{\hat{w}_i(u)Y_i(u) Z_{0i}^\top}{\sum_{j = 1}^n \hat{w}_j(u)Y_j(u)} du\Big\}(\hat{\bbeta} - \bbeta_T)\\
&&- \Big\{\hat{\rho}_0\sum_{i = 1}^n \int_0^t \frac{\hat{w}_i(u)Y_i(u) \tX_i^\top (g^{-1})'(\tX_i^\top \alpha_T)}{\sum_{j = 1}^n \hat{w}_j(u)Y_j(u)}du\Big\}(\hat{\alpha} - \alpha_T).
\een
Hence the uniform convergence can be established by an application of Gilvenko-Cantelli theorem. For the weak convergence, we just
need to check the variance-covariance function of $\sqrt{n}(\hat{\Lambda}_0(t) - \Lambda_{0T}(t))$. 
First note that
\eqnn
&&\sqrt{n}(\hat{\Lambda}_0(t) - \Lambda_{0T}(t)) \\
&=& \sqrt{n}\sum_{i = 1}^n \int_0^t \frac{w_i(u)}{\sum_{j = 1}^n \hat{w}_j(u)Y_j(u)}dM^1_i(u) + \sqrt{n}\sum_{i = 1}^n \int_0^t \frac{\hat{w}_i(u) - w_i(u)}{\sum_{j = 1}^n \hat{w}_j(u)Y_j(u)}dM^1_i(u)\\
&&- \sqrt{n}\sum_{i = 1}^n \int_0^t \frac{\hat{w}_i(u)Y_i(u)(\hat{\bbeta} - \bbeta_T)^\top Z_{0i}}{\sum_{j = 1}^n \hat{w}_j(u)Y_j(u)} du\\
&&- \sqrt{n}\sum_{i = 1}^n \int_0^t \frac{ \hat{w}_i(u)Y_i(u)\hat{\rho}_0 \tX_i^\top (g^{-1})'(\tX_i^\top \alpha_T)(\hat{\alpha} - \alpha_T)}{\sum_{j = 1}^n \hat{w}_j(u)Y_j(u)}du\\
&=&\frac{1}{\sqrt{n}}\sum_{i = 1}^n \int_0^t \frac{w_i(u)}{s^{(0)}(u)}dM^1_i(u) - \frac{1}{\sqrt{n}}\sum_{i = 1}^n \int_0^t \frac{w_i(u)\sum_{j = 1}^n\int_{T^*_i}^u\frac{1}{\sum_{k = 1}^n \mathbbm{1}_{\{T^*_k > v\}}}dM_j^c(v)}{s^{(0)}(u)}dM^1_i(u)\\
&&- \sqrt{n} \Big(\int_0^t \frac{s^{(1)}(u)}{s^{(0)}(u)}du\Big)^\top (\hat{\bbeta} - \bbeta_T) - \sqrt{n} \Big(\int_0^t \frac{\gamma(u)}{s^{(0)}(u)}du\Big)^\top (\hat{\alpha} - \alpha_T) + o_p(1)\\
&=&\frac{1}{\sqrt{n}}\sum_{i = 1}^n \int_0^t \frac{w_i(u)}{s^{(0)}(u)}dM^1_i(u) \\
&&+ \frac{1}{\sqrt{n}}\sum_{j = 1}^n \int_0^1 \frac{1}{\sum_{k = 1}^n \mathbbm{1}_{\{T^*_k > v\}}}\int_0^t \frac{\sum_{i = 1}^n w_i(u) dM^1_i(u)\mathbbm{1}_{\{ T^*_i < v \le u\}}}{s^{(0)}(u)}dM_j^c(v) \\
&&- \Big(\int_0^t \frac{s^{(1)}(u)}{s^{(0)}(u)}du\Big)^\top \sqrt{n}(\hat{\bbeta} - \bbeta_T) - \Big(\int_0^t \frac{\gamma(u)}{s^{(0)}(u)}du\Big)^\top \sqrt{n}(\hat{\alpha} - \alpha_T) + o_p(1),
\een
where
\eqn
\gamma(t) =\rho_0\E\big\{\hat{w}_1(t)Y_i(t) \tX_1(g^{-1})'(\tX_1^\top \alpha_T)\big\}
\ee
in the above. 
With Condition \ref{ass:asyregcomp}, Martingale Central Limit Theorem 5.1.1 in \citet{fleming2011counting} can again be employed to show that $\sqrt{n}(\hat{\Lambda}_0(t) - \Lambda_{0T}(t))$ converges in the sense of Skorohod topology to a mean zero Gaussian process with covariance function to be computed
by similar way as used in Theorem \ref{thm:regnormal} and Proposition \ref{prp:regbaseline}.
\end{proof}

\begin{proof}[Proof of Theorem \ref{thm:compsurvival}]
The consistency simply follows from the results of Theorem \ref{thm:compcons}, Proposition \ref{prp:compbaseline} in the appendix and that $\exp(-x)$ is continuously differentiable. For the asymptotic process, we just resort to delta method and Donsker theorem.
\end{proof}

\clearpage
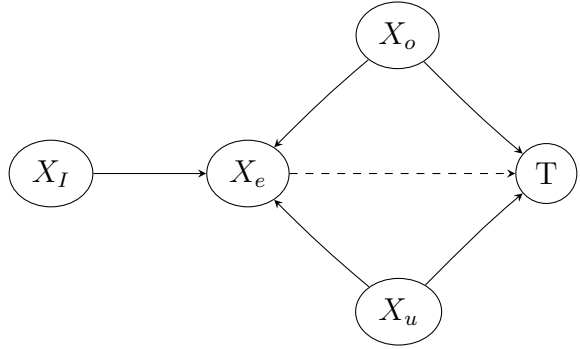
\begin{figure}[H]
\centering
	\large{\begin{tikzpicture}[%
		->,
		>=stealth,
		node distance=1cm,
		pil/.style={
			->,
			thick,
			shorten =2pt,}
		]
		\node[attribute] (1) {$X_e$};
		\node[attribute, left=1.5cm of 1] (2) {$X_I$};
		\node[attribute, right=3cm of 1] (3) {T};
    \node[attribute, above right=1.7cm of 1] (4) {$X_o$};
    \node[attribute, below right=1.7cm of 1] (5) {$X_u$};
		\draw [dashed,->] (1.east) -- (3.west);
    \draw [->] (2.east) -- (1.west);
    \draw [->] (4) to [out=220, in=45] (1);
    \draw [->] (4) to [out=315, in=140] (3);
		\draw [->] (5) to [out=140, in=315] (1);
    \draw [->] (5) to [out=45, in=220] (3);
	\end{tikzpicture}}

\caption{A causal directed acyclic graph (DAG) describing the causal relation between variables }\label{fig:causal}
\end{figure}

\begin{table}[H]
\centering
\caption{Simulation results for general survival data without competing risks.}\label{table:surv}
$$\begin{tabular}{crrrrr}\hline
 Scenario & Sample size & Bias & Emp.~Var & Est.~Var & Coverage \\ 
 \hline
I & 100 & -0.15 & 6.10 & 6.80 & 95.8\% \\ 
 I & 200 & 0.03 & 2.70 & 2.80 & 94.9\% \\ 
 I & 400 & -0.04 & 0.95 & 0.92 & 96.1\% \\ 
 I & 800 & -0.01 & 0.43 & 0.43 & 94.7\% \\ 
 I & 1200 & 0.00 & 0.28 & 0.28 & 95.3\% \\ 
  \hline
II & 100 & 0.06 & 2.10 & 2.00 & 94.6\% \\ 
 II & 200 & -0.05 & 0.93 & 0.89 & 94.4\% \\ 
 II & 400 & -0.05 & 0.45 & 0.42 & 96.0\% \\ 
 II & 800 & 0.01 & 0.21 & 0.20 & 95.2\% \\ 
 II & 1200 & 0.00 & 0.14 & 0.13 & 94.4\% \\ 
  \hline
III & 100 & 0.12 & 62.50 & 64.44 & 95.7\% \\ 
 III & 200 & 0.07 & 27.50 & 26.85 & 96.0\% \\ 
 III & 400 & 0.07 & 13.74 & 12.53 & 94.9\% \\ 
 III & 800 & 0.08 & 6.04 & 6.02 & 95.5\% \\ 
 III & 1200 & -0.04 & 4.09 & 3.96 & 95.5\% \\\hline 
\end{tabular}$$
\end{table}

\begin{table}[H]
\centering
\caption{Simulation results under the competing risks model.}\label{table:comp}
$$\begin{tabular}{crrrrr}\hline
 \multicolumn{1}{c}{Scenario} & \multicolumn{1}{c}{Sample size} & \multicolumn{1}{c}{Bias} & \multicolumn{1}{c}{Emp.~Var} & \multicolumn{1}{c}{Est.~Var} & \multicolumn{1}{c}{Coverage} \\ 
 \hline
I & 100 & 0.07 & 3.60 & 3.40 & 94.8\% \\ 
 I & 200 & 0.04 & 1.60 & 1.50 & 94.5\% \\ 
 I & 400 & 0.04 & 0.77 & 0.74 & 94.0\% \\ 
 I & 800 & 0.02 & 0.36 & 0.36 & 95.0\% \\ 
 I & 1200 & 0.01 & 0.24 & 0.24 & 94.5\% \\ 
  \hline
II & 100 & 0.13 & 4.10 & 4.30 & 95.3\% \\ 
 II & 200 & 0.04 & 2.00 & 2.00 & 94.6\% \\ 
 II & 400 & -0.04 & 0.85 & 0.95 & 95.7\% \\ 
 II & 800 & 0.01 & 0.43 & 0.46 & 95.8\% \\ 
 II & 1200 & -0.01 & 0.31 & 0.31 & 95.2\% \\ 
  \hline
III & 100 & 0.23 & 134.80 & 128.90 & 95.7\% \\ 
 III & 200 & -0.16 & 53.38 & 58.34 & 95.0\% \\ 
 III & 400 & -0.12 & 26.17 & 27.14 & 95.9\% \\ 
 III & 800 & 0.10 & 13.56 & 13.87 & 94.5\% \\ 
 III & 1200 & 0.05 & 8.85 & 8.12 & 95.5\% \\\hline
\end{tabular}$$
\end{table}

\begin{table}[H]
\centering
\caption{Patient characteristics of the SEER-Medicare data set.}\label{table:analysis}
$$\begin{tabular}{lcc}\hline
 & Radical prostatectomy & Conservative management \\ 
 & $n=10977$ & $n=18829$ \\ 
  \hline
Age & & \\ 
 66-69 & 4852 (45.3\%) & 6925 (37.2\%) \\ 
 70-74 & 5859 (54.7\%) & 11694 (62.8\%) \\ 
  \hline
Marital Status & & \\ 
 Married & 7815 (73.0\%) & 12889 (69.2\%) \\ 
 Divorced & 536 (5.0\%) & 1068 (5.7\%) \\ 
 Single & 786 (7.3\%) & 1450 (7.9\%) \\ 
 Other & 1574 (14.7\%) & 3212 (17.3\%) \\ 
  \hline
Race or Ethnity & & \\ 
 Asian & 206 (1.9\%) & 302 (1.6\%) \\ 
 Black & 1022 (9.5\%) & 2495 (13.4\%) \\ 
 Hispanic & 184 (1.7\%) & 222 (1.2\%) \\ 
 White & 8973 (83.8\%) & 15047 (80.8\%) \\ 
 Other & 326 (3.0\%) & 553 (3.0\%) \\ 
  \hline
Tumor Stage & & \\ 
 T1 & 4132 (38.6\%) & 12059 (64.8\%) \\ 
 T2 & 6579 (61.4\%) & 6560 (35.2\%) \\ 
  \hline
Tumor Grade & & \\ 
 Well differentiated & 168 (1.6\%) & 140 (0.8\%) \\ 
 Moderately differentiated & 5537 (51.7\%) & 9070 (48.7\%) \\ 
 Poorly differentiated & 4793 (44.7\%) & 9153 (49.2\%) \\ 
 Undifferentiated & 17 (0.2\%) & 26 (0.1\%) \\ 
 Cell type not determined & 196 (1.8\%) & 230 (1.2\%) \\ 
  \hline
Prior Charlson comorbidity score & & \\ 
 0 & 7217 (67.4\%) & 11868 (63.7\%) \\ 
 1 & 2301 (21.5\%) & 4260 (22.9\%) \\ 
 $\ge 2$ & 1193 (11.1\%) & 2491 (13.4\%) \\ 
  \hline
Diagnosis year & & \\ 
 2001 & 345 (3.2\%) & 241 (1.3\%) \\ 
 2002 & 311 (2.9\%) & 268 (1.4\%) \\ 
 2003 & 277 (2.6\%) & 207 (1.1\%) \\ 
 2004 & 1284 (12.0\%) & 1908 (10.2\%) \\ 
 2005 & 1217 (11.4\%) & 1838 (9.9\%) \\ 
 2006 & 1334 (12.5\%) & 2252 (12.1\%) \\ 
 2007 & 1351 (12.6\%) & 2486 (13.4\%) \\ 
 2008 & 1291 (12.1\%) & 2372 (12.7\%) \\ 
 2009 & 1215 (11.3\%) & 2381 (12.8\%) \\ 
 2010 & 1020 (9.5\%) & 2263 (12.2\%) \\ 
 2011 & 1066 (10.0\%) & 2403 (12.9\%) \\ 
  \hline
\end{tabular}$$
\end{table}

\begin{table}[H]
\centering
\caption{Results of two stage residual inclusion IV analysis on overall survival with all two-way interactions. }\label{table:ivallcauseresult}
$$\begin{tabular}{lccc}
\hline
Variable Label & Hazard difference & Standard Errors & \makecell{Two sided \\P-value} \\ 
  \hline
\makecell[l]{Radical prostatectomy vs \\Conservative management} & -0.0012 & 0.00057 & 0.042 \\ 
  Residual term & 0.0010 & 0.0006 & 0.083 \\ 
  Age 70-74 vs 66-69 & 0.0006 & 0.0005 & 0.24 \\ 
  Stage T2 vs T1 & 0.0005 & 0.0005 & 0.31 \\ 
   \hline
Married vs Other & -0.0008 & 0.0007 & 0.23 \\ 
  Divorced vs Other & 0.0011 & 0.0016 & 0.51 \\ 
  Single vs Other & -0.0006 & 0.001 & 0.53 \\ 
   \hline
Asian vs Other & -0.0004 & 0.0019 & 0.85 \\ 
  Black vs Other & -0.0004 & 0.0013 & 0.75 \\ 
  Hispanic vs Other & 0.0026 & 0.0022 & 0.24 \\ 
  White vs Other & -0.0004 & 0.0011 & 0.7 \\ 
   \hline
\makecell[l]{Grade moderately differentiated vs \\Well differentiated} & 0.0007 & 0.0004 & 0.052 \\ 
  \makecell[l]{Grade poorly differentiated vs \\Well differentiated} & 0.0019 & 0.0006 & 0.0009 \\ 
  \makecell[l]{Grade undifferentiated vs \\Well differentiated} & 0.0023 & 0.001 & 0.021 \\ 
  \makecell[l]{Grade cell type not determined vs \\Well differentiated} & 0.0035 & 0.0011 & 0.0014 \\ 
   \hline
\makecell[l]{Prior Charlson comorbidity score \\ 0 vs $\ge 2$} & -0.004 & 0.0010 & $<$ 0.0001 \\ 
  \makecell[l]{Prior Charlson comorbidity score \\ 1 vs $\ge 2$} & -0.0026 & 0.0005 & $<$ 0.0001 \\ 
   \hline
2002 vs 2001 & -0.00058 & 0.0003 & 0.022 \\ 
  2003 vs 2001 & -0.00033 & 0.0003 & 0.33 \\ 
  2004 vs 2001 & -0.0008 & 0.0004 & 0.032 \\ 
  2005 vs 2001 & -0.001 & 0.0005 & 0.022 \\ 
  2006 vs 2001 & -0.0013 & 0.0005 & 0.016 \\ 
  2007 vs 2001 & -0.0013 & 0.0006 & 0.043 \\ 
  2008 vs 2001 & -0.0014 & 0.0007 & 0.041 \\ 
  2009 vs 2001 & -0.0017 & 0.0008 & 0.036 \\ 
  2010 vs 2001 & -0.0018 & 0.0009 & 0.039 \\ 
  2011 vs 2001 & -0.0018 & 0.0010 & 0.061 \\ 
   \hline
\end{tabular}$$
\end{table}

\begin{table}[H]
\centering
\caption{Results of two stage residual inclusion IV analysis on cancer specific survival with all two-way interactions.}\label{table:cancercauseresult}
$$\begin{tabular}{lccc}
 \hline
 Variable Label & Hazard difference & Standard Errors & \makecell{Two sided \\P-value} \\ 
  \hline
\makecell[l]{Radical prostatectomy vs \\Conservative management} & -0.0002 & 0.0002 & 0.4 \\ 
  Residual term & 0.0002 & 0.0002 & 0.5 \\ 
  Age 70-74 vs 66-69 & 0.0004 & 0.0002 & 0.084 \\ 
  Stage T2 vs T1 & 0.0003 & 0.0002 & 0.23 \\ 
   \hline
Married vs Other & -0.0011 & 0.0004 & 0.0054 \\ 
  Divorced vs Other & 0.000002 & 0.0009 & 1 \\ 
  Single vs Other & -0.0005 & 0.0006 & 0.41 \\ 
   \hline
Asian vs Other & -0.0012 & 0.0011 & 0.28 \\ 
  Black vs Other & 0.0001 & 0.0007 & 0.88 \\ 
  Hispanic vs Other & -0.0004 & 0.0009 & 0.63 \\ 
  White vs Other & -0.0002 & 0.0006 & 0.69 \\ 
   \hline
\makecell[l]{Grade moderately differentiated vs \\Well differentiated} & 0.0002 & 0.0002 & 0.21 \\ 
  \makecell[l]{Grade poorly differentiated vs \\Well differentiated} & 0.0008 & 0.0003 & 0.0023 \\ 
  \makecell[l]{Grade undifferentiated vs \\Well differentiated} & 0.0016 & 0.0006 & 0.013 \\ 
  \makecell[l]{Grade cell type not determined vs \\Well differentiated} & 0.0013 & 0.0005 & 0.0075 \\ 
   \hline
\makecell[l]{Prior Charlson comorbidity score \\ 0 vs $\ge 2$} & -0.0005 & 0.0004 & 0.24 \\ 
  \makecell[l]{Prior Charlson comorbidity score \\ 1 vs $\ge 2$} & -0.0003 & 0.0002 & 0.18 \\ 
   \hline
2002 vs 2001 & -0.0001 & 0.0001 & 0.26 \\ 
  2003 vs 2001 & -0.0003 & 0.0002 & 0.059 \\ 
  2004 vs 2001 & -0.0006 & 0.0002 & 0.0013 \\ 
  2005 vs 2001 & -0.0008 & 0.0002 & 0.0008 \\ 
  2006 vs 2001 & -0.0009 & 0.0003 & 0.0016 \\ 
  2007 vs 2001 & -0.0010 & 0.0003 & 0.002 \\ 
  2008 vs 2001 & -0.0011 & 0.0004 & 0.0017 \\ 
  2009 vs 2001 & -0.0012 & 0.0004 & 0.002 \\ 
  2010 vs 2001 & -0.0014 & 0.0005 & 0.0015 \\ 
  2011 vs 2001 & -0.0015 & 0.0005 & 0.002 \\ 
   \hline
\end{tabular}$$
\end{table}

\begin{figure}[H]
  \caption{Predicted overall survival (left) and cancer specific cumulative incidence (right) function for a patient with pointwise $95\%$ confidence intervals.  }
  \centering
  \includegraphics[scale = 0.4]{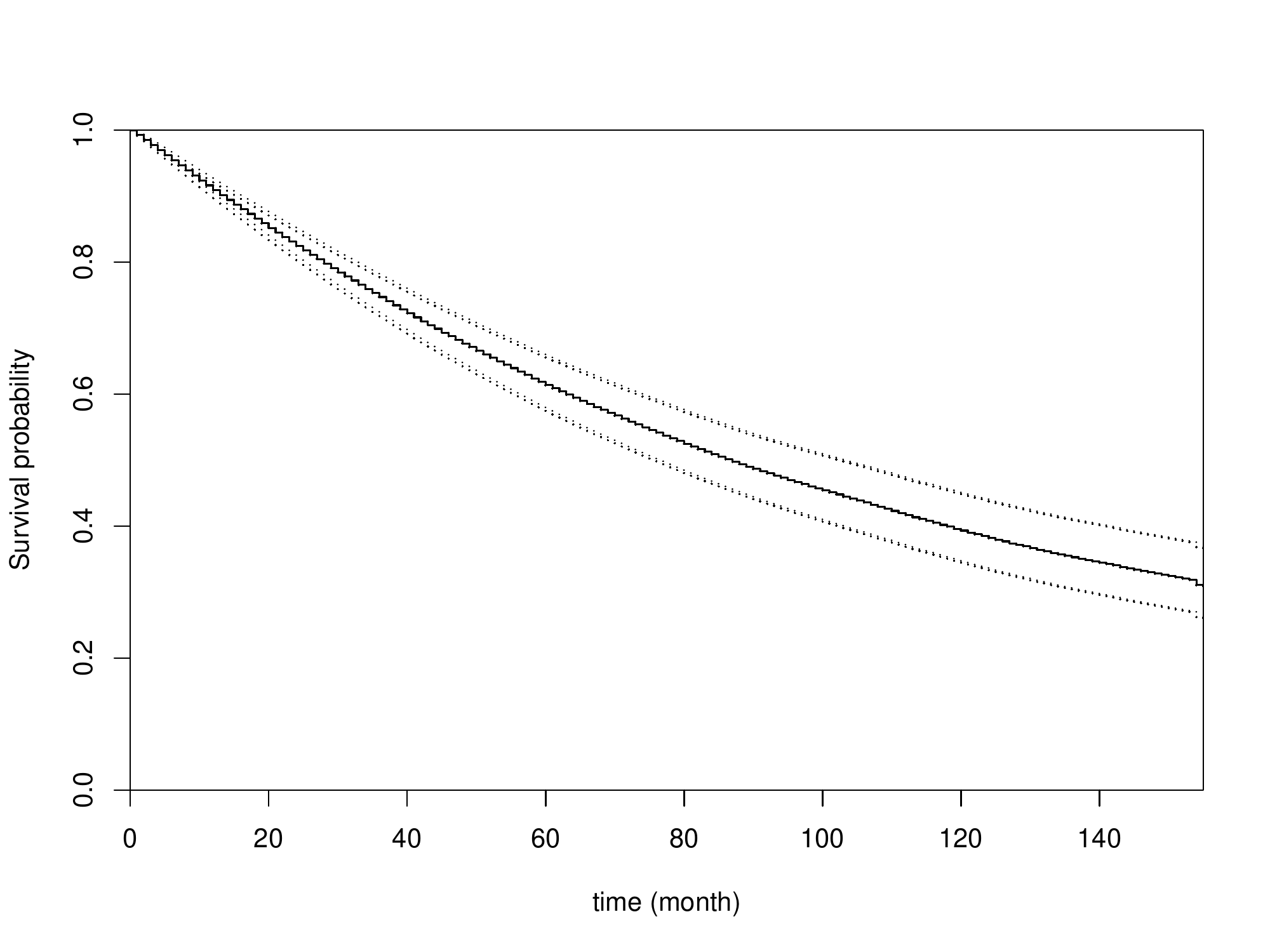}
  \includegraphics[scale = 0.4]{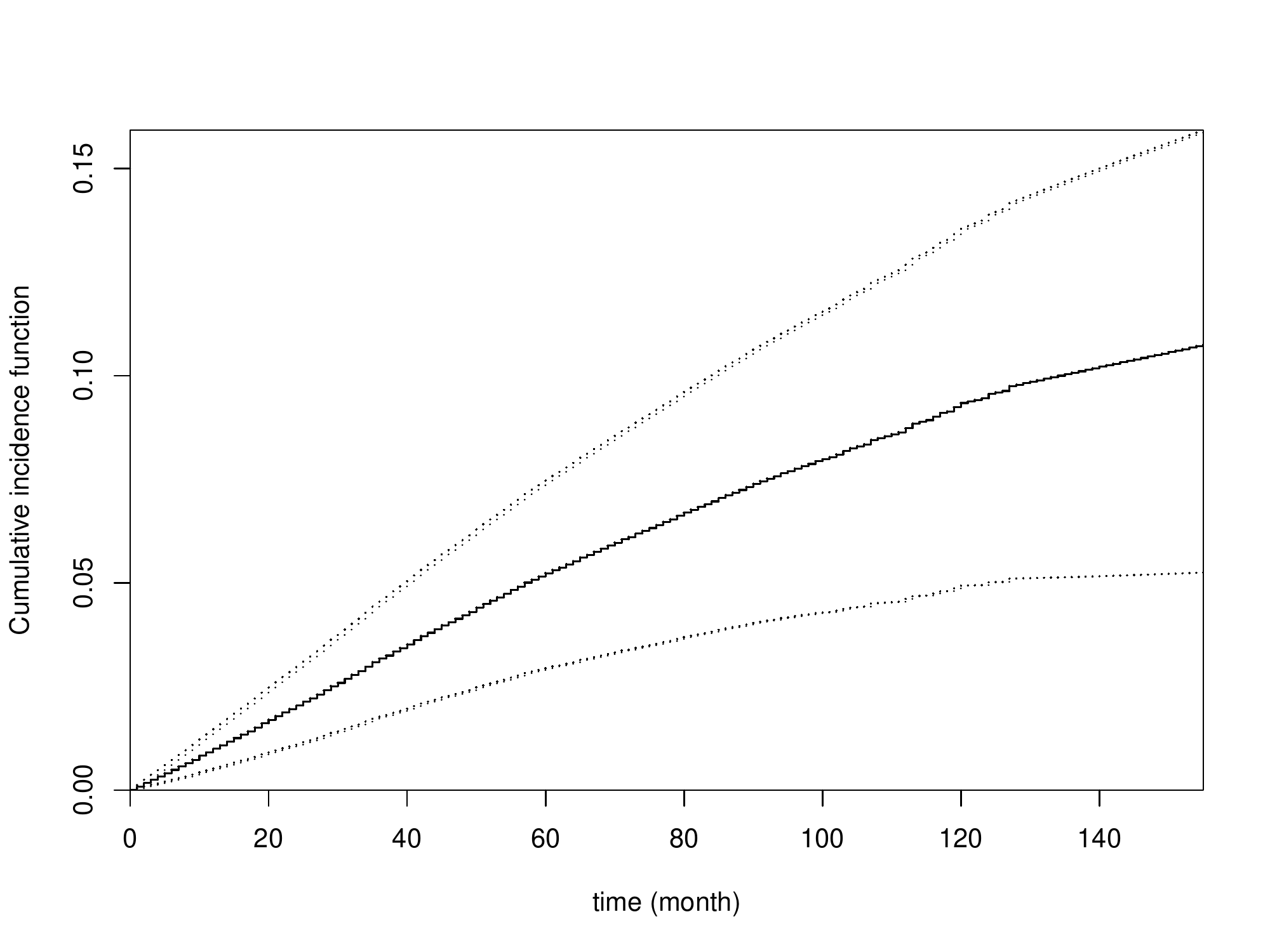}
  \label{fig:prediction}
\end{figure}
\end{document}